% THIS IS AN EXAMPLE DOCUMENT FOR VLDB 2012
% based on ACM SIGPROC-SP.TEX VERSION 2.7
% Modified by  Gerald Weber <gerald@cs.auckland.ac.nz>
% Removed the requirement to include *bbl file in here. (AhmetSacan, Sep2012)
% Fixed the equation on page 3 to prevent line overflow. (AhmetSacan, Sep2012)

% \documentclass{vldb}
\documentclass[sigconf, nonacm]{acmart}
% ============ CHANGES IN TEMPLATE =======================
% Lines 2533 and 2197 in the template have been changed to remove headers and to change the number of authors in one row.
% ========================================================

\usepackage{graphicx}

\usepackage{balance}  % for  \balance command ON LAST PAGE  (only there!)

\usepackage{xcolor}
\usepackage{subcaption}
\usepackage{times}

% Begin add from fast-od paper:
\usepackage{latexsym}   % keep the standard LaTeX symbols
\usepackage{amsmath}    % AmsTeX math extensions
\usepackage{fancyvrb}   % A fancier verbatim package.
\usepackage{float}      % for making new float ENVs

% General packages, but not in all standard distributions?
\usepackage{algorithm}
\usepackage{algorithmic}
\usepackage{caption}
\usepackage{subcaption}\captionsetup{compatibility=false}
\usepackage{paralist}
\usepackage{enumerate}
\usepackage{geometry}
\usepackage{color}
\usepackage{graphicx}
\usepackage{enumitem}

\usepackage{multirow}

\usepackage{amsthm}
\newtheorem{definition}{Definition}
\newtheorem{example}{Example}
\newtheorem{lemma}{Lemma}
\newtheorem{theorem}{Theorem}

\usepackage{sty/term}       % "terminology" meta-macros

%\usepackage{sty/prolog}     % provides a richer tabbing ENV
% \usepackage{sty/nospace}    % redefi\succeqnes LIST ENV spacing parameters
%
% LOCAL: Specific for this paper
\usepackage{sty/local}      % local macros for this paper
% commands and environments to make things colorful in case of new edits:

% \vldbTitle{Discovering Domain-Orders through Order Dependencies}
% \vldbAuthors{Reza Karegar, Melica Mirsafian, Parke Godfrey, Lukasz Golab, Mehdi Kargar, Divesh Srivastava, Jaroslaw Szlichta}
% \vldbDOI{https://doi.org/10.14778/xxxxxxx.xxxxxxx}
% \vldbVolume{13}
% \vldbNumber{xxx}
% \vldbYear{2020}

%% The following content must be adapted for the final version
% paper-specific

% issue-specific

% should be fine as it is

% \newcommand{\blue}[1]{\textcolor{blue}{#1}}

\newcommand{\blue}[1]{\textcolor{black}{#1}}
\newcommand{\diagram}[1]{\textcolor{black}{#1}}

\newenvironment{change}{\par\color{black}}{\par}
\newcommand{\paddingT}{\vskip -0.125in}
\newcommand{\paddingD}{\vskip -0.13in}

% \newcommand{\eatproofs}[1]{}

% End add from fast od paper
 
%\newtheorem{theorem}{Theorem}[section]

%\newtheorem{definition}{Definition}[section]
\begin{document}

% ****************** TITLE ****************************************

\title{Discovering Domain Orders through Order Dependencies}

% change line 2197 in acmart.cls to change the # of authors in a row
\author{Reza Karegar}
\affiliation{%
  \institution{University of Waterloo, CA}
%   \streetaddress{P.O. Box 1212}
  %\country{Canada}
%   \postcode{43017-6221}
}
\email{mkaregar@uwaterloo.ca}
%===============================================
\author{Melicaalsadat Mirsafian}
% \author{Melica Mirsafian}
% \orcid{0000-0002-1825-0097}
\affiliation{%
  \institution{University of Waterloo, CA}
  %\country{Canada}
}
\email{mmirsafian@uwaterloo.ca}
%===============================================
\author{Parke Godfrey}
\affiliation{%
  \institution{York University, CA}
  %\country{Canada}
}
\email{godfrey@yorku.ca}
%===============================================
\author{Lukasz Golab}
\affiliation{%
  \institution{University of Waterloo, CA}
  %\country{Canada}
}
\email{lgolab@uwaterloo.ca}
%===============================================
\author{Mehdi Kargar}
\affiliation{%
  \institution{Ryerson University, CA}
  %\country{Canada}
}
\email{kargar@ryerson.ca}
%===============================================
\author{Divesh Srivastava}
\affiliation{%
  \institution{AT\&T Labs-Research, US}
  %\country{USA}
}
\email{divesh@research.att.com}
%===============================================
\author{Jaroslaw Szlichta}
\affiliation{%
  \institution{Ontario Tech Univ, CA}
  %\country{Canada}
}
\email{jarek@ontariotechu.ca}

 \author{
 %\alignauthor
% Reza Karegar$\small^{^{\$}}$, 
 %Melicaalsadat Mirsafian$\small^{^{\$}}$,
 %Parke Godfrey$\small^{^{*}}$, 
 %Lukasz Golab$\small^{^{\$}}$,
 %Mehdi Kargar$\small^{^{\natural}}$,\\
 %Divesh Srivastava$\small^{^{\diamond}}$,
 %Jaroslaw Szlichta$\small^{^{\#}}$\\
%        \affaddr{$^{\$}$\normalsize University of Waterloo, Canada},
%         \affaddr{$^{*}$\normalsize York University, Canada},
%         \affaddr{$^{\natural}$\normalsize Ted Rogers School of Management, Ryerson University, Canada},\\
%         \affaddr{$^{\diamond}$\normalsize AT\&T Labs-Research, USA},
%         \affaddr{$^{\#}$\normalsize Ontario Tech University, Canada}\\
%         \affaddr{\normalsize mkaregar@uwaterloo.ca, mmirsafian@uwaterloo.ca, godfrey@yorku.ca, lgolab@uwaterloo.ca, kargar@ryerson.ca,\\  divesh@research.att.com, jarek@ontariotechu.ca}
%         %\vskip -0.5cm
 }

\begin{abstract}
Much real-world data come with explicitly defined domain orders; e.g., lexicographic order for strings, numeric for integers, and chronological for time. Our goal is to discover implicit domain orders that we do not already know; for instance, that the order of months in the 
Chinese 
Lunar calendar is \emph{Corner} $\prec$ \emph{Apricot} $\prec$ \emph{Peach}.
% , and so on.
To do so, we enhance data profiling methods by discovering implicit domain orders in data through order dependencies.
% (ODs). 
%ODs express business rules involving order; e.g., an employee who pays higher taxes has a higher salary than another employee. We find domain orders through explicit-to-implicit (E/I) ODs, where the left-hand-side of the dependency has explicit order and the right-hand-side has the implicit order, and implicit-to-implicit (I/I) ODs with both sides having implicit orders. We parameterize the problem space with functional dependencies, context defined by partition groups within which ODs hold and whe\-ther the discovered domain order across the partition groups with respect to the context are considered as independent (conditional), or they are considered dependent (conditional). 
%We present the first tractable and useful discovery patterns in the space. 
We enumerate tractable special cases and 
proceed towards the most general case, which we prove is NP-complete. 
%%While the general case is NP-hard, 
% Nevertheless, 
We show that the general case
nevertheless 
can be effectively handled by a SAT solver. 
%%We present a mapping from our problem to satisfiability. 
We also
devise
% also propose 
an interestingness measure to rank the discovered implicit domain orders,  
\blue{which we validate with a user study}.
% Finally, 
% %we implement our solutions in an attribute-lattice discovery framework, and 
% we report on the results of an experimental evaluation using real-world datasets.
\blue{Based on an extensive suite of experiments with real-world data, we establish the efficacy of our algorithms, and the utility of the domain orders discovered by demonstrating significant added value in three applications (data profiling, query optimization, and data mining).}

\end{abstract}

\maketitle

%============================================================================

\section{Introduction} \label{SEC/intro}

Much real-world data come with explicitly defined orders; e.g., lexicographic for strings, numeric for integers and floats, and chronological for time. 
% We call these \emph{explicit orders}.
Our goal is to go a step further to discover potential domain orders that are not already known. We call these \emph{implicit orders}.  
% Our goal is to go one step further and discover potential domain orders that we do not already know, which we call \emph{implicit orders}.  
%%This means finding ways to order the tuples of a table with respect to an attribute, or attributes, that is compatible with ordering of the tuples by other attributes.  We call these compatible candidate domain orders \emph{implicit orders}.
Consider Table~\ref{table:festivals}, describing festivals in various countries.  The $\A{timestamp}$ column has an explicit chronological order.  Given this explicit order, we show how to discover the \emph{implicit} order of months in the Gregorian calendar, $\A{monthGreg}$ (\emph{January} $\prec$ \emph{February} $\prec$ \emph{March}, etc).  Moreover, we will show how to find implicit orders based on other implicit orders.  For instance, given the implicit order on $\A{monthGreg}$, we can find the implicit order of months in the traditional Chinese (Lunar) calendar, $\A{monthLun}$ 
% Consider Table~\ref{table:festivals}, storing festival information in various countries.  The $\A{timestamp}$ column has an explicit chronological order.  Given this explicit order, we will show how to find an \emph{implicit} order of months in the Gregorian calendar, $\A{monthGreg}$, namely \emph{January} $\prec$ \emph{February} $\prec$ \emph{March}, etc.  Moreover, we will show how to find implicit orders based on other implicit orders.  For instance, given the implicit order on $\A{monthGreg}$, we can find an implicit order of months in the traditional Chinese (Lunar) calendar, $\A{monthLun}$, namely 
%%on the $\A{monthLun}$ attribute. The traditional Chinese calendar (Lunar calendar) has 12 months overlapping with Gregorian months, with a new year starting in March and has a thirteenth month approximately every three years.
%	These are ordered as
	(\(
		\mbox{\emph{Corner}}
			\prec
		\mbox{\emph{Apricot}}
			\prec
		\mbox{\emph{Peach}}
			\prec
		\mbox{\emph{Plum}}
			\prec
		\mbox{\emph{Pomegranate}}
			\prec
		\mbox{\emph{Lotus}}
			\prec
		\mbox{\emph{Orchid}}
			\prec
		\mbox{\emph{Osmanthus}}
			\prec
		\mbox{\emph{Chrysanthemum}}
			\prec
		\mbox{\emph{Dew}}
			\prec
		\mbox{\emph{Winter}}
			\prec
		\mbox{\emph{Ice}}
	\)).
%    When there is a leap month in 2020,
%    there are two Plum months.

\blue{Domain orders are useful in a number of important applications:} 
% Finding implicit orders is critical in many applications:
\begin{itemize}[nolistsep,leftmargin=*]

\item Implicit orders can enhance data profiling methods to identify new data quality rules, such as order dependencies over implicitly ordered attributes. \blue{(See Section~\ref{sec:exp:apps}, Exp-\ref{exp-data-profiling} and Exp-\ref{exp-yago}.)}

\item The \SQL\ standard includes an order-by clause
to sort the output, and aggregation with respect to minimum and maximum values requires a domain order. 
% Applying an explicit order to an attribute with an unknown implicit order may lead to meaningless results. 
% Also, 
By capturing relationships between ordered attributes, we can eliminate the 
% order-by clause 
necessity to sort
if the query plan already produces results in 
% the desired order
a needed order%
\blue{. These applications have been investigated for explicit orders in \cite{QO-exp-orders-vldb05,SGG+2013:complexity,QO-edbt11}.
%, Szlichta2014QueryOpt}; 
In this work, we demonstrate the additional benefits of implicit orders.}
\blue{(See Section~\ref{sec:exp:apps}, Exp-\ref{exp-query-optimization}.)}

% \item Order-preserving encryption and compression methods can be applied to explicitly as well as implicitly ordered attributes.
% %, allowing efficient and secure query processing.

% \item In machine learning, implicit order turns a categorical feature into an ordinal one and can produce better decision tree splitting conditions.  Similarly, rule-based classifiers can apply a richer set of conditions (e.g.,  inequalities) to implicitly-ordered attributes.

\item \blue{Implicit orders can improve the performance of machine learning techniques by turning categorical features into ordinal ones. One case of this is the splitting condition in decision trees. Similarly, implicit orders can produce concise data summaries,  with ranges over ordered attributes instead of individual categories.  
We demonstrate this by enhancing a recent data summarization method \cite{info-summarization-ordinal} with newly discovered implicit orders.
(See Section~\ref{sec:exp:apps}, Exp-\ref{exp-data-mining}.)
}

%involves producing a concise interpretable and informative summary of a dataset that approximates the distribution of an outcome. Implicit orders can produce more concise summaries by leveraging ranges over the attributes, rather than individual categories. We extend over the achieved benefits shown in \cite{info-summarization-ordinal} by using additionally implicit orders. (See Section~\ref{sec:exp:apps}, Exp-10.)}

% \item Before training machine learning models, data scientists often wish the summarize their data.  Many lossy summarization techniques exist; e.g., one can summarize a salary database by reporting the average salary for each distinct job title.  Implicit orders can produce more concise summaries, e.g., by reporting a single average salary of all job titles ordered below that of a ``manager'' on a seniority scale, if all such roles have similar salaries. 

\end{itemize}

\bgroup
\tabcolsep=0.95\tabcolsep
\begin{table*}[t]
\begin{small}
\paddingT
\caption{A dataset with festival information in various countries.}
\paddingT
\label{table:festivals}
\scalebox{0.87}{
    % \begin{tabular}{rl|l|l|l|r|l|l|l|r|r|l|r|r}
    \begin{tabular}{rl|l|l|l|l|l|l|l|l|l|l|l|l}
    % all the {l}s in the bottom were updated to c
    % \multicolumn{1}{l}{\relax}         &
                 \multicolumn{1}{c}{\relax}         &
                 \multicolumn{1}{c}{\A{festival}}   &
                 \multicolumn{1}{c}{\A{country}}    &
                 \multicolumn{1}{c}{\A{timestamp}}  &
                 \multicolumn{1}{c}{\A{week}}       &
                 \multicolumn{1}{c}{\A{yearGreg}}       &
                 \multicolumn{1}{c}{\A{monthGreg}}  &
                 \multicolumn{1}{c}{\A{yearLun}} &
                 \multicolumn{1}{c}{\A{monthLun}} &
                 \multicolumn{1}{c}{\A{count}}      &
                 \multicolumn{1}{c}{\A{size}}       &
                 \multicolumn{1}{c}{\A{ribbon}}     &
                 \multicolumn{1}{c}{\A{profit}}     &
                 \multicolumn{1}{c}{\A{tax}}        \\
        
        \cline{2-14}
        
        $\tup{t}[1]$ &\em New Year Eve Shanghai &  China &  20200125 & 4 & 2020 & January & 4718 & Corner & 10M & X-Large & Blue & 30M & 2.7M \Tstrut\\

        $\tup{t}[2]$ &\em Tomb Sweeping Day Xi & China &  20200404 & 14 & 2020 & April & 4718 & Peach & 650K & Medium & Red & 1.3M & 117K \Tstrut\\
        
         $\tup{t}[3]$ &\em Buddha B~day Liuzhou &  China &  20200430 & 18 & 2020 & April & 4718 & Plum  &  450K & Medium & White & 900K & 81K \Tstrut\\
        
        % $\tup{t}[7]$ &\em Lantern Beijing &  China &  20200208 & 6 & 2020 & February & 4717 & Corner &  950K & Large & Red & 3.5M & 315K \Tstrut\\

        $\tup{t}[4]$ &\em Dragon Boat Hangzhou & China &  20200625 & 26 & 2020 & June & 4718 & Pomegranate &  2M & X-Large & Red & 3M & 270K \Tstrut\\
        
        $\tup{t}[5]$ &\em Dongzhi Festival &  China &  20201221 & 52 & 2020 & December & 4718 & Winter &  950K & Large & Red & 3.5M & 315K \Tstrut\\

        \cline{2-14}
        % $\tup{t}[1]$ &\em Maple Syrup Quebec &  Canada &  20210228 & 9 & 2021 & February & 4716 & Corner & 700K & Large & Red & 3M & 390K \Tstrut\\

        $\tup{t}[6]$ &\em New Year Quebec &  Canada &  20210101 & 1 & 2021 & January & 4718 & Winter & 800K & Large & Red & 3M & 390K \Tstrut\\
        
        $\tup{t}[7]$ &\em TD Toronto Jazz &  Canada & 20210618 & 25 & 2021 & June & 4719 & Pomegranate &  500K & Medium & Blue & 1.5M & 195K \Tstrut\\

        $\tup{t}[8]$ &\em Rogers Tennis Cup &  Canada &  20210807 & 32 & 2021 & August & 4719 & Lotus  &  600K & Medium &  Red & 1.2M & 156K \Tstrut\\

        $\tup{t}[9]$ &\em Steam Era Ontario &  Canada &  20210830 & 36 & 2021 & August & 4719 & Osmanthus &  125K & Small & Blue & 1M & 130K \Tstrut\\

        $\tup{t}[10]$ &\em Octoberfest Waterloo &  Canada &  20211009 & 41 & 2021 & October & 4719 & Chrysanthemum & 50K & Small & White & 150K & 19.5K \Tstrut\\
        
        % $\tup{t}[1]$ &\em Maple Syrup Quebec &  Canada &  20190228 & 9 & 2019 & February & 4716 & Corner & 700K & Large & Red & 3M & 390K \Tstrut\\

        % $\tup{t}[2]$ &\em TD Toronto Jazz &  Canada & 20190618 & 25 & 2019 & June & 4717 & Pomegranate &  500K & Medium & Blue & 1.5M & 195K \Tstrut\\

        % $\tup{t}[3]$ &\em Rogers Tennis Cup &  Canada &  20190807 & 32 & 2019 & August & 4717 & Orchid &  600K & Medium &  Red & 1.2M & 169K \Tstrut\\

        % $\tup{t}[4]$ &\em Steam Era Ontario &  Canada &  20190830 & 35 & 2019 & August & 4717 & Osmanthus &  125K & Small & Blue & 1M & 130K \Tstrut\\

        % $\tup{t}[5]$ &\em Octoberfest Waterloo &  Canada &  20191009 & 37 & 2019 & October & 4717 & Chrysanthemum & 50K & Small & White & 150K & 19.5K \Tstrut\\

    \end{tabular}
}
\end{small}
% \paddingD
 \centering
\end{table*}
\egroup

\subsection{Methodology} \label{SEC/intro/methodology}

Manual % label 
specification 
% of implicit orders 
does not scale 
\blue{as it requires 
% expensive 
domain experts \cite{TANE,Go2009,LN16,SGG+2017:ODD}}.
%\blue{as it requires domain experts with different backgrounds, for instance, people familiar with different types of calendar.}
% motivating 
This motivates the need to discover implicit orders automatically.
%%to avoid human burden.   
\blue{\emph{Integrity constraints} (ICs) specify relationships between attributes in databases \cite{IC-def}}.
% We do so through 
\blue{To discover implicit orders, we use}
%%Formally, relationships between orders are captured by 
\emph{order dependencies}  ($\OD$s) which capture relationships between orders \blue{\cite{od-def1-pointwise,od-def2-lexicographic}}.
%%Consider Table~\ref{table:festivals}, with \emph{festival} information in various countries. Since timestamps are assigned sequentially over time, $\A{time\-stamp}$ \emph{orders} $\A{year}$ (but, not vice versa). The OD $\A{timestamp}$ orders $\A{year}$ means that, if we sort the table by the attribute $\A{time\-stamp}$, then it is also sorted by the attribute $\A{year}$. 
%
%
\blue{Intuitively, an order dependency asserts that sorting a table according to some attribute(s) implies that the table is also sorted according to some other attribute(s).}
\blue{For instance,} 
in Table~\ref{table:festivals}, $\A{timestamp}$ orders $\A{yearGreg}$.
% In Table~\ref{table:festivals}, the $\OD$ ``$\A{timestamp}$ orders $\A{yearGreg}$'' holds:
% if we sort the table by $\A{time\-stamp}$, then it is also sorted by $\A{yearGreg}$. 
%
If the $\A{tax}$ per $\A{country}$ is a fixed or a progressive percentage of the $\A{profit}$, then sorting the table by $\A{country}$, $\A{profit}$ results in the table also being sorted by $\A{country}$, $\A{tax}$. Hence, ``$\A{country}$, $\A{profit}$ \emph{orders} $\A{country}$, $\A{tax}$.'' %Note that 
%in the general case, an $\OD$ does $not$ necessarily imply the converse and 
The order of attributes on the left- and right-hand sides in an $\OD$ matters, as in the SQL \emph{order-by} clause, while the order of attributes in a functional dependency ($\FD$) \blue{\cite{fd-def}} does not, as in the SQL \emph{group-by} clause.
%
% \blue{
% % An $\FD$ asserts that any two tuples that agree on a \emph{set} of attributes, must also agree on another set of attributes.
% An $\FD$ asserts that the the values of one set of attributes uniquely determine
% the values of another set of attributes. For example, $\A{count}$ functionally determines $\A{size}$.
% }

An $\OD$ implies the corresponding $\FD$, modulo lists and sets of attributes but, not vice versa; e.g., $\A{country}$, $\A{profit}$ orders $\A{country}$, $\A{tax}$ implies that $\A{country}$, $\A{profit}$ functionally determines $\A{country}$, $\A{tax}$. 
% One may define a weaker notion of 
\emph{Order compatibility} ($\OC$) \cite{Szl2012}
% , which is effectively 
is a weaker version of
an $\OD$, without the implied $\FD$. 
% %
\blue{
Two lists of attributes in a table are said to be \emph{order compatible} if there exists an arrangement for the tuples in the database in which the tuples are sorted according to both of the lists of attributes.
}
\blue{
For instance,}
$\A{yearGreg}$, $\A{monthNum}$ \emph{is order compatible} with $\A{yearGreg}$, $\A{week}$, where the attribute
%%% \blue{the attribute} was added to fix protrusion from the column
$\A{monthNum}$ (not included in Table~\ref{table:festivals}) denotes the Gregorian month of the year in numeric format and $\A{week}$ represents the week of the year.
% %% For example, within a given year, month number is order compatible with week number (but month number does not functionally determine week number).  Intuitively, this $\OC$ holds because sorting by month number and breaking ties by week number is equivalent to sorting by week number and breaking ties by month number.
% Consider the $\OC$ $\A{yearGreg}$, $\A{monthNum}$ \emph{is order compatible} with $\A{yearGreg}$, $\A{week}$, where the attribute
% %%% \blue{the attribute} was added to fix protrusion from the column
% $\A{monthNum}$ (not included in Table~\ref{table:festivals}) denotes the Gregorian month of the year in numeric format and $\A{week}$ represents the week of the year. This means that the tuples \emph{can be sorted} in such a way that they are both ordered by $\A{yearGreg}$, $\A{monthNum}$ and by $\A{yearGreg}$, $\A{week}$.  
%%%
%%In particular, sorting by $\A{year}$, $\A{monthNum}$ and breaking ties by $\A{week}$ is equivalent to sorting by $\A{yearGreg}$, $\A{week}$ and breaking ties by $\A{monthNum}$.  
% However, the 
A corresponding $\FD$ does \emph{not} hold: $\A{yearGreg}$, $\A{monthNum}$ does not functionally determine $\A{yearGreg}$, $\A{week}$ (there are multiple weeks in a month) and $\A{yearGreg}$, $\A{week}$ does \emph{not} functionally determine $\A{yearGreg}$, $\A{monthNum}$ (a week may span two months).

\blue{
When an $\OD$ or $\OC$ has a common prefix on its left- and right-side, 
we can ``factor out'' the common prefix to increase understandability and 
refer to it as the \emph{context}. 
Intuitively, this means that the respective $\OD$ or $\OC$ holds \emph{within} each partition group of data by the context.
}
\blue{
For instance, if $\A{country}$, $\A{profit}$ orders $\A{country}$, $\A{tax}$, 
then given a partitioning of the data by $\A{country}$ (i.e., the context), 
$\A{profit}$ orders $\A{tax}$
within each \emph{group}
(that is, for any given \emph{country}).
}
%%% Removed the OC context example [Reza]
% %%%%
% An $\OD$ or $\OC$ can have a common prefix
% on its left- and right-hand sides; e.g.,  
% % as in the examples of
% $\A{country}$, $\A{profit}$ orders $\A{country}$, $\A{tax}$ and $\A{yearGreg}$, $\A{monthNum}$ is order compatible with $\A{yearGreg}$, $\A{week}$.
% %%%% Removed this in the revision [Reza]
% % %%%
% % %%%
% % Analogous to $\FD$s,
% % one can \emph{augment} the left- and right-hand sides of an $\OD$ or $\OC$
% % with a common prefix.
% % % However, one is \emph{not} permitted to 
% % However, removing 
% % a common prefix is \emph{not} a logical consequence.
% % For example, the $\OD$ and $\OC$ above (by themselves) do not imply
% % that $\A{profit}$ orders $\A{tax}$ and $\A{monthNum}$ is order compatible with $\A{week}$.
% % The order of the attributes within the common prefix is immaterial.
% % %%%
% % %%%
% % , however,
% % based on the meaning of \OD s and \OC s.
% % % however.
% % % This comes from the meaning of $\OD$s.
% % What our \OD\ above says is that 
% Within a partition group
% of the data
% by \A{country},
% \A{profit}\ orders \A{tax}\
% within each \emph{group}
% (that is, for any given \emph{country}).
% Similarly, within a partition group of data by \A{yearGreg},
% \A{month}\ is order compatible with \A{week}.
% %
% We call the common prefix the \OD's, or \OC's, \emph{context}.
% ``Factoring out'' the context makes \OD s and \OC s more understandable. %clearer.  
% %
% %%%%%%
When an \OD\ or \OC\ has no common prefix,
we say it has an \emph{empty} context;
e.g., the \OD\ of \A{timestamp}\ orders \A{yearGreg}\
has no common prefix, and thus an empty context.

%this should go to related work
%Order dependencies were first introduced
%in a quite generalized formalism
%in \cite{Gi1983}.
%A lexicographical form of \OD s
%focused on nested sort as in \SQL's \emph{order by}
%was explored
%in \cite{Ng},
%and
%a fundamental analysis of \OD s,
%a sound and complete axiomatization,
%and complexity of inference over \OD s
%presented
%in \cite{%
%	SGG2012:fundamentals,%
%    SGG+2013:complexity%,
%}.

%Order dependencies and compatibilities are not always defined
%as part of a database schema.
%Such dependencies may be in the data, however,
%and some can be useful if elucidated, motivating the development of algorithms for their discovery 
%This has motivated the need for automatic discovery methods for ODs (and OCs)~\cite{SGG18,LN16,SGG+2017:ODD}.
%In practice,
%order dependencies and compatibilities are not
%always known and included with a database schema,
%of course,
%motivating the need to discover them automatically from the data.
%Order-dependency discovery algorithms have been presented in 
%These 
Algorithms for $\OD$ and $\OC$ discovery from data 
\cite{%
	LN16,%
	SGG+2017:ODD,%
	SGG2018:BiODD,%
	pointwise-od-discovery-vldb20%,
}
use \emph{explicit} domain orders. 
%known \emph{domain orders} of attributes, such as lexicographic for strings, numeric order for numeric domain types, and calendar and clock order for dates and times, 
%to search for $\OD$ (and $\OC$s) in the data.
% Thus, 
% we say that they
Let us say that they
discover \emph{explicit-to-explicit} ($\A{E/E}$) \OD s.
%They rely on explicitly known domain orders.
We discover implicit orders by extending the machinery of $\OD$ discovery.
%by discovering \emph{implicit} \emph{domain orders}
%that enable semantically ``hidden'' $\OD$s and $\OC$s.
%
We first leverage explicitly known domain orders,
where, say,
the left-hand-side of a ``candidate'' $\OC$ is an explicit domain order
and the right-hand side is a learned, implicit domain order.
Call this an \EIOC. 
%Returning to our first example in Section~\ref{SEC/intro/motivation}
For instance, in the context of $\A{yearGreg}$, $\A{timestamp}$ is order compatible with $\A{monthGreg}^{*}$, where the star denotes implicit domain order over an attribute.
%and $\A{monthGreg}$ represents the order of months in the Gregorian calendar used in many countries, i.e., \emph{Jan} $<$ \emph{Feb} $<$ \emph{March}, etc. 
%Implicit domain orders can be verified as meaningful by experts to be fed back into the discovery procedure.
%
Astonishingly,
implicit domain orders
can be also discovered from a ``candidate'' \OC\
for \emph{both} the left- and right-hand sides of the \OC!
Call this an \IIOC.
%
%\footnote{%
	\EIOD s and \IIOD s
	simply are \EIOC s and \IIOC s, respectively,
	for which
	there is also an \FD\ from the lefthand side to the right.
%} 
For example, 
in the context of $\A{yearGreg}$ and $\A{yearLun}$, $\A{monthGreg}^{*}$  is order compatible with $\A{monthLun}^{*}$. 
%The Solar calendar has 12 months overlapping with Gregorian months with a new year starting in March (with implicit order \emph{Aries} $<$ \emph{Taurus} $<$ \emph{Gemini} $<$ \emph{Cancer} $<$ \emph{Leo} $<$ \emph{Virgo} $<$ \emph{Libra} $<$ \emph{Scorpio} $<$ \emph{Sagittarius} $<$ \emph{Capricorn} $<$ \emph{Aquarius} $<$ \emph{Pisces}.)
%

%\end{Sect}
%----------------------------------------------------------------------------%

% \begin{Sect}{Goals and Contributions}
\subsection{\blue{Overview} and Contributions}
% \begin{Sect}{\blue{Overview} and Contributions}
\label{SEC/intro/contributions}

Our goal is to \textbf{discover implicit domain orders}.
%by extending the \OD\ / \OC\ discovery machinery.
%to enhance data profiling techniques.
To do this,
we define candidate classes for \EIOC s and \IIOC s, and we extend the discovery methods for these.
To the best of our knowledge, this is the first attempt to discover implicit domain orders through ICs.
The problem space can be factored by the following dimensions.
\begin{itemize}[nolistsep,leftmargin=*]
\item
	Whether there is a corresponding \FD\
	(thus, finding \OD s rather than \OC s).

\item
	Whether the context is \emph{empty}.

\item
    \blue{When} the context is non-empty,
	whether the discovered domain orders
	across \blue{different} partition groups
	with respect to the context
	are \blue{to be} considered as independent \blue{of each other},
	and so can be different
	(\emph{conditional}),
	or they are \blue{to be} considered \blue{the same across partition groups}, % dependent,
	and must be consistent
	(\emph{unconditional}). In Table~\ref{table:festivals}, the implicit order $\A{monthGreg}^{*}$ is unconditional; however, the implicit order $\A{ribbon}^{*}$ is conditional in the relative context of the country with respect to the size of the festival, with \emph{White} $\prec$ \emph{Blue} $\prec$ \emph{Red} in Canada and \emph{White} $\prec$ \emph{Red} $\prec$ \emph{Blue} in China.%
	\footnote{%
    	This example is inspired by equine competitions %,
    	%for which the colors of awarded ribbons
    	%differ by country
    	%in their meanings of \emph{first}, \emph{second},
    	%\emph{third} place and so forth
    	(\url{https://en.wikipedia.org/wiki/Horse_show\#Awards}).%
	}
% 	(\url{https://www.ponybox.com/news_details.php?id=2607}).}}
	%\emph{and} 

\item
	Whether we are considering \EIOC s
	or \IIOC s.

\end{itemize}

%In Section \ref{sec:prelims},
%we present preliminaries.

% \noindent To the best of our knowledge, this is the first attempt to discover implicit domain orders through integrity constraints.

% \blue{ICs.}
% integrity constraints (ICs). 
%We believe that this work opens exciting and important venues for future work to develop a powerful new family of data profiling techniques for database systems.

Our key contributions 
%presented herein 
are as follows.

\begin{enumerate}[start,nolistsep,leftmargin=*]
\item %[\textbf{1.}]
% 	\textbf{Domain Orders} 
%     (\emph{Sections \ref{SEC/intro} \& \ref{sec:prelims}}).
    \textbf{\blue{Implicit} Domain Orders} 
	(\emph{Section \ref{sec:prelims}}).
	
	\noindent
	We formulate a novel data profiling problem of discovering implicit domain orders through a significant broadening of \OD\ / \OC\ discovery, 
	and we parameterize the problem space. 
	
\end{enumerate}

\noindent We divide the problem space
between \emph{explicit-to-implicit}
and
% between 
\emph{implicit-to-implicit}, which we present in Sections \ref{sec:eioc} and \ref{sec:iioc}, respectively.
We identify tractable cases, and then proceed towards the general case of \IIOC\ discovery,
which we prove is NP-complete.
%We then profile the general case of \IIOD\ discovery.
% (The truly most general case,
% which we present at the end,
% is trivially unsolvable.)

\begin{enumerate}[resume,nolistsep,leftmargin=*]
\item %[\textbf{2.}]
    \textbf{E/I Discovery}
    (\emph{Section \ref{sec:eioc}}).
    
    \noindent
	For implicit domain order discovery through \EIOD s and \EIOC s,
	we present efficient algorithms, taking polynomial time in the number of tuples to verify a given $\OD$ or $\OC$ candidate. 

\item %[\textbf{3.}]
	\textbf{I/I Discovery} (\emph{Section \ref{sec:iioc}}).
	
	\noindent
	For implicit domain order discovery through \IIOC s,

	\begin{enumerate}[nolistsep,leftmargin=*]
	\item 
		we present a polynomial candidate verification algorithm
		when the context is empty,

	\item 
		we present
		a polynomial candidate verification algorithm
		when the context is non-empty
		but taken as \emph{conditional},

	\item 
		we prove that, 
		for non-empty contexts taken as \emph{unconditional},
		the problem is NP-complete,
		\emph{and}

	\item 
		we show why the candidate set of conditional \IIOD s is always empty, although it is not necessarily empty for unconditional \IIOD s.

	\end{enumerate}

\item %[\textbf{4.}]
    \textbf{Algorithmic Approaches}
    (\emph{Sections \ref{SEC:SAT-Solver} \& \ref{sec:measure-interestingness-def}}).
    
    \begin{enumerate}[nolistsep,leftmargin=*]
    \item While the general case of implicit order discovery through \IIOC s is NP-hard, 
	we show that the problem can be effectively handled by a SAT solver (Section~\ref{SEC:SAT-Solver}). 
	We implement our methods in a lattice-based framework that has been used to mine $\FD$s and $\OD$s from data \cite{SGG2018:BiODD,LN16,SGG+2017:ODD}.
	\item We propose an \emph{interestingness measure} to rank the discovered orders and simplify manual validation  (Section~\ref{sec:measure-interestingness-def}).
    \end{enumerate}
    
    % \noindent
    %We present algorithmic approaches for mining implicit domain orders via the general case \IIOC s.
	%
	%Even for the general, \emph{hard} case, we demonstrate that, in practice, it still can be accomplished, and why this is so.
    %
    
% 	While the general case of implicit order discovery through \IIOC s is NP-hard, 
% 	we show that the candidate verification problem 
% 	%in the general case
% 	can be effectively handled by employing a SAT solver.  %To do so, we present a mapping from our problem to satisfiability.
	%
% 	We then implement our methods in a lattice-based framework that has been used to mine $\FD$s and $\OD$s from data \cite{SGG2018:BiODD,LN16,SGG+2017:ODD}.
% 	Next, to simplify manual interpretation, we propose an \emph{interestingness measure} to rank the discovered implicit domain orders. 
% 	%and can be used to prune the search space of candidate dependencies.

\item %[\textbf{5.}]
    \textbf{Experiments}
    (\emph{Section \ref{sec:experiments}}).
    
    \noindent
    We mirror the sub-classes and approaches to discover implicit domain orders defined
    in Sections \ref{sec:eioc} and \ref{sec:iioc}
    and the algorithms
    in Sections \ref{SEC:SAT-Solver} and \ref{sec:measure-interestingness-def} 
    via experiments over real-world datasets.
    %\blue{in the following manner}.
	%We experimentally evaluate each of the above
	%over real datasets.
	\begin{enumerate}[nolistsep,leftmargin=*]
	\item \blue{\textbf{Scalability:} In Section~\ref{sec:exp:scal}, we demonstrate  scalability in the number of tuples and attributes, and the effectiveness of our method for handling  NP-complete instances.
	}
	\item \blue{\textbf{Effectiveness:} In Section~\ref{sec:exp:lattice}, we validate the utility of the discovered orders.
	%by considering the effect of lattice level on these orders (Exp-4), conducting a user-study to evaluate the meaningfulness and accuracy of our discovered order (Exp-5), and the time required to compute the interestingness score (Exp-6).
	}
	\item \blue{\textbf{Applications:} In Section~\ref{sec:exp:apps}, we demonstrate the usefulness of implicit orders in data profiling (by finding more than double the number of data quality rules by involving orders not found in existing knowledge bases), query optimization (by reducing query runtime by up to 30\%), and data summarization (by increasing the information contained in the summaries by an average of 60\%).}

%	First, we consider applications in data profiling by quantifying
% 	We quantify 
%	that our techniques more than double the number of quality rules that are found compared to
% compared against 
%explicit order dependency discovery (Exp-7) and that many of these orders are %original and
%nonexistent in knowledge-bases (Exp-8).
% found using regular order dependencies. 
%Furthermore, by leveraging implicit orders, we can reduce the runtime of some queries by up to 30\% (Exp-9) and increase the information contained in data summaries by an average of 60\% (Exp-10). }
	\end{enumerate}
	
\end{enumerate}

\noindent
We review related work in Section \ref{sec:relWork} %,
%and discuss future work 
and conclude
in Section \ref{sec:futureWork}.

\section{Preliminaries} \label{sec:prelims}
%----------------------------------------------------------------------------%

\subsection{Domain Orders and Partitions}

\blue{We first review definitions of order, as introduced in \cite{order-book}.}
\blue{A glossary of relevant notation is provided in Table~\ref{tab:notations}.}
%Our interest is to discover domain orders over attributes with respect to tables. Thus, we first need to define our notation for such.
% In this work,
% we are interested in \emph{total orders}.
If we can write down a sequence of a domain's values
to represent how they are ordered%
, 
% ---%
% that is,
% a value coming before a second value in the sequence
% means that it \emph{precedes} the second value in the order%
% ---%
this defines a \emph{strong} total order over the values.
%For many domain types, it is common to have a strong total order defined over the domain.
For two distinct \emph{dates},
for example,
one always precedes the other in time.
%
%Much real-world data come with explicitly defined strong total orders over their domains.
%; e.g., lexicographic for strings, numeric for integers and floats,
% and chronological for time. 
%Let us call these \emph{explicit orders}. 
%These are prior knowledge.

\begin{definition} \emph{(strong total order)}
\label{DEF/Strong Total Order}
Given a domain of values \(\set{D}\),
%(possibly infinite)
a \emph{strong total order} is a relation ``\(\prec\)''
%defined over \(\set{D}\), 
which, \( \forall x, y, z \in\set{D}  \), is
%such that
%\(
%	\forall x, y \in\set{D}, 
%	x \prec y\ \mbox{if not}\ y \preceq x
%\)
%\(
%	\forall x, y, z \in\set{D}.\ x \ne y \rightarrow
%	(x \prec y \lor y \prec x)
%	\land
%	(x \not\prec y \lor y \not\prec x)
%\)
    \begin{itemize}[nolistsep,leftmargin=*]
	\item \emph{transitive}.
		if \(x \prec y\) and \(y \prec z\), then \(x \prec z\),
	\item \emph{connex}.
		\(x \prec y\) or \(y \prec x\) %(but not both) 
		and
	\item \emph{irreflexive}
		\(x \not \prec x\).
    \end{itemize}
    
One may name a strong total order over \(\set{D}\)
as \Ord{T}[{\set{D}}][\prec].
%and define this as the set of the relation instances.
Then
\(x \prec y \in\Ord{T}[{\set{D}}][\prec]\)
asks whether \(x\) precedes \(y\) in the order.
\end{definition}

%It can be that 
We might know how \emph{groups} of values
%from a domain 
are ordered,
but not 
%know 
how the values \emph{within} each group are ordered. This is a \emph{weak} total order.

\begin{definition}
\emph{(weak total order)}
\label{DEF/Weak Total Order}
Given a domain of values \(\set{D}\),
a \emph{weak total order} is a relation ``\(\prec\)''
defined over \set{D},
\Ord{W}[\set{D}][\prec],
\emph{iff} there is a partition over \set{D}'s values,
\(\coll{D} = \brac{\set{D}[1], \ldots, \set{D}[k]}\)
and a strong total order \Ord{T}[\coll{D}][\prec]\
over \coll{D}\
such that
\[
    \Ord{W}[\set{D}][\prec]
        =
    \{  \var{a} \prec \var{b}
            \,|\,
        \var{a}\in\set{D}[i]
            \land
        \var{b}\in\set{D}[j]
            \land
        \set{D}[i] \prec \set{D}[j] \in  \Ord{T}[\coll{D}][\prec]
    \}\mbox{.}
\]
\end{definition}

A strong partial order defines order precedence for some pairs
of items in the domain, but not all.

\begin{definition}
\emph{(strong partial order)}
\label{DEF/Strong Partial Order}
	Given a domain of values \(\set{D}\),
%	(possibly infinite),
	a \emph{strong partial order} is a relation ``\(\prec\)''
	%defined over \(\set{D}\)
	which \( \forall x, y, z \in\set{D}  \) is

    \begin{itemize}[nolistsep,leftmargin=*]
	\item \emph{antisymmetric}.
		if \(x \prec y\), then \(y \not\prec x\),
	\item \emph{transitive}.
		if \(x \prec y\) and \(y \prec z\), then \(x \prec y\), 
		and
	\item \emph{irreflexive}.
		\(x \not\prec x\).
    \end{itemize}

	One may name a strong partial order % partial total order
	over \(\set{D}\)
	as \Ord{P}[{\set{D}}][\prec].
	%and define this as the set of the relation instances.
	Then
	\(x \prec y \in\Ord{P}[{\set{D}}][\prec]\)
	asks whether \(x\) precedes \(y\) in the order.
\end{definition}

For any strong partial order \Ord{P}[{\set{D}}][\prec],
there exists a strong total order \Ord{T}[{\set{D}}][\prec]\
such that
\(
	\Ord{P}[{\set{D}}][\prec]
		\subseteq
	\Ord{T}[{\set{D}}][\prec]
\).
%
%
% Next, we introduce 
% %our 
% notational conventions.
%
A \emph{nested} order %
%---%
%referred to also as a 
(\emph{lexicographic} ordering)
%---%
with respect to a \emph{list} of attributes \blue{$\lst{X}$}
%over tuples
corresponds to the semantics of SQL's \emph{order by},  
\blue{shown as 
$\tup{s}\orel[\lst{X}]\tup{t}$
or 
% $\tup{s}\orel[\lst{X}]\tup{t}$ 
$\tup{t}\orel[\lst{X}]\tup{s}$ 
between tuples $\tup{s}$ and $\tup{t}$}.
An attribute set can define a \emph{partition}
of a table's tuples into \emph{groups},
as by SQL's \emph{group by}.
%%% Why TANE here?! This is general math. [PG]
% \cite{TANE}.

\begin{table}[t]
\blue{
\paddingT
\centering
    \caption{\label{tab:notations} \blue{Notation}}
    \paddingD
    \renewcommand{\arraystretch}{1.05}
    \scalebox{0.87}{
\begin{tabular}{|l|l|}
\hline
\textbf{Notation} & \textbf{Description} \\ \hline \hline
$\set{D}$, $\coll{D}$ & Domain of ordered values, a partition\\ \hline
$\Ord{T}_{\set{D}}^{\prec}$, $\Ord{W}_{\set{D}}^{\prec}$, $\Ord{P}_{\set{D}}^{\prec}$ & Strong total, weak total, and strong partial order \\ \hline
$\R{R}$, $\T{r}$, $\A{A}$ & Relational schema, table instance, and one attribute \\ \hline
$\set{X}$, $\set{X}\set{Y}$, $\emptySet{}$ & Set of attributes, set union, and the empty set \\ \hline
$\lst{X}$, $\lst{X}'$ & List of attributes and arbitrary permutation  \\ \hline
$\lst{X}\lst{Y}$, $\emptyLst$ & List concatenation and empty list\\ \hline
% $\Lst{\A{A}}[\lst{T}]$, $\lst{X}'$ & A list with \emph{head} $\A{A}$ and \emph{tail} $\lst{T}$, arbitrary permutation \\ \hline
% $\A{A}$, $\set{X}$, $\lst{X}$ & One attribute, a set, and a list of attributes \\ \hline
$\tup{t}$, $\tup{t}_{\set{X}}$, $\tup{t}_{\lst{X}}$ & Tuple and projections %of tuple 
over attributes (cast to set) \\ \hline
$\set{E}(\tup{t}_{\set{X}})$, $\pi_{\set{X}}$, $\tau_{\lst{X}}$ & Partition group, partition, and sorted partition \\ \hline
% $\fd{\set{X}}{\set{Y}}$, $\od[\set{X}]{\emptyLst}{\A{A}}$  & $\FD$ and $\OFD$ \\ \hline
% $\orders{\lst{X}}{\lst{Y}}$, $\oc[\set{X}]{\A{A}}{\A{B}}$ & $\OD$ and $\OC$ \\ \hline
$\oc[\set{X}]{\A{A}}{\A{B}}$ & Canonical $\OC$ \\ \hline
$\od[\set{X}]{\emptyLst}{\A{A}}$ & Canonical $\OFD$ \\ \hline
% $\oc[\set{X}]{\A{A}}{\A{B}}$, $\od[\set{X}]{\emptyLst}{\A{A}}$ & Canonical $\OC$ and $\OFD$ \\ \hline
$\Ord{T}_{\A{A}^*}^{\prec}$, $\A{A}^*$ & Derived and strongest derivable orders over $\A{A}$ \\ \hline
$\BG_{\A{A}, \A{B}}$, $\BGp_{\A{A}, \A{B}}$ & Initial and simplified bipartite graphs over $\A{A}$ and $\A{B}$ \\ \hline
% x & y \\ \hline
\end{tabular}
}
%  \paddingD
}
\end{table}

\begin{definition}
\emph{(partition)}
\label{DEF/partition}
The \emph{partition group} of a tuple \(\tup{t} \in \T{r}\)
over an attribute set \(\set{X} \subset \R{R}\) is defined
as
\(
    \set{E}(\tup{t}_{\set{X}})
        =
    \{\tup{s} \in \T{r}\,|\,\proj{s}{\set{X}} = \proj{t}{\set{X}}\}
\).

The \emph{partition} \emph{of} \(\T{r}\) \emph{over} \(\set{X}\) is
the \emph{set} of partition groups
\(
    \pi_{\set{X}}
        =
    \{\set{E}(\tup{t}_{\set{X}})\,|\,\tup{t} \in \T{r}\}
\).
The \emph{sorted partition} \(\tau_{\lst{X}}\)
\emph{of} \(\T{r}\) \emph{over} \(\lst{X}\) is
the \emph{list} of partition groups
from \(\pi_{\set{X}}\)
\emph{sorted} with respect to \(\lst{X}\)
and the \emph{domain orders} of the attributes in \(\lst{X}\);
e.g.,
partition groups in 
\(\pi_{\brac{\A{A}, \A{B}, \A{C}}}\)
are sorted
in \(\tau_{\Lst{\A{A}, \A{B}, \A{C}}}\)
as per
SQL's ``order by \A{A}, \A{B}, \A{C}\@''\blue{\cite{TANE,SGG+2017:ODD}}.

%As with the shorthand \(\proj{t}{\A{A}}\),
%when the projection is
%with respect to a set or list of just one attribute,
%say, \A{A},
%we abbreviate the notation as
%\(\pi_{\A{A}}\)
%and
%\(\tau_{\A{A}}\).
\end{definition}

\begin{example} \label{example:partition-group}
In Table~\ref{table:festivals},
%
%\begin{itemize}[nolistsep,leftmargin=*]
%\item
\blue{
   $\set{E}({\tup{t}[1]}_{\A{yearGreg}})=\{\tup{t}[1],$ $\dots,$ $\tup{t}[5]\}$,
%\item
    $\pi_{\A{yearGreg}}=\left\{\{\tup{t}[1], \dots, \tup{t}[5]\},
                \{\tup{t}[6], \dots, \tup{t}[10]\}
        \right\}$, and
%\item
    $\tau_{\A{yearGreg}}=[\{\tup{t}[1],\dots,\tup{t}[5]\},\{\tup{t}[6],\dots,\tup{t}[10]\}]$.
}
%\end{itemize}
\end{example}

%----------------------------------------------------------------------------%
\subsection{Order Dependencies and Order Compatibility} \label{sec:prelim:ODs}
% Order dependencies (\OD s) are data dependencies between
% attribute orders.
% %As such,
% We exploit \OD s
% to discover domain orders.
\noindent
\textbf{List-based notation}.
A natural way to describe an order dependency is
via two \emph{lists} of attributes.
An order dependency
\(\lst{X}\) \emph{orders} \(\lst{Y}\) means
that \(\lst{Y}\)'s values are
lexicographically, monotonically non-decreasing
with respect to \(\lst{X}\)'s values
\cite{%
	LN16,%
	Ng,%
	SGG+2017:ODD,%
	SGG2018:BiODD,%
	Szl2012%,
}.

\begin{definition}
\emph{(order dependency)}
\label{definition:OrderDependency}
Let $\set{X}, \set{Y} \subseteq \R{R}$.
$\orders{\lst{X}}{\lst{Y}}$ denotes
an \emph{order dependency} (\OD),
read as $\lst{X}$ \emph{orders} $\lst{Y}$.
% Table \T{r}\ over $\R{R}$ {\em satisfies}
\orders{\lst{X}}{\lst{Y}}\ (\(\T{r} \models \orders{\lst{X}}{\lst{Y}}\))
iff, \col{for all \(\tup{r}, \tup{s} \in \T{r}\),
\(\tup{r}\orel[\lst{X}]\tup{s}\) implies \(\tup{r}\orel[\lst{Y}]\tup{s}\).}
% We write $\orderEquiv{\lst{X}}{\lst{Y}}$, read as 
$\lst{X}$ and
$\lst{Y}$ are {\em order equivalent}, denoted as $\orderEquiv{\lst{X}}{\lst{Y}}$, iff $\orders{\lst{X}}{\lst{Y}}$ and $\orders{\lst{Y}}{\lst{X}}$.
\end{definition}

\begin{example}
The following \OD s hold in Table \ref{table:festivals}: \\
    \od{ \Lst{ \A{timestamp} } }
       { \Lst{ \A{year} } }\
	and
    \od{ \Lst{ \A{country}, \A{profit} } }
       { \Lst{ \A{country}, \A{tax} } }.
\end{example}

%\blue{The concept of order compatibility captures data dependencies for orders with finer granularity $\OD$s \cite{SGG+2017:ODD,Szl2012}.}
% \blue{The concepts of order compatibility ($\OC$), split, and swap have been proposed to capture data dependencies in a less restrictive form than $\OD$s \cite{SGG+2017:ODD,Szl2012}.}

\begin{definition}
\emph{(order compatibility)}
\label{definition:orderCompatible}
Two lists $\lst{X}$ and $\lst{Y}$ are {\em order
compatible} ($\OC$), denoted as $\lst{X} \sim \lst{Y}$, iff
$\orderEquiv{\lst{XY}}{\lst{YX}}$.
\end{definition}

% \oc{\A{Alice}}{\A{Bob}},
% \oc{\A{Bob}}{\A{Eve}},
% \oc{\A{Eve}}{\A{Alice}},
% \oc{\A{Alice}}{\A{Bob}},
% \oc{\A{Bob}}{\A{Eve}},
% \oc{\A{Eve}}{\A{Alice}},
% \oc{\A{Alice}}{\A{Bob}},
% \oc{\A{Bob}}{\A{Eve}},
% \oc{\A{Eve}}{\A{Alice}},
% \oc{\A{Alice}}{\A{Bob}},
% \oc{\A{Bob}}{\A{Eve}},
% and
% \oc{\A{Eve}}{\A{Alice}}.

\begin{example}
Assume that we add an attribute 
\A{monthNum}\ as a numeric version of Gregorian month
in Table~\ref{table:festivals}.
Then, the \OC\
    \oc{\Lst{\A{yearGreg}, \A{monthNum}}}
       {\Lst{\A{yearGreg}, \A{week}}}\
is valid with respect to the table
as sorting by year, month and breaking ties by week
is equivalent to
sorting by year, week and breaking ties by month.
\end{example}

% \begin{example}
% The \OC\ of
%     \oc{\Lst{\A{yearGreg},\A{monthNum}}}
%        {\Lst{\A{yearGreg}, \A{week}}}\
% is valid in Table~\ref{table:festivals}
% (assuming \A{monthNum}\ is added as a numeric version of Gregorian month)
% as sorting by year, month and breaking ties by week is equivalent
% to sorting by year, week and breaking ties by month.
% \end{example}

There is a strong relationship between $\OD$s and $\FD$s.
Any $\OD$ implies an $\FD$, modulo lists and sets, but not vice versa
\cite{Szl2012,SGG+2013:complexity}.
If $\R{R} \models \orders{\lst{X}}{\lst{Y}}$,
then $\R{R} \models \set{X} \rightarrow \set{Y}$
(\FD).
%
%\begin{lemma}\label{theorem:relationship}
%If $\R{R}$ $\models$ %\emph{$\orders{\lst{X}}{\lst{Y}}$}, then
%$\R{R}$ $\models$ $\set{X} \rightarrow \set{Y}$ %\emph{(}$\FD$\emph{)}.
%\end{lemma}
%
There also exists a correspondence between $\FD$s and
$\OD$s~\cite{Szl2012,SGG+2013:complexity}.
$\R{R}$ $\models$ $\set{X} \rightarrow \set{Y}$ ($\FD$\emph)
\emph{iff} $\R{R}$ $\models$
$\orders{\lst{X}'}{\lst{X}'\lst{Y}'}$.
%
%\begin{theorem} \label{theorem:correspondence}
%$\R{R}$ $\models$ $\set{X} \rightarrow \set{Y}$ \emph{(}$\FD$\emph{)}
%iff $\R{R}$ $\models$ \emph{$\orders{\lst{X}'}{\lst{X}'\lst{Y}'}$}.
%\end{theorem}
%

$\OD$s can be violated in two ways~\cite{Szl2012,SGG+2013:complexity}.
$\R{R}$ $\models$ $\orders{\lst{X}}{\lst{Y}}$ iff $\R{R}$ $\models$
$\orders{\lst{X}}{\lst{XY}}$ \emph{(}$\FD$\emph{)} and $\R{R}$
$\models$ $\simular{\lst{X}}{\lst{Y}}$ ($\OC$).
%
%\begin{theorem}\label{theorem:orderdependency}
%$\R{R}$ $\models$ \emph{$\orders{\lst{X}}{\lst{Y}}$} iff $\R{R}$
%$\models$  \emph{$\orders{\lst{X}}{\lst{XY}}$} \emph{(}$\FD$\emph{)}
%and $\R{R}$ $\models$ \emph{$\simular{\lst{X}}{\lst{Y}}$}
%\emph{(}$\OC$\emph{)}.
%\end{theorem}
%
% These
This offers two sources of $\OD$ violations,
% are 
called 
	\emph{splits}
	and
	\emph{swaps}, respectively \blue{\cite{SGG+2017:ODD,Szl2012}}.
%\cite{Szl2012}.
%corresponding to Theorem~\ref{theorem:orderdependency}.

% \blue{We can characterize what constitutes a violation of an $\OD$ or $\OC$ via \emph{split}s and \emph{swap}s \cite{SGG+2017:ODD,Szl2012}.}

\begin{definition}
\emph{(split)}
\label{definition:split}
A \emph{split} with respect to
the \OD\ of \od{\lst{X}}{\lst{XY}}\
(which represents the \FD\ of \fd{\set{X}}{\set{Y}})
and table \T{r}\
is a pair of tuples
\(\tup{s}, \tup{t} \in \T{r}\)
such that
\(
    \tup{s}[\set{X}] = \tup{t}[\set{X}]
\)
but
\(
    \tup{s}[\set{Y}] \ne \tup{t}[\set{Y}]
\).
\end{definition}

\begin{definition}
\emph{(swap)}
\label{definition:swap}
A \emph{swap} with respect to
the \OC\ of \oc{\lst{X}}{\lst{Y}}\
% (\OC)
and table \T{r}\
is a pair of tuples
\(\tup{s}, \tup{t} \in \T{r}\)
such that
\(
    \tup{s} \prec_{\lst{X}} \tup{t}
\)
but
\(
    \tup{t} \prec_{\lst{Y}} \tup{s}
\).
\end{definition}

\begin{example} \label{example:swap}
In Table~\ref{table:festivals},
there is a split for
the \OD\
% of 
\blue{candidate}
\od{\Lst{\A{yearGreg}}}{\Lst{\A{yearGreg}, \A{timestamp}}}\
\emph{(}an \FD\emph{)}
with tuples \tup{t}[1]\ and \tup{t}[2],
and a swap
for the \OC\
% of 
\blue{candidate}
\oc{\Lst{\A{count}}}{\Lst{\A{profit}}}\
% \emph{(}thus, an \OC\emph{)}
with tuples \tup{t}[7]\ and \tup{t}[8]
\blue{, invalidating these candidates}.
\end{example}

%%% \subsection{Canonical Form}

\noindent
\textbf{Set-based notation (and mapping)}.
Expressing \OD s in a natural way relies on lists of attributes,
as in SQL order-by.
However, 
lists are \emph{not} inherently necessary
to express \OD s as we can express them in a set-based \emph{canonical} form.
%While the list-based notation is generally easier
%to interpret,
The set-based form
enables more efficient \OD\ discovery,
and there exists a
polynomial \emph{mapping} of list-based \OD s
into \emph{equivalent} set-based canonical \OD s
\cite{SGG+2017:ODD,SGG2018:BiODD}.

\begin{definition}
\emph{(canonical form)}
%set-based $\OD$ 
\label{def:canonicalODs}
The \FD\ that states
that attribute \A{A}\ is \emph{constant}
with\-in each par\-ti\-tion group
over the set of attributes \set{X}\
can be written as
\od[\set{X}]{\emptyLst}{\A{A}}.
% $\set{X}$:$\orders{\emptyLst{}}{\A{A}}$
This is equivalent to the \OD\
\od{\lst{X}'}{\lst{X}'\A{A}}\
in list notation.
% $\orders{\lst{X}'}{\lst{X}' \A{A} }$ ($\FD$).
Call this an \emph{order} functional dependency (\OFD).
The \emph{canonical} \OC\ that states
that \A{A}\ and \A{B}\
are order compatible
within each partition group
over the set of attributes \set{X}\
% (that is, that constitute an \OC)
is denoted as
% $\set{X}$: $\simular{\A{A}}{\A{B}}$,
\oc[\set{X}]{\A{A}}{\A{B}}.
This is equivalent to the \OC\
\(\simular{\lst{X}' \A{A}}{\lst{X}' \A{B} }\).
%in list notation.

The set \set{X}\ in this notation is called
the \OFD's or \OC's \emph{context}.
\OFD s and canonical \OC s
constitute the canonical \OD s,
% .
\blue{which we express using the notation $\ordersCtxSet{\set{X}}{\A{A}}{\A{B}}$.}
% iff both $\FD$ $\od[\set{X}\A{A}]{\emptyLst}{\A{B}}$ and $\OC$ $\simularCtxSet{\set{X}}{\A{A}}{\A{B}}$ hold}
% \blue{Furthermore, if both $\FD$ $\od[\set{X}\A{A}]{\emptyLst}{\A{B}}$ and $\OC$ $\simularCtxSet{\set{X}}{\A{A}}{\A{B}}$ hold, then $\ordersCtxSet{\set{X}}{\A{A}}{\A{B}}$}.

% \OD s and \OC s of the form
% % $\set{X}$:$\orders{\emptyLst{}}{\A{A}}$
% \od[\set{X}]{\emptyLst}{\A{A}}\
% and
% % $\set{X}$: $\simular{ \A{A} }{ \A{B} }$
% \oc[\set{X}]{\A{A}}{\A{B}}\
% are called \emph{canonical} \OD s and \OC s,
% respectively.
\end{definition}

% We are 
% %quite 
% interested in \OD s of the form
% \od{\lst{X}'\A{A}}{\lst{X}'\A{B}}.
% Writing this in the style of the canonical form
% would be \od[\set{X}]{\A{A}}{\A{B}}.
% But this is not an \emph{admissible} canonical \OD!
% It is neither an \OFD\ nor a canonical \OC.
% However,
% our pseudo-canonical \OD\

% Note that,
% if
% % $\set{X}$: $\simular{\A{A}}{\A{B}}$
% \oc[\set{X}]{\A{A}}{\A{B}}\
% and
% % $\{ \set{Y} \A{A} \}$: $\orders{\emptyLst{}}{\A{B}}$,
% \od[\set{X}\A{Y}]{\emptyLst}{\A{B}}\
% for which \(\set{Y} \subset \set{X}\),
% then,
% for each partition group of the context \set{X},
% \A{A}\ \emph{orders} \A{B};
% i.e.,
% % $\set{X}$: $\orders{\A{A}}{\A{B}}$
% \od[\set{X}]{\A{A}}{\A{B}}.
% Thus,
% \(\OD = \OC + \FD\).
% %Theorem~\ref{theorem:orderdependency}).

\blue{We are 
interested in \OD s of the form
\od{\lst{X}'\A{A}}{\lst{X}'\A{B}} as written in
% in the style of 
the canonical form
as \od[\set{X}]{\A{A}}{\A{B}}.
To discover such $\OD$s,}
% For discovery of \OD s of the form
% \od[\set{X}]{\A{A}}{\A{B}},
we limit the search to find canonical \OC s and \OFD s.
This generalizes:
\od{\lst{X}}{\lst{Y}}\
\emph{iff}
\od{\lst{X}}{\lst{X}\lst{Y}}
\emph{and}
        \oc{\lst{X}}{\lst{Y}}.   
These can be encoded into an equivalent set
of \OC s and \OFD s~\cite{SGG+2017:ODD,SGG2018:BiODD}.
In the context of \set{X},
all attributes in \set{Y}\ are constants.
In the context
of all prefixes of \lst{X}\ and of \lst{Y},
the trailing attributes are order compatible.
Thus,
we can encode \od{\lst{X}}{\lst{Y}}\
based on the following polynomial mapping.

\begin{center}
\(
    \R{R} \models
        \od{\lst{X}}{\lst{X}\lst{Y}}
    \mathrel{\emph{iff}}
    \forall \A{A} \in \lst{Y}.\ 
    \R{R} \models
        \od[\set{X}]{\emptyLst}{\A{A}}
\)
and 
\end{center}

%\noindent

\begin{center}
\(
    \R{R} \models
        \oc{\lst{X}}{\lst{Y}}
    \mathrel{\emph{iff}}
\) 
\(
    \forall i,j.\ 
    \R{R} \models \oc[
                        \Lst{\A{X}_{1}, \ldots, \A{X}_{i-1}}
                        \Lst{\A{Y}_{1}, \ldots, \A{Y}_{j-1}}
                     ]
                     {\A{X}_{i}}
                     {\A{Y}_{j}}
\).
\end{center}

\noindent
This establishes a \emph{mapping} of list-based \OD s
into \emph{equivalent} set-based canonical \OD s; i.e.,
% .
% That is,
% \(\OD \equiv \OC + \OFD\).
the $\OD$
\od{\lst{X}'\A{A}}{\lst{X}'\A{B}}\
is logically equivalent to the pair
of the \OC\ \oc[\set{X}]{\A{A}}{\A{B}}\
and \OFD\  \od[\set{X}\A{A}]{\emptyLst}{\A{B}}.
\blue{This is because $\lst{X}'$, which is a common prefix for both the left and right side of this $\OD$, can be factored out, making $\oc[\set{X}]{\A{A}}{\A{B}}$ the only non-trivial $\OC$ that needs to hold. Thus, $\OD \equiv \OC + \OFD$.}

% \vspace{-2mm}
\begin{example}\label{example:ODs}
	In Table~\ref{table:festivals},
	$\brac{ \A{country}, \A{profit} }$$:$
	$\orders{\emptyLst{}}{\A{tax}}$ \emph{(}\OFD\emph{)}
	and
	$\brac{\A{country}}$$:$
	$\simular{\A{profit}}{\A{tax}}$ \emph{(}$\OC$\emph{)}.
	Hence,
	$\brac{\A{country}}$$:$
	$\orders{\A{profit}}{\A{tax}}$ \emph{(}$\OD$\emph{)},
	as tax rates vary in different countries.
\end{example}

%
%\begin{theorem} \label{thm:canSplitOD}
%$\R{R}$ $\models$ %\emph{$\orders{\lst{X}}{\lst{X}\lst{Y}}$} iff
%$\forall \A{A} \in \lst{Y}$, $\R{R}$ $\models$
%$\set{X}$$:$$\orders{\emptyLst{}}{\A{A}}$.
%\end{theorem}
%
%\begin{theorem}
%$\R{R}$ $\models$ \emph{$\simular{\lst{X}}{\lst{Y}}$} iff $\forall
%i,j$, $\R{R}$ $\models$ $[ \A{X}_{1}, .., \A{X}_{i-1}]$$[\A{Y}_{1}$,
%\ldots, $\A{Y}_{j-1} ]$$:$ $\simular{\A{X}_{i}}{\A{Y}_{j}}$.
%\end{theorem}
%

%\begin{theorem}
%\emph{(equivalence of set-based canonical \OD s)
%    \cite{%
%            SGG+2017:ODD,
%            SGG2018:BiODD%,
%         }}.
%\label{thm:canonicalODs}
%The list-based \OD\ \od{\lst{X}}{\lst{Y}}\
%is logically equivalent to the set of \OC s and \OFD s
%as by the mapping above.
%This mapping of list-based to set-based \OD s is polynomial
%\emph{(}in the size of the product \emph{\(|\lst{X}|\times|\lst{Y}|\)}\emph{)}.
%\end{theorem}

% \begin{theorem}
% \emph{(equivalence of set-based canonical \OD s)
%     \cite{SGG2018:BiODD,SGG+2017:ODD}}.
% \label{thm:canonicalODs}
% $\R{R}$ $\models$ \emph{$\orders{\lst{X}}{\lst{Y}}$} iff $\forall \A{A}
% \in \lst{Y}$, $\R{R}$ $\models$ $\set{X}$$:$$\orders{\emptyLst{}}{\A{A}}$
% and $\forall i,j$, $\R{R}$ $\models$ $[ \A{X}_{1}$, \ldots, $\A{X}_{i-1}]$
% $[\A{Y}_{1}$, \ldots,  $\A{Y}_{j-1} ]$$:$ $\simular{\A{X}_{i}}{\A{Y}_{j}}$.
% This mapping of list-based to set-based ODs is polynomial \emph{(}in
% the size of the product \emph{$|\lst{X}|\times|\lst{Y}|$}\emph{)}.
% \end{theorem}

\noindent
\textbf{Lattice Traversal}.
% \blue{In many $\FD$ and $\OD$ discovery algorithms a lattice of all possible sets of attributes is traversed in a level-wise manner (i.e., from smaller to larger sets) and for every set of attributes in the lattice, potential $\FD$ or $\OD$ candidates derived from it are checked for validity. This approach allows for the discovery of minimal dependencies and the usage of pruning rules \cite{TANE, SGG+2017:ODD}.}
% % Do we want to use the citations from the introduction or cite other sources such as TANE?
% \blue{An algorithm can discover $\FD$s and $\OD$s by traversing a lattice of all possible \emph{set}s or \emph{list}s of attributes in a level-wise manner, which is referred to as the lattice space \cite{TANE, SGG+2017:ODD}.}
% \blue{To discover $\OC$s, our algorithm traverses a lattice of all possible \emph{set}s of attributes in a level-wise manner, which is referred to as the lattice space \cite{TANE, SGG+2017:ODD}.}
\blue{
To discover $\OC$s, our algorithm starts with single attributes and proceeds to
larger attribute sets by traversing a lattice of all possible \emph{set}s of 
attributes in a level-wise manner. This search space is referred to as the lattice space.
% \cite{SGG+2017:ODD}. 
}
%%
%% Moved the bottom here to combine the lattice space definitions [Reza]
%%
% The algorithm starts with single attributes and proceeds to
% larger attribute sets through the set-containment lattice \blue{space}, level
% by level, with the $i$th level containing sets of $i$ attributes. 
% % Since single attributes are sorted in the first level
% % of the lattice, larger attribute sets can be easily
% % sorted afterward. 
When processing an attribute set $\set{X}$, the algorithm verifies (\textsf{O})\FD s of the
form
% $\set{X} \setminus \A{A}$:$\orders{\emptyLst}{\A{A}}$,
\od[\set{X} \setminus \A{A}]{\emptyLst}{\A{A}}\ 
for which \(\A{A} \in \set{X}\),
and \OC s of the form
% \(\set{X} \setminus \brac{\A{A}, \A{B}}\): $\simular{\A{A}}{\A{B}}$,
\oc[\set{X} \setminus \brac{\A{A}, \A{B}}]{\A{A}}{\A{B}}\
for which \(\A{A}, \A{B} \in \set{X}\)
and \(\A{A} \ne \A{B}\).
% This guarantees that only non-trivial \OD s are considered.
% (Trivial \OD s are those which must always hold;
% e.g., \od{\A{A}}{\A{A}}).
% Inference rules (axioms)
% %\cite{SGG2018:BiODD,SGG+2017:ODD}
% are used to prune the search space by removing redundant \OD s
% which follow from others.
The set-based \OD-discovery algorithm has
exponential worst-time complexity in the number of attributes (to generate the candidate $\OD$s), but linear complexity in the number of tuples (to verify each $\OD$ candidate)
\blue{\cite{SGG+2017:ODD}}.
That the canonical \OD s have a set-based representation rather than list-based means that the lattice is set-based, not list-based, %which is 
making it significantly smaller.
% linear complexity in the number of tuples
% (to verify each \OD\ candidate),
% but exponential worst-time complexity in the number of attributes
% (to the generate candidate \OD s).
%
%

% For \OD\ discovery,
% the set-based canonical mapping above means 
% that we can limit our search to canonical \OC s and \OFD s.
% \OFD s are equivalent to \FD s,
% just in the \OD\ canonical form.
% This brings them under the same lattice space as our \OC s;
% thus, 
% we can search for both simultaneously.
% From here on,
% we will talk in terms of \FD s and (canonical) \OC s,
% with the understanding that,
% within our discovery machinery,
% \FD s are handled as \OFD s.

%
%This leads to an (\(\A{E/E}\)) \OD\ discovery algorithm that traverses the much smaller set-containm\-ent lattice \blue{space} of candidate dependencies 
%\cite{%
 %       SGG+2017:ODD,
  %      SGG2018:BiODD%,
 %    },
% rather than a list-containment lattice~\cite{LN16},
%to discover $\OD$s
%which  is orders of magnitude faster than list-based~\cite{LN16}, in practice.

\noindent
\textbf{Problem Statement}.
%We can now state our problem more formally.
Given a dataset, 
%\T{r},
we want to \emph{find implicit domain orders} by
extending the set-based \OD\ discovery
algorithm
\cite{%
        SGG+2017:ODD,
        SGG2018:BiODD%,
     }
to \EIOC s and \IIOC s
of the form
% $\set{X} \setminus \brac{\A{A}, \A{B}}$: $\simular{\A{A}}{\A{B^{*}}}$
\oc[\set{X} \setminus \brac{\A{A}, \A{B}}]{\A{A}}{\A{B}[][*]}
and
% $\set{X} \setminus \brac{\A{A}, \A{B}}$: $\simular{\A{A}^{*}}{\A{B^{*}}}$,
\oc[\set{X} \setminus \brac{\A{A}, \A{B}}]{\A{A}[][*]}{\A{B}[][*]},
respectively.
Since  
\(\OD \equiv \OC + \FD\),
we also want to \emph{find implicit domain orders}
via \EIOD s and \IIOD s,
% for which an \OFD\ of
for which the \OFD\
% $\{ \set{Y} \A{A} \}$: $\orders{\emptyLst{}}{\A{B}}$
% \fd{\set{Y}\A{A}}{\A{B}}\
% (i.e., \od[\set{Y}\A{A}]{\emptyLst{}}{\A{B}})
\fd{\set{X}\A{A}}{\A{B}}\
(i.e., \od[\set{X}\A{A}]{\emptyLst{}}{\A{B}})
additionally holds.
% for  \(\set{Y} \subseteq \set{X}\).
\

\subsection{\blue{Discovery Framework}} \label{sec:prelim:framework}
\begin{change}
Figure~\ref{fig:framework} illustrates the framework of our algorithm ($\iOrder$). First, potential $\OC$ candidates are generated for one level of the lattice.
These candidates are then pruned using the dependencies found in the previous levels of the lattice and the $\OC$ axioms.
Next, the existence of an $\FD$ is checked for each candidate, and depending on whether an $\FD$ holds or not, different types of implicit $\OC$s are examined 
%% Next, the validity of different types of implicit $\OC$s is examined for each candidate
using the algorithms described in Sections~\ref{sec:eioc} through \ref{SEC:SAT-Solver}.
% ; this is where the major contributions of our work lie. 
Next, valid candidates and the strongest implicit $\OC$ types that were validated % they were validated on 
are stored (e.g., unconditional $\EIOC$s are preferred over conditional $\EIOC$s and unconditional $\IIOC$s). 
The candidates for the next level of the lattice are then generated, until the search for candidates is finished.
%% Note that the order of checking for different types of implicit $\OC$s is arranged in such a way to allow for skipping one or more type of implicit $\OC$s if possible (e.g., Unconditional $\IIOC$ is not checked if unconditional $\EIOC$ holds or conditional $\IIOC$ does not hold). 
%% $\OC$ axioms are then applied to the discovered $\OC$s to prune the search space and generate the candidates in the next level of the lattice. 
In the final step, the discovered implicit orders are ranked based on their interestingness scores (Section~\ref{sec:measure-interestingness-def}).
%and can then be manually validated by experts.
\end{change}

\begin{figure}[t]
    \center
    \scalebox{0.95}{
    \includegraphics[width=0.45\textwidth]{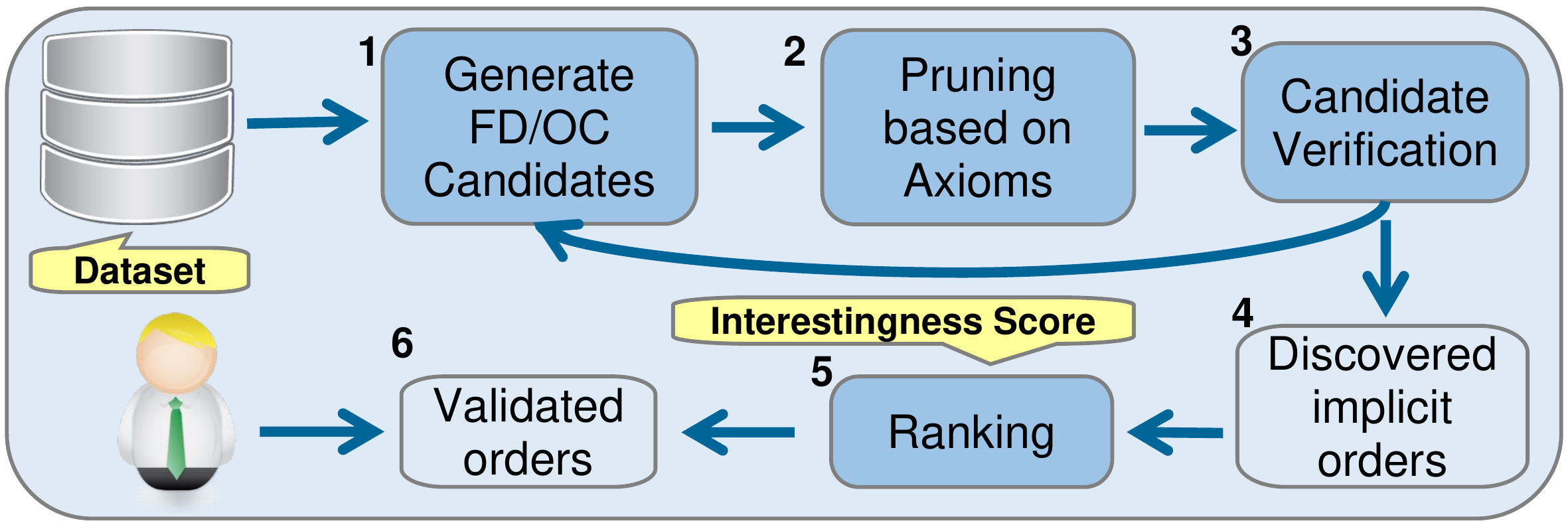}
    % }
    % \includegraphics[height=5cm]{figs/framework.pdf}
    }
        %\vskip -0.05cm
        \paddingT
     \caption{\blue{System framework.}}
     \label{fig:framework}
      \paddingD
        % \vspace{-2mm}
\end{figure}

\section{E/I Discovery} \label{sec:eioc}

We begin with 
explicit-to-implicit ({\sf{E/I}}) domain order discovery
through \OD s and \OC s
in which an attribute with an explicit order
is used to find
an implicit domain order over another attribute.
%This section is organized as follows.

\begin{itemize}[nolistsep,leftmargin=*]
\item
    We first define 
    %what is meant by 
    an \emph{implicit} domain order
    with respect to the table
    and an explicit domain order on another attribute
    (Section~\ref{sec:implicit orders}).
    
\item
    We then subdivide the problem of domain-order discovery
    via \EIOC s and \OD s
    as follows:
    
    \begin{itemize}[nolistsep,leftmargin=*]
    \item with \FD s,
    	thus effectively via \OD s
    	(Section \ref{sec:eioc-with-fd});
    \item without corresponding \FD s,
    	thus effectively via \OC s
    	\begin{itemize}[nolistsep,leftmargin=*]
    	\item with \emph{empty} contexts
    		(Section~\ref{sec:eioc-empty-context})
    		and
    	\item with \emph{non-empty} contexts
    		(Section~\ref{section:nonemptyEI}).
    	\end{itemize}
    \end{itemize}
\end{itemize}

% Shortened in the revision [Reza]
\begin{change}
 Thus,
in Section~\ref{sec:implicit orders},
we define 
% declaratively 
when two attributes %,
% \A{A}\ and \A{B},
can be \emph{co-ordered},
given an explicit 
%domain 
order on one, and define 
%and  $\A{A}$.
%and with respect to the data, \T{r};
% i.e.,
% when \oc[\set{X}]{\A{A}}{\A{B}[][*]}\
% holds over \T{r}.
% We also define %declaratively
what the \emph{strongest} derived order is. %, \A{B}[][*]. %,
% when \oc[\set{X}]{\A{A}}{\A{B}[][*]}\ holds.
% The following sections 
We then provide
\emph{algorithms}
to determine when \oc[\set{X}]{\A{A}}{\A{B}[][*]},
and to compute the \emph{strongest} order \A{B}[][*]\
when it does.
\end{change}

% Thus,
% in Section~\ref{sec:implicit orders},
% we define declaratively when two attributes,
% \A{A}\ and \A{B},
% can be co-ordered,
% given an explicit domain order on $\A{A}$;
% %and with respect to the data, \T{r};
% i.e.,
% when \oc[\set{X}]{\A{A}}{\A{B}[][*]}\
% holds over \T{r}.
% We also define declaratively
% what the \emph{strongest} order is
% that can be derived, \A{B}[][*],
% when \oc[\set{X}]{\A{A}}{\A{B}[][*]}\ holds.
% The following sections then provide
% \emph{algorithms}
% to determine when \oc[\set{X}]{\A{A}}{\A{B}[][*]},
% and to compute the \emph{strongest} order \A{B}[][*]\
% when it does.

%----------------------------------------------------------------------------%

% \vspace{-1mm}

\subsection{Implicit Domain Orders} \label{sec:implicit orders}

%\begin{change}
For \emph{explicit-explicit} \OC\ discovery,
say, for columns \A{A}\ and \A{B},
it suffices to check that the tuples of \T{r}\
can be ordered in \emph{some} way
that is consistent \emph{both}
with ordering the tuples of \T{r}\
with respect to column \A{A}'s explicit domain order
\emph{and}
with respect to column \A{B}'s explicit domain order.
That way of ordering the tuples of \T{r}\
is a \emph{witness} that \A{A}\ and \A{B}\ can be ``co-ordered'';
it \emph{justifies} that \oc{\A{A}}{\A{B}}.

To define \emph{explicit-implicit order compatibility},
we want to maintain this same concept:
there is a way to order the tuples of \T{r}\
with respect to column \A{A}'s explicit domain order
\emph{and} for which the projection on \A{B}\ provides
a \emph{valid} order over \A{B}'s values.
For \EIOC s with a non-empty context,
\oc[\set{X}]{\A{A}}{\A{B}[][*]},
there must be a witness total order over \T{r}\
that is,
\emph{within} each partition group of \set{X},
compatible with the explicit order of \A{A}\
\emph{and}
%\emph{within} each partition group of \set{X},
the order over \A{B}\ dictated by this is valid.
%
% This time,
% the most that we can derive about \A{B}[][*]\
% might not be a weak total order,
% but it is a strong partial order.
% That order on \A{B}\ is valid \emph{iff}
% no value of \A{B}\ appears in two different places in the ordered list
% with some other value of \A{B}\ in between;
% that is, there are no ``cycles''.
This answers one of our two questions:
whether the candidate \OC\ of \oc{\A{A}}{\A{B}[][*]}\ \emph{holds}
over \T{r}.
The second question in this case, though,
that we also need to answer is,
what is \emph{that} order \A{B}[][*]?

% In the next section,
% we shall refine this
% for \emph{implicit-implicit order compatibility}.

Such a witness order over \T{r}\
derives a 
% strong \blue{or weak}
total order (perhaps weak) over \A{B}.
There may be more than one witness order over \T{r}.
%that suffices.
Consider the $\OC$ $\simular{\A{monthNum}}{\A{monthLun}^*}$ over the first five tuples in Table~\ref{table:festivals}. While the ordering $[\tup{t}_1, \tup{t}_2, \tup{t}_3, \tup{t}_4, \tup{t}_5]$ is a valid witness that gives the order $Corner\prec Peach\prec Plum\prec Pomegranate\prec Winter$ over $\A{monthLun}$, so is the ordering $[\tup{t}_1, \tup{t}_3, \tup{t}_2, \tup{t}_4, \tup{t}_5]$, where the order of month values $Peach$ and $Plum$ is swapped. This indicates that we can only derive a \emph{weak} total order $Corner\prec\{Peach, Plum\}\prec Pomegranate\prec Winter$. 
%however.
% % Previous idea
% Consider when \fd{\A{B}}{\A{A}}.
% Any witness order over \T{r}\
% orders the partition groups over \T{r}\
% with respect to \A{A}.
% This orders the \emph{groups} of \A{B}-values,
% and also dictates the order of \A{B}-values
% within each group.
% A different witness order over \T{r}\
% must order the partition groups over \T{r}\
% with respect to \A{A}\ in the same way,
% of course,
% but orders the \A{B}-values
% within groups differently.
% In composite,
% we know the order of the \A{B}-groups,
% but not the values within each group.
% %
% Overall then,
% the dictated order of \emph{groups} of \A{B}-values,
% \A{B}[][*],
% in this case
% is a \emph{weak} total order over \A{B}.

\blue{More formally, if}
% If 
\oc[\set{X}]{\A{A}}{\A{B}[][*]}\ holds over \T{r},
% what then is the ``strongest'' \A{B}[][*]?
% We define it as
\blue{we define the 
\emph{strongest derivable} order
% ``strongest''
\A{B}[][*] as}
the intersection of all the derived strong total orders over \A{B}\
corresponding to the possible witness total orders of \T{r}\
that justify \oc[\set{X}]{\A{A}}{\A{B}[][*]}.
% This is a ``model-theoretic'' definition.
% Of course,
% it would be impractical to find \A{B}[][*]\ this way.
% We will present an algorithm, a ``proof-theoretic'' definition,
% for discovering \A{B}[][*].
\blue{We shall present algorithms for validating an $\EIOC$ candidate $\oc[\set{X}]{\A{A}}{\A{B}[][*]}$ and deriving the order $\A{B}^*$.}

\begin{figure}[t]
% \label{figure:biGraph}
    \centering
    \begin{subfigure}[]{0.23\textwidth} % boxplot 1
        \centering
        \includegraphics[height=5cm]{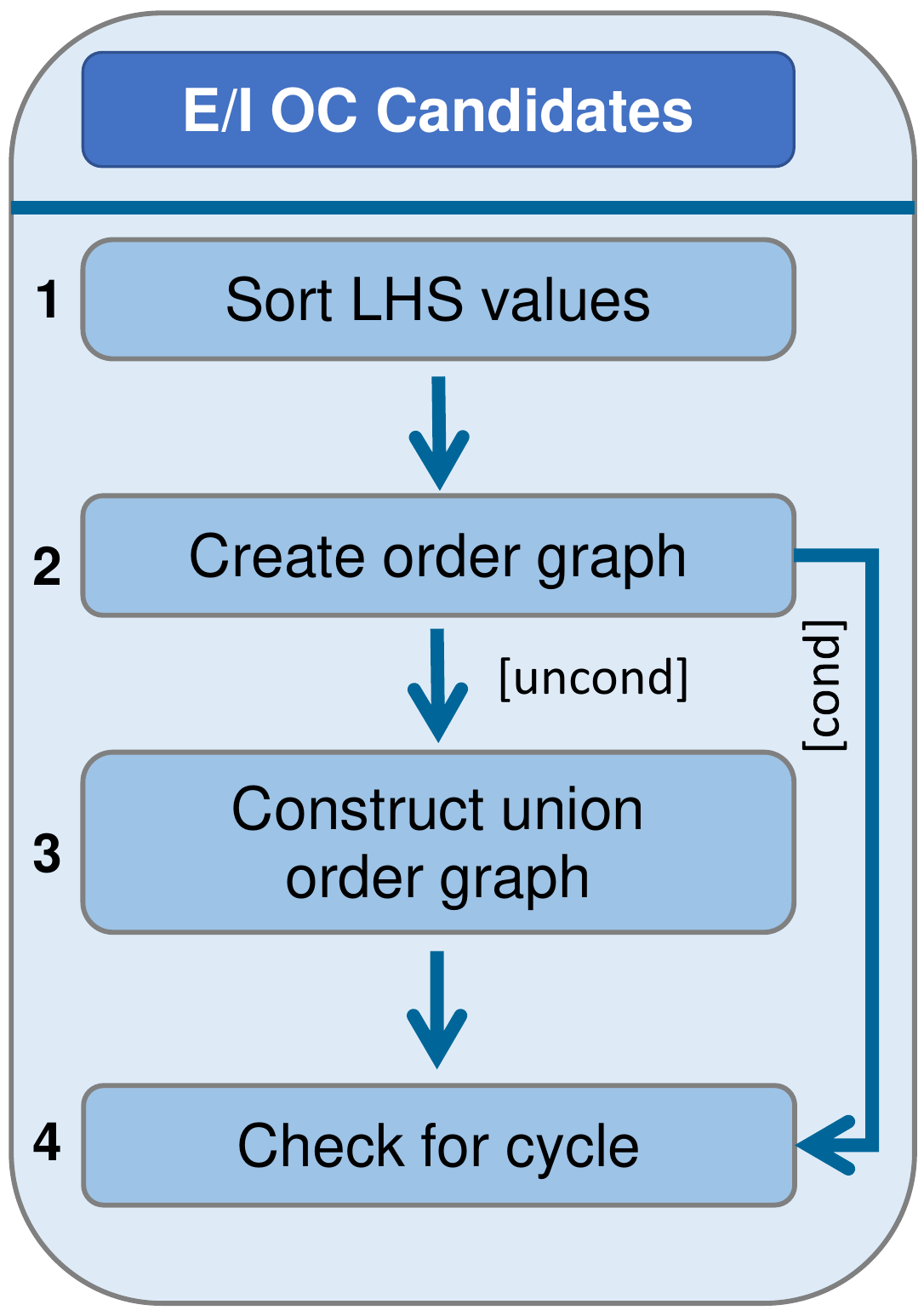}
        % \caption{\label{fig:eioc-num-lunar-2-ec} Partial orders ('19 \& '20).}
        \caption{\label{fig:diagram-ei} \EIOC\ case.}
    \end{subfigure}%
    ~ 
    \begin{subfigure}[]{0.23\textwidth} % boxplot 2
        \centering
        \includegraphics[height=5cm]{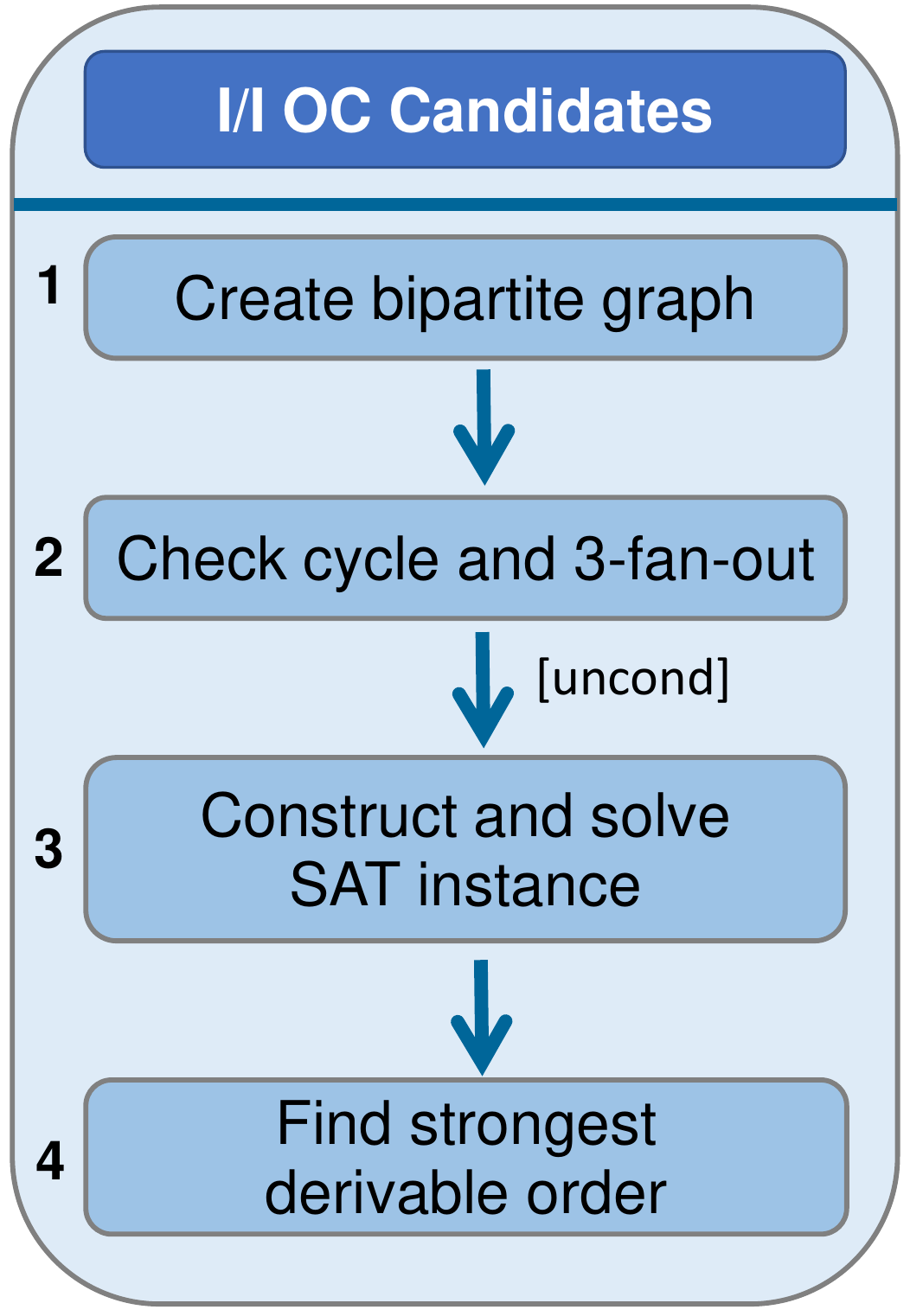}
        \caption{\label{fig:diagram-ii} \IIOC\ case.}
    \end{subfigure}
    \paddingT
    \caption{Discovery process.}
     \paddingD
    \label{fig:diagrams}
    % \vspace{-1mm}
\end{figure}

\begin{example}
The \EIOC\ of \oc{\A{yearLun}}{\A{monthLun}[][*]}\
holds over the first five tuples
in Table~\ref{table:festivals}.
There is a single value for 
%the attribute 
\A{yearLun}.
That there exists a total order
over the tuples with no cycle
over the values of \A{monthLun}\ (``\A{B}'')
means then that there is a valid witness.
% over the values of \A{monthGreg}\ does not cause
% swaps in the table.
%Thus,
%\oc{\A{yearLun}}{\A{monthLun}[][*]}\ holds
%over the first five tuples of Table~\ref{table:festivals}.
%In this case,
However,
since \emph{all} such total orders do not cause cycles, %a cycle,
% (all of them in this case),
we have no information regarding
the order between month values. %,
% besides that they can be arbitrarily ordered here.
Thus,
\(\A{monthLun}[][*] = \emptySet\),
the \emph{empty} partial order.
\end{example}

% \vspace{-4mm}

% Note that,
% while we are interested in values that admit
% a \emph{strong total order},
% we do not limit ourselves to finding
% strong total orders
% in an effort to prevent
% our algorithm from unnecessarily enforcing an order
% over values
% for which we do not have any evidence.
% For example,
% consider the first five tuples
% in Table \ref{table:festivals},
% and the \IIOC\ \oc{\A{yearGreg}}{\A{monthGreg}[][*]}.
% This \IIOC\ clearly holds,
% as the values for the attribute \A{yearGreg}\ are unique,
% and any arbitrary total order
% over the values of \A{monthGreg}\ would not cause
% any swaps in the table.
% However, here,
% we do not have any information regarding the order
% between these months,
% and, therefore,
% would report an \emph{empty} partial order
% over the values of \A{monthGreg}\
% instead of making unsupported assumptions
% about the order.

% In the following sections, we introduce our approach for validating
% a given $\IIOC$ or $\IIOD$ candidate by finding a respective valid
% strong partial order, or by invalidating such candidate if no valid
% order exists.
% \end{change}

%----------------------------------------------------------------------------%

\subsection{E/I OD, \blue{Empty Context}} \label{sec:eioc-with-fd}

\blue{We first consider $\EIOD$s with an empty context; i.e., we are looking to establish
whether there is a \A{B}[][*]\
with respect to \A{A}\
over the whole table $\T{r}$ when}
% When 
there is a functional dependency
% from the ``left-hand side'' to the ``right-hand side''
from one side to the other  
of a candidate.
% pair of attributes
% for being \emph{co-ordered}%
% ---%
% that is,
% an \OC\ between them holds with respect to the data%
% ---%
% we discover an \OD.
% %
When we have an explicit order over one side,
we might discover an implicit order over the other side
by finding an \EIOD\ between them.

Let our pair of attributes be \A{A}\ and \A{B},
assume we have an explicit order over \A{A},
and
we want to discover an implicit order over \A{B}\
(i.e., \A{B}[][*]).
We have three cases for $\FD$s between the pair:
%
%\begin{enumerate}[nolistsep,leftmargin=*]
%\item % first case: FDs both ways
	(1) \fd{\A{A}}{\A{B}}\
	and \fd{\A{B}}{\A{A}};
%\item % second case: A -> B
	(2) \fd{\A{A}}{\A{B}}\
	but \fdN{\A{B}}{\A{A}};
	or
%\item % third case: B -> A
	(3) \fd{\A{B}}{\A{A}}
	but \fdN{\A{A}}{\A{B}}.
%\end{enumerate}
\blue{We devise efficient algorithms for each case.}

%~~~~~~~~~~~~~~~~~~~~~~~~~~~~~~~~~~~~~~~~~~~~~~~~~~~~~~~~~~~~~~~~~~~~~~~~~~~~%
% \noindent
% \textbf{Empty context}.
% Let us initially consider that the \emph{context} is empty.
% That is,
% we are not concerned (yet) with establishing
% whether \A{A}\ and \A{B}\ co-order
% within each partition group with respect to other attributes.
% Rather,
% we are looking to establish
% whether there is a \A{B}[][*]\
% with respect to \A{A}\
% over the whole table \T{r},
% and, if so, what \A{B}[][*]\ is.

The first case above is trivial.
There exists exactly \emph{one}
implicit order over \A{B},
which is a strong total order.
%
% We have an order over \A{A}:
\blue{To discover this order over $\A{B}$,}
sort the table over \A{A},
and
project out \A{B}.
(If \A{A}\ is not a key of the table and may have duplicates,
then \A{B}\ would too;
eliminate these duplicates, which must be adjacent.)
This is \A{B}[][*].
This is unique
with respect to \A{A}\
and is a strong total order.

\begin{example}
Let the attribute \A{monthNum}\ be added
to  Table~\ref{table:festivals}
to denote the Gregorian month of the year in the numeric format.
Then
the \FD s
% \(\fdC{\A{monthNum}}{\A{monthGreg}}\)
\fd{\A{monthNum}}{\A{monthGreg}}\
and 
% \(\fdC{\A{monthGreg}}{\A{mo\-nthNum}}\) hold
\fd{\A{monthGreg}}{\A{monthNum}}\ hold.
Thus,
the \EIOC\ of
\oc{\A{monthNum}}{\A{monthGreg}[][*]}\ is valid
with the implied domain order
\Ord{P}[{\A{monthGreg}[][*]}][\prec]\ of
\(
    \emph{Jan}
        \prec
    \emph{Apr}
        \prec
    \emph{Jun}
        \prec
    \emph{Aug}
        \prec
    \emph{Oct}
        \prec
    \emph{Dec}
\).
\end{example}

%%The second case
%%in which just \fd{\A{A}}{\A{B}}\
%%is not as trivial,
%%but algorithmically is straightforward.
For the second case,
%%there might or might not exist
%%an implicit order over \A{B}.
since \fdN{\A{B}}{\A{A}},
this means some \A{B}\ values are associated
with more than one \A{A}\ value.
% Again,
% sort the table over \A{A},
% and
% project out \A{B}.
% Eliminate \emph{adjacent} \A{B}-value duplicates.
% %
% If any duplicate \A{B}\ values remain
% (any such have to be separated in the list),
% this means there is \emph{no} order over \A{B}\
% that is compatible with \A{A}'s.
% %
% Otherwise,
% as before,
% our sorted table tells us \A{B}[][*].
% This is unique
% with respect to \A{A}\
% and is a strong total order.
% 
% We can do this more efficiently,
% however,
% when \(|\A{B}| \ll |\A{A}|\).
We can partition the tuples of \T{r}\ by \A{B}.
This can be done in \({\mathcal O}(|\T{r}|)\)
via a hash.
Scanning the partition,
we find the \emph{minimum} and \emph{maximum} values
of \A{A}\ within each \A{B}-value group.
Then the \A{B}-value partition groups are
sorted by their associated min-\A{A}'s. 
%% and also
%% sorted by their associated max-\A{A}'s.
% Since 
If
\(|\A{B}| \ll |\A{A}|\ (\approx |\T{r}|)\),
this is less expensive than sorting by \A{A}.
If the intervals of the values of $\A{A}$ co-occurring with each value of $\A{B}$ do not \emph{overlap}, 
% If the two sorted lists of \T{B}-values
% are the same,
then this is \A{B}[][*].
To formalize this,
let us define the notion of an \emph{interval partitioning}.

%%% Talking about "sorting" B here is confusing. [RK]
% We can do this more efficiently,
% however,
% when \(|\A{A}| \gg |\A{B}|\).
% We can sort the tuples just by \A{B};
% this does not \emph{fully} sort the table,
% since \(|\T{r}| \gg |\A{B}|\),
% and so can be less expensive than sorting by \A{A}.
% Then scan the sorted results,
% recording the \emph{minimum} and \emph{maximum}
% values of \A{A}\ seen within each \A{B}-value group.
% From this,
% we can determine whether we have \A{B}[][*].
% %
% To do this,
% let us define the notion of an \emph{interval partitioning}.

\begin{definition}
Let
\(
	(\pi_{\A{B}})_{\A{A}}
		=  
    \{\set{E}_{1}(\tup{t}_{\A{B}})_{\A{A}},
	  \set{E}_{2}(\tup{t}_{\A{B}})_{\A{A}}, 
		  \ldots,
      \set{E}_{k}(\tup{t}_{\A{B}})_{\A{A}}
	\}
\)
be the \emph{partitioning of} an attribute \A{A}\
\emph{by} an attribute \A{B}.
Call the partitioning an \emph{interval partitioning}
\emph{iff} there does not exist \(i, j \in [1,\ldots,k]\)
such that $i<j$ and 
\(
	\mbox{min}(\set{E}_{i}(\tup{t}_{\A{B}})_{\A{A}})
		\prec
	\mbox{min}(\set{E}_{j}(\tup{t}_{\A{B}})_{\A{A}})
		\prec
	\mbox{max}(\set{E}_{i}(\tup{t}_{\A{B}})_{\A{A}})
\).
\end{definition}

%Intuitively, the existence of 
An interval partitioning allows us to
%The definition says equivalently that we can 
separate the \emph{ranges} of \A{A}\ values w.\ r.\ t.\ the \A{B}\ values
such that the ranges do not overlap.

%%% This example is effectively the same as the next example. [PG]
% \begin{example}
% In Table \ref{table:festivals},
% the partitioning \((\pi_{\A{size}})_{\A{count}}\)
% of the attribute $\A{count}$
% by the attribute $\A{size}$
% is an interval partitioning.
% % since there does not exist
% % a partition group
% % with its minimum value
% % between the minimum and maximum values
% % of some other partition group in
% % \((\pi_{\A{size}})_{\A{count}}\).
% \end{example}

% This was changed from a lemma to a theorem, since the corresponding theorem for non-empty context was removed from this section (as it was the same as section 3.4) [Reza]
\begin{theorem}
\blue{Assume the \FD\ of \fd{\A{A}}{\A{B}}\ holds in \T{r}.
% \footnote{%
%     Recall that \(\fdC{\A{A}}{\A{B}}\) is
%     the ``\OD'' representation
%     of the \FD\ \fd{\A{A}}{\A{B}}.
% }
Then
\(\simular{\A{A}}{\A{B}^{*}}\) holds
\emph{iff}
\((\pi_{\A{B}})_{\A{A}}\)
is an interval partitioning;
\Ord{T}[{\A{B}_{*}}][\prec]\
is the unique order of \A{B}'s values
corresponding to their order
in \T{r}\ as sorted by \A{A}.}
% in the sorted partition \(\tau_{\A{A}}\).
\end{theorem}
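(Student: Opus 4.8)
The plan is to lean on the decomposition $\OD \equiv \OC + \FD$ (Definition~\ref{def:canonicalODs}): since \fd{\A{A}}{\A{B}}\ is assumed, deciding the \EIOD\ reduces to deciding whether \emph{some} valid implicit order \A{B}[][*]\ makes $\simular{\A{A}}{\A{B}^{*}}$ hold, and then exhibiting it. The key structural fact I would use is that \fd{\A{A}}{\A{B}}\ forces the $\A{A}$-value sets $\set{E}_{i}(\tup{t}_{\A{B}})_{\A{A}}$ of distinct $\A{B}$-groups to be pairwise \emph{disjoint}, and that sorting \T{r}\ by the explicit order of $\A{A}$ yields a single well-defined sequence of $\A{B}$-values, since tuples tied on $\A{A}$ carry the same $\A{B}$-value by the \FD. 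Writing $m_{i}$, $M_{i}$ for the minimum and maximum $\A{A}$-values of group $i$ and $b_{i}$ for its $\A{B}$-value, the question collapses to whether each $\A{B}$-value occupies a contiguous block of this sorted sequence, and I would show that this contiguity is exactly the interval-partitioning condition.

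For the ($\Leftarrow$) direction I would assume $(\pi_{\A{B}})_{\A{A}}$ is an interval partitioning, with the groups indexed by increasing $m_{i}$ as in the sentence preceding the statement. Combining the defining inequality with disjointness of the $\A{A}$-sets strengthens it to $M_{i} \prec m_{i+1}$ for every $i$, so the ranges are totally separated and all tuples of group $1$ precede those of group $2$ in $\A{A}$-order, and so on. I would then let \A{B}[][*]\ be the strong total order $b_{1} \prec b_{2} \prec \cdots \prec b_{k}$ and verify directly that the $\A{A}$-sorted sequence is a swap-free witness: whenever $\tup{s} \prec_{\A{A}} \tup{t}$, their groups satisfy $i \le j$, hence $\tup{s}[\A{B}] \preceq_{\A{B}^{*}} \tup{t}[\A{B}]$. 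For the uniqueness clause, I would note that any witness order consistent with $\A{A}$ permutes tuples only within equal-$\A{A}$ blocks, which are $\A{B}$-constant by the \FD; thus every witness induces the \emph{same} strong total order over $\A{B}$, and the intersection defining the strongest derivable order collapses to this single order.

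For the ($\Rightarrow$) direction I would argue the contrapositive. If $(\pi_{\A{B}})_{\A{A}}$ is not an interval partitioning, there exist $i < j$ with $m_{i} \prec m_{j} \prec M_{i}$, realized by tuples $\tup{p}, \tup{q}, \tup{r}$ having $\tup{p}[\A{A}] = m_{i}$, $\tup{q}[\A{A}] = m_{j}$, $\tup{r}[\A{A}] = M_{i}$ and $\tup{p}[\A{B}] = \tup{r}[\A{B}] = b_{i} \ne b_{j} = \tup{q}[\A{B}]$. Since $\tup{p} \prec_{\A{A}} \tup{q} \prec_{\A{A}} \tup{r}$, any candidate order \A{B}[][*]\ that avoids a swap would need both $b_{i} \prec_{\A{B}^{*}} b_{j}$ (from $\tup{p}, \tup{q}$) and $b_{j} \prec_{\A{B}^{*}} b_{i}$ (from $\tup{q}, \tup{r}$), a contradiction; hence no valid \A{B}[][*]\ exists and $\simular{\A{A}}{\A{B}^{*}}$ fails. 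I expect the main obstacle to be the bookkeeping that upgrades the pairwise min/max condition to global block-contiguity, and in particular the use of \FD-induced disjointness to exclude boundary coincidences such as $m_{j} = M_{i}$; this disjointness is precisely what makes the derived order \emph{strong} and unique here, unlike the merely weak order obtainable in the general \EIOC\ setting without an \FD.
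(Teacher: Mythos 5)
Your proposal is correct and follows essentially the same route as the paper's proof: the forward direction exhibits the $\A{A}$-sorted sequence as a witness whose projection gives the unique strong total order on $\A{B}$ (with uniqueness forced by the explicit order on $\A{A}$ and the \FD), and the reverse direction uses the violating triple $(m_i, m_j, M_i)$ to derive the contradictory pair $b_i \prec b_j$ and $b_j \prec b_i$. Your explicit use of \FD-induced disjointness to upgrade the pairwise min/max condition to $M_i \prec m_{i+1}$ is a small added rigor over the paper's WLOG block layout, but not a different argument.
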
 

\begin{example}
Consider
attributes 
$\A{count}$ and $\A{size}$
in Table~\ref{table:festivals}.
The $\FD$ $\fd{\A{count}}{\A{size}}$ holds;
\((\pi_{\A{size}})_{\A{count}}\) is
an interval partitioning
with
\(
	\tau_{\A{count}}
		=
	[ \tup{t}[10],
	  \tup{t}[9],
	  \tup{t}[3],
	  \tup{t}[7],
	  \tup{t}[8],
	  \tup{t}[2],
	  \tup{t}[6],
	  \tup{t}[5],
	  \tup{t}[4],
	  \tup{t}[1] ]
\).
Thus,
the implied domain order
\Ord{T}[{\A{size}_{*}}][\prec]\ is
\(
	\emph{Small}
	\prec
	\emph{Medium}
	\prec
	\emph{Large}
	\prec
	\emph{X-Large}
\), as per the \OC\
\oc{\A{count}}{\A{size}[][*]};
i.e, the  \OD\ of
\od{\A{count}}{\A{size}[][*]}\ holds.
% given the ``\(\OD = \FD + \OC\)'' rule.
\end{example}

The third case looks like the second case,
\emph{except} the explicit order known
over \A{A}\ is now
on the right-hand side
of our \FD, \fd{\A{B}}{\A{A}}. %,
% while we want to determine an implicit order
% over \A{B}\
% on its left-hand side.
% That it looks like the second case is a bit misleading.
% This time,
% there is no straightforward approach
% as in the second case
% to sort by \A{B};
% we do not know \A{B}'s order!
We can take a similar interval-partitioning approach as before. %,
% however.
\emph{Sort} the table \T{r}\ by \A{A}.
If \(|\A{A}| \ll |\T{r}|\),
this is more efficient than fully sorting \T{r}.
This computes the sorted partition
% \(\tau_{\Lst{\A{A}}}\).
\(\tau_{\A{A}}\).
The \A{A}\ values \emph{partition} the \A{B}\ values,
since \fd{\A{B}}{\A{A}}
and \(\tau_{\A{A}}\) orders these \emph{groups} of \A{B}\ values.
Since there are multiple \A{B}\ values
in some of the partition groups
of \(\tau_{\A{A}}\),
given that
\fdN{\A{A}}{\A{B}},
this does \emph{not}
determine an order
over \A{B}\ values
within the same group.
Thus
the \A{B}[][*]\ implied
by \(\tau_{\A{A}}\)
is not a \emph{strong} total order,
but it is a \emph{weak} total order. 
% \blue{However, since we can easily convert this to a strong total order, this $\IIOC$ candidate and the output order are valid.}

\begin{example}
Let the attribute \A{quarter}\ be added
to  Table~\ref{table:festivals}
to denote the year quarter;
i.e., \emph{Q1}, \emph{Q2}, \emph{Q3}, and \emph{Q4}.
The \FD\ of
\fd{\A{monthGreg}}{\A{quarter}}\ holds
as the Gregorian months perfectly align with the quarters.
Thus,
the \EIOC\
\oc{\A{quarter}}{\A{monthGreg}[][*]}\ holds;
\(\A{monthGreg}[][\tau_{\A{quarter}}]\)
is a weak total order
with 
    \brac{\emph{January}}
        $\prec$
    \brac{\emph{April}, \emph{June}}
        $\prec$
    \brac{\emph{August}}
        $\prec$
    \brac{\emph{October}, \emph{December}}.
Between months within each quarter,
we cannot infer any order.
\end{example}

\noindent
Let 
the $\FD$ be 
\fd{\A{A}}{\A{B}},
\(m = |\A{B}|\) (the number of distinct values of \A{B}),
and
\(n = |\T{r}|\) (the number of tuples).
In practice,
it is common that \(m \ll n\).

\begin{lemma}
\blue{The runtime of discovering $\EIOD$s with empty % and non-empty  
context is ${\mathcal O}(m\ln m + n)$.}% and $O(n\ln n + m^2)$, respectively.}
\end{lemma}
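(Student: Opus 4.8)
The plan is to bound the runtime by walking through the algorithm of Section~\ref{sec:eioc-with-fd} and observing that, under the assumed \FD{\A{A}}{\A{B}}, the work cleanly separates into a \emph{linear} pass over the $n$ tuples and a \emph{sort of only the $m$ distinct $\A{B}$-value groups}, never of the tuples themselves. First I would partition the tuples of $\T{r}$ by $\A{B}$ using a hash table; each tuple is touched once, so this produces the $m$ groups of $(\pi_{\A{B}})_{\A{A}}$ in ${\mathcal O}(n)$ time. Second, in the same scan I would compute, for each $\A{B}$-group, the minimum and maximum of its co-occurring $\A{A}$-values under $\A{A}$'s explicit order; since each $\tup{t}$ contributes to exactly one group and each update is ${\mathcal O}(1)$, this is again ${\mathcal O}(n)$.

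Next I would sort the $m$ group representatives by their associated $\min$-$\A{A}$ value, at cost ${\mathcal O}(m \ln m)$. This is the only super-linear step in the number of distinct values, and it is precisely where the $m \ln m$ term arises; critically, the tuple count $n$ enters only through the two linear passes. Finally I would verify the interval-partitioning condition. Although its definition quantifies over all pairs $i<j$ and thus reads as ${\mathcal O}(m^2)$, once the groups are sorted by $\min$-$\A{A}$ it collapses to a single ${\mathcal O}(m)$ sweep that keeps the running maximum of the $\max$-$\A{A}$ values seen so far and flags an overlap whenever the next group's $\min$-$\A{A}$ drops below it. By the theorem above, if the check passes the sorted group order is exactly $\A{B}^{*}$, and otherwise no valid $\A{B}^{*}$ exists. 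Summing gives ${\mathcal O}(n) + {\mathcal O}(n) + {\mathcal O}(m \ln m) + {\mathcal O}(m) = {\mathcal O}(m \ln m + n)$.

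I would then note that the case where both \fd{\A{A}}{\A{B}} and \fd{\A{B}}{\A{A}} hold is subsumed: each $\A{B}$-group then carries a single $\A{A}$-value, so $\min=\max$, the overlap test is vacuous, and the same sort of the $m$ groups directly yields the strong total order within the same bound. The symmetric third case (\fd{\A{B}}{\A{A}} with the explicit order on the right-hand side) is identical with the roles of $\A{A}$ and $\A{B}$ interchanged, sorting the groups of $\tau_{\A{A}}$ and giving ${\mathcal O}(|\A{A}| \ln |\A{A}| + n)$.

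The main obstacle is simply to ensure the analysis does not secretly smuggle in an ${\mathcal O}(n \ln n)$ sort of the tuples, nor evaluate interval partitioning by its quadratic definition. Both are resolved by the same observation: per-group $\min/\max$ extraction is linear, and after sorting the $m$ representatives by $\min$-$\A{A}$, overlap detection is one linear scan. I would also state explicitly that the bound assumes comparisons in $\A{A}$'s explicit domain order cost ${\mathcal O}(1)$, the standard unit-cost model used throughout the paper.
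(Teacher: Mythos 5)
Your proposal is correct and follows essentially the same route as the paper's own proof: a linear hash-based pass over the $n$ tuples to form the $\A{B}$-groups and their min/max $\A{A}$-values, a sort of only the $m$ distinct values costing ${\mathcal O}(m\ln m)$, and linear work for everything else, handled across the three \FD\ cases. Your added detail that the interval-partitioning check collapses to a single linear sweep after sorting (rather than the quadratic reading of its definition) is a useful elaboration the paper leaves implicit, but it is the same argument.
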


Proofs and pseudocode can be found in the appendix in Section~\ref{sec:appendix}.

\subsection{E/I OC, Empty Context} \label{sec:eioc-empty-context}

% new algorithm, the previous doesn't seem correct [Reza]
We next consider \EIOC s
with an empty context
in the form of
\(\simular{\A{A}}{\A{B}[][*]}\).
\blue{Similar to the previous section, the goal is to verify whether there is 
an order over the values of $\A{B}$ with respect to the order over the values of $\A{A}$. %,
% with the difference that no $\FD$ holds between $\A{A}$ and $\A{B}$.
% To do so, we infer an order relation between every two distinct values of $\A{B}$ 
% which co-occur with two \emph{consecutive} values of $\A{A}$;
% i.e., if $a_k$ $\prec$ $a_{k+1}$ are consecutive values of $\A{A}$ and 
% there are tuples $(a_k, b_i)$ and $(a_{k+1}, b_j)$, then $b_i$ $\prec$ $b_j$.
Using the sorted partitions of $\tau_{\A{A}}$, we
infer the order $b_i$ $\prec$ $b_j$ for every two distinct values of $\A{B}$ 
which co-occur with two \emph{consecutive} partition groups of $\A{A}$.
Let $\A{B}[][{\tau_{\A{A}}}]$ denote the set of these inferred relations over $\A{B}$.
We next check whether $\A{B}[][{\tau_{\A{A}}}]$ is a valid 
% partial 
weak total
order;
if it is so, $\simular{\A{A}}{\A{B}[][*]}$ is a valid $\EIOC$ and 
$\A{B}[][*] \equiv \A{B}[][{\tau_{\A{A}}}]$.}
\diagram{Figure~\ref{fig:diagram-ei} demonstrates these steps.}

\begin{theorem}\label{thm:emptyContEI}
\blue{$\simular{\A{A}}{\A{B}[][*]}$ is 
% a valid $\EIOC$ 
valid
\emph{iff}
$\A{B}[][{\tau_{\A{A}}}]$ is a 
% partial 
weak total
order.}
\end{theorem}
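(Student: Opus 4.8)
The plan is to recast validity of $\simular{\A{A}}{\A{B}[][*]}$ through the no-swap reading of order compatibility (Definition~\ref{definition:swap}): $\simular{\A{A}}{\A{B}[][*]}$ holds exactly when there is a \emph{witness} total order on the tuples of $\T{r}$ that refines $\tau_{\A{A}}$, places the tuples of each $\A{B}$-value contiguously (so that its projection on $\A{B}$ is a genuine order), and contains no swap against $\A{A}$. Because tuples inside one partition group of $\A{A}$ are $\A{A}$-incomparable, the only binding requirement such a witness must meet is: whenever a group $\set{E}_i$ precedes a group $\set{E}_j$ in $\tau_{\A{A}}$ and $\A{B}$-values $b\neq c$ occur in $\set{E}_i$ and $\set{E}_j$ respectively, the induced order over $\A{B}$ must put $b$ before $c$. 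I would first note that $\A{B}[][{\tau_{\A{A}}}]$ records exactly these constraints for \emph{consecutive} groups, so the whole statement reduces to two facts: (i) $\simular{\A{A}}{\A{B}[][*]}$ is valid iff $\A{B}[][{\tau_{\A{A}}}]$ is acyclic, and (ii) $\A{B}[][{\tau_{\A{A}}}]$ is acyclic iff it is a weak total order (Definition~\ref{DEF/Weak Total Order}).

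For (i), the backward direction is constructive: assuming acyclicity, take any linear extension of $\A{B}[][{\tau_{\A{A}}}]$, sort the tuples by their $\A{B}$-value under this extension (breaking ties among equal $\A{B}$-values arbitrarily), and check that this witness is $\A{B}$-contiguous by construction and swap-free, using the key sub-lemma below that consecutive constraints transitively entail all non-consecutive ones. The forward direction reads the induced order off any valid witness: it is a weak total order over $\A{B}$, every consecutive constraint holds in it by no-swap on distinct values, so $\A{B}[][{\tau_{\A{A}}}]$ is contained in it and hence acyclic.

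The technical heart, and the step I expect to be the main obstacle, is the structural sub-lemma underlying both (i) and (ii): \emph{if $\A{B}[][{\tau_{\A{A}}}]$ is acyclic, then each $\A{B}$-value occupies a contiguous block of groups in $\tau_{\A{A}}$.} The point is that a value that left and later re-entered the sequence of groups would be forced both before and after the intervening values, producing a cycle; I would prove this by induction on the gap between two occurrences, peeling off the first ``exit'' and first ``re-entry'' and exhibiting the cycle. Contiguity then immediately gives the Lemma used in (i) (any cross-group pair $b\neq c$ with $\set{E}_i \prec \set{E}_j$ is linked by a consecutive chain), and it gives (ii): two distinct values are incomparable in $\A{B}[][{\tau_{\A{A}}}]$ only when both are confined to a single common group, and ``confined to the same single group'' is a transitive relation, so incomparability is transitive and the acyclic relation is a genuine weak total order rather than merely a subrelation of one. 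Chaining (i) and (ii) yields the theorem, and the same intersection-of-witnesses argument identifies $\A{B}[][{\tau_{\A{A}}}]$ as the strongest derivable order $\A{B}[][*]$.
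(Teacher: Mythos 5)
Your argument is correct in substance and lands on the same two pillars as the paper's proof --- build a witness from a linear extension of $\A{B}[][{\tau_{\A{A}}}]$ in one direction, and read the consecutive-group constraints off an arbitrary witness in the other --- but it is organized differently. You factor the theorem through acyclicity: (i) validity iff $\A{B}[][{\tau_{\A{A}}}]$ is acyclic, and (ii) acyclicity of this particular relation already forces it to be a weak total order. The paper never isolates (ii); instead its ``valid $\Rightarrow$ weak total order'' direction extracts the weak-order structure directly from the witness (consecutive groups share at most one $\A{B}$-value, the sets $\A{B}'_i$ of values private to group $i$ are pairwise disjoint, and these together with the overlap values form the partition classes). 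Your central tool, the contiguity lemma (each $\A{B}$-value occupies a contiguous block of groups in $\tau_{\A{A}}$, else a cycle arises), plays the role that the paper assigns to an induction on the gap between partition groups: both establish that non-consecutive constraints are transitively entailed by consecutive ones. Your version buys a cleaner combinatorial characterization of incomparability (two values are incomparable iff both are confined to the same single group), which is what makes step (ii) go through; the paper's version avoids having to reason about the inferred relation in isolation by always carrying the witness along. Either route is sound.

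One slip to repair in your constructive direction: after sorting tuples by a linear extension of $\A{B}[][{\tau_{\A{A}}}]$, you break ties among equal $\A{B}$-values \emph{arbitrarily}. When a single $\A{B}$-value spans several consecutive $\A{A}$-groups (which contiguity permits), an arbitrary tie-break can place a tuple from a later $\A{A}$-group before one from an earlier group, so the resulting order need not respect $\A{A}$'s explicit order and is then not a witness. Break ties by $\A{A}$ instead (equivalently, sort by $\A{A}$ first and use the linear extension only to break $\A{A}$-ties, as the paper does); your own sub-lemma then guarantees the cross-group comparisons are consistent, and the construction is a valid witness.
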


\begin{example}
\label{EX/monthNumLunEmpty}
The \EIOC\
%\(\simular{\A{monthNum}}{\A{monthLun}[][{\tau_{\A{monthNum}}}]}\)
\(\simular{\A{monthNum}}{\A{monthLun}^*}\)
does not hold
in Table~\ref{table:festivals} %.
% because the alphabetical order
% over \A{month}\-\A{Greg}\ does not provide
% an order that is compatible.
%
% Our algorithm thus invalidates this candidate
since the lunar month \emph{Winter}
co-occurs both
with December and January,
which have numerical ranks of $6$ and $1$ in the table,
% respectively. 
\blue{resulting in \A{monthLun}[][{\tau_{\A{monthNum}}}] being invalid.}
% Thus,
% it is impossible to assign an order over these values
% without forcing a swap in the table.
% %%% Someone added this below, but I am not certain what the purpose is. [PG]
% % However, \EIOC\
% % \simular{\A{monthGreg}}{\A{monthLun}[][{\tau_{\A{monthGreg}}}]}\ holds
% % over tuples $\tup{t1}$--$\tup{t5}$
% % in Table \ref{table:festivals}.

%%% \todo{TODO Reza: provide justification.}
\end{example}

% \subsubsection{Validation of a Single $SemOC_{sc}$}

% Note that
% \(\Ord{P}[{\A{B}[][{\tau_{\A{A}}}]}][\prec]\)
% is \({\cal O}(|\T{r}|^{2})\).
% What we discover when we find $\A{B}[][{\tau_{\A{A}}}]$,
% however,
% is \emph{not}.
% Rather,
% we have found a \emph{directed acyclic graph} (DAG)
% for our ``\(\prec\)''
% which is \emph{linear} in the size of the table,
% \({\cal O}(|\T{r}|)\).
% \(\Ord{P}[{\A{B}[][{\tau_{\A{A}}}]}][\prec]\)
% represents the \emph{transitive closure} over this DAG.
% We shall take the liberty
% of saying we have found a strong partial order,
% but we are always representing this as a DAG
% that is \({\cal O}(|\T{r}|)\).

% Testing a candidate \EIOC\
% \(\simular{\A{A}}{\A{B}[][*]}\)
% with an empty context
% can be done
% by finding
% \(\simular{\A{A}}{\A{B}[][{\tau_{\A{A}}}]}\),
% then checking that \A{A}\
% and $\A{B}[][{\tau_{\A{A}}}]$
% can be co-ordered in \T{r}.

% % Our approach for $\EIOC$s with an empty context involves ordering
% % the values of the $\OC$ right-hand-side attribute by the explicit
% % order of the $\OC$ left-hand-side attribute.
% Since the attributes
% in the \OD-discovery algorithm 
% in Section~\ref{sec:prelims} 
% have been sorted in advance
% for the first level of the lattice,
% checking for swaps that would violate the co-ordering
% requires a single scan over the sorted \T{r}.
% Letting \(n = |\T{r}|\),
% the runtime is \(O(n)\).

% \blue{In the rest of this section, let $n$ and $m$ denote the number of tuples and 
% the number of unique values on the side with implicit order.}
\blue{In the following sections, $n$, $m$, and $p$ denote
the number of tuples, the number of distinct values of the candidate attribute(s) with 
implicit order, and the number of partition groups of the context.}

% For the new algorithm [Reza]
\begin{lemma}
\blue{The runtime of discovering $\EIOC$s with empty context is ${\mathcal O}(n + m^2)$, given an initial sorting of the values in the first level of the lattice.}
\end{lemma}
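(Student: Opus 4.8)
The plan is to isolate the one-time cost that the hypothesis already excludes (the sort of the domain values at the first level of the lattice) and then charge all remaining per-candidate work to (i) a single linear scan of the tuples and (ii) a pairwise comparison over only the $m$ distinct values of $\A{B}$. Concretely, I would show that validating $\simular{\A{A}}{\A{B}[][*]}$ and producing $\A{B}[][*]$ reduces to computing, per value of $\A{B}$, a constant-size summary in one pass, after which no further dependence on $n$ enters and the rest is quadratic in $m$.

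First I would note that, because the values in the first level are already sorted, the sorted partition $\tau_{\A{A}}$ is available (or obtainable by one grouping pass) without re-incurring the $\mathcal{O}(n\ln n)$ sort. Reading $\tau_{\A{A}}$ from left to right in a single pass, I would maintain for each distinct value $b$ of $\A{B}$ two counters: the index $\mathrm{first}(b)$ of the earliest $\A{A}$-partition group in which $b$ occurs and the index $\mathrm{last}(b)$ of the latest. Each tuple updates at most one such entry (located through a hash keyed by the $\A{B}$-value) in $\mathcal{O}(1)$ time, so this stage is $\mathcal{O}(n)$ overall.

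Next I would argue that these per-value summaries already determine the inferred relation $\A{B}[][{\tau_{\A{A}}}]$. The order $b_i \prec b_j$ forced by consecutive co-occurrence, once closed under the transitivity required of a derived order, holds exactly when every occurrence of $b_i$ precedes every occurrence of $b_j$, i.e.\ when $\mathrm{last}(b_i) < \mathrm{first}(b_j)$; two values are left incomparable precisely when their group-index ranges $[\mathrm{first}(b),\mathrm{last}(b)]$ overlap. Hence I never materialize the individual co-occurrences, whose number may grow with $n$ or with the number of $\A{A}$-groups. Examining all $\binom{m}{2}$ unordered pairs of $\A{B}$-values and checking, via an $m \times m$ table, both that the strict part is acyclic and that incomparability is transitive—the defining conditions for $\A{B}[][{\tau_{\A{A}}}]$ to be a weak total order, as required by Theorem~\ref{thm:emptyContEI}—costs $\mathcal{O}(m^2)$. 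If the test succeeds, reading off the linear order of the resulting equivalence classes yields $\A{B}[][*]$ within the same bound, so the total is $\mathcal{O}(n) + \mathcal{O}(m^2) = \mathcal{O}(n + m^2)$.

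The step I expect to be the main obstacle is precisely the $\mathcal{O}(m^2)$ bound on constructing and verifying $\A{B}[][{\tau_{\A{A}}}]$: a naive accumulation that iterates over each pair of consecutive $\A{A}$-groups and emits all cross-group $\A{B}$-pairs can blow up well beyond $m^2$ when many groups repeat the same $\A{B}$-values. The crux of the proof is therefore to justify that the summary $(\mathrm{first}(b),\mathrm{last}(b))$ captures all and only the information needed to decide each forced relation \emph{and} its weak-total-order validity, so that the quadratic term depends solely on the number of distinct values $m$ and not on $n$ or on the number of partition groups.
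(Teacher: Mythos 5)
Your overall accounting ($\mathcal{O}(n)$ for one pass over the tuples plus $\mathcal{O}(m^2)$ for pairwise work on the $m$ distinct $\A{B}$-values) has the right shape, but the reduction that carries the whole argument is incorrect. You claim that the transitive closure of $\A{B}[][{\tau_{\A{A}}}]$ is determined by the per-value summaries $(\mathrm{first}(b),\mathrm{last}(b))$: ordered exactly when the group-index intervals are disjoint, incomparable exactly when they overlap. That is the interval-partitioning criterion from the \EIOD\ case of Section~\ref{sec:eioc-with-fd}, where \fd{\A{A}}{\A{B}}\ guarantees that each $\A{A}$-group meets only one $\A{B}$-value; it does not transfer to the \OC\ case of this lemma, where a single $\A{A}$-group may contain several $\A{B}$-values. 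Concretely, take two $\A{A}$-groups with tuples $(a_1,b_1),(a_2,b_1),(a_2,b_2)$: the paper's rule infers $b_1 \prec b_2$, and this is indeed the strongest derivable order (the only witness sorted by $\A{A}$ projects to $b_1,b_1,b_2$), yet the intervals $[1,2]$ and $[2,2]$ overlap, so your test declares the pair incomparable and outputs a strictly weaker $\A{B}^*$ than Theorem~\ref{thm:emptyContEI} prescribes. Worse, with tuples $(a_1,b_1),(a_1,b_2),(a_2,b_1),(a_2,b_2)$ the inferred relation contains both $b_1 \prec b_2$ and $b_2 \prec b_1$, so the candidate is invalid (no witness exists), but both intervals equal $[1,2]$; your summary sees only ``overlap, hence incomparable'' and accepts. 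The summaries therefore discard exactly the information needed both for validity and for the strongest derivable order.

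The obstacle you correctly identified --- that naively emitting all cross-pairs between consecutive $\A{A}$-groups could exceed $\mathcal{O}(m^2)$ --- is handled differently in the paper: the no-3-fan-out/acyclicity precondition on the singleton-reduced bipartite co-occurrence graph (Theorem~\ref{theo:bipValidity}), which applies a fortiori to the more restrictive \EIOC\ case, bounds the number of distinct inferred relations by $\mathcal{O}(m^2)$ and the work per group boundary by a constant; deduplicating into an $m\times m$ adjacency structure and checking the weak-total-order property then stays within $\mathcal{O}(n+m^2)$. To repair your proof you would need to retain the actual consecutive-co-occurrence relation rather than an interval summary, and invoke that fan-out bound to control its size.
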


%% For previous algorithm (?) [Reza]
% \begin{lemma}
% \blue{The runtime of discovering $\EIOC$s with empty context is $O(n)$, given an initial sorting of the values in the first level of the lattice.}
% \end{lemma}

%% Removed to save space for the revision
% By this technique,
% we may discover an implicit order, \A{B}[][*]\
% over the range of \A{B}'s values 
% from an explicit order that we know for \A{A}.
% % This time,
% % too,
% % the order found is a weak or strong \emph{total} order, which can be stored using $O(m)$ space.

% Whether the discovered domain order is rich enough
% to be of interest is a question.
% (Of course,
% even total orders that we discover may not be relevant,
% realistic, or ``interesting''.)
% We address this
% in Section \ref{sec:measure-interestingness-def}.

% By this technique,
% we may discover an implicit order, \A{B}[][*]\
% over the range of \A{B}'s values 
% from an explicit order that we know for \A{A}.
% This time,
% though,
% the order found may be weaker than a weak total order;
% if found,
% it is a strong partial order.
% Whether the discovered domain order is rich enough
% to be of interest is a question.
% (Of course,
% even total orders that we discover may not be relevant,
% realistic, or ``interesting''.)
% We address this
% in Section \ref{sec:measure-interestingness-def}.

%----------------------------------------------------------------------------%

% \vspace{-2mm}
\subsection{\blue{E/I OD and} E/I OC, Non-empty Context} \label{section:nonemptyEI}

\begin{figure}[t]
% \label{figure:biGraph}
    \centering
    \begin{subfigure}[]{0.2\textwidth} % boxplot 1
        \centering
        \includegraphics[height=3.5cm]{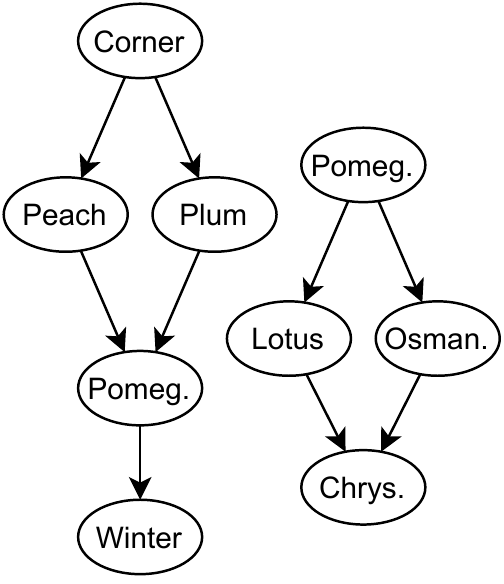}
        % \caption{\label{fig:eioc-num-lunar-2-ec} Partial orders ('19 \& '20).}
        \caption{\label{fig:eioc-num-lunar-2-ec} Order in partition groups.}
    \end{subfigure}%
    ~ 
    \begin{subfigure}[]{0.2\textwidth} % boxplot 2
        \centering
        \includegraphics[height=3.5cm]{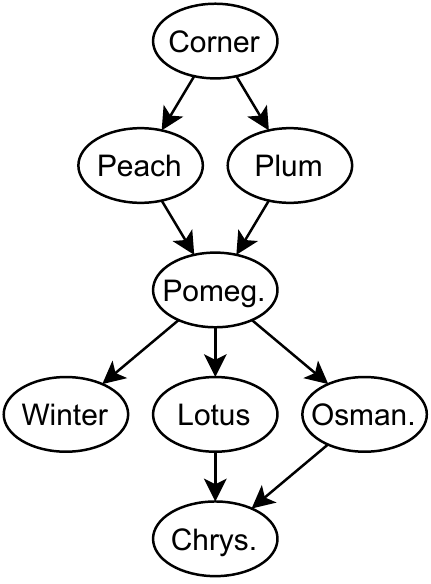}
        \caption{\label{fig:eioc-num-lunar-merged} Union order.}
    \end{subfigure}
    \paddingT
    \caption{Partial and union orders.}
     \paddingD
    \label{fig:GPMonths}
    % \vspace{-1mm}
    
\end{figure}

When the context is non-empty,
say \set{X},
we first consider each partition group
in \(\pi_{\set{X}}\) independently.
This is equivalent,
with respect to each partition group,
to the empty-context $\EIOD$ and $\EIOC$ discoveries above.
\blue{For a candidate $\oc[\set{X}]{\A{A}}{\A{B}[][*]}$, if either of the $\FD$s $\fd{\set{X}\A{A}}{\A{B}}$ or $\fd{\set{X}\A{B}}{\A{A}}$ hold, we can use the algorithms in Section~\ref{sec:eioc-with-fd}.
Otherwise, we use the approach in Section~\ref{sec:eioc-empty-context}.
}
If an implicit order is discovered within \emph{each} partition group,
\blue{then the \emph{conditional} $\EIOC$ (or $\EIOD$) holds. To verify the unconditional case,}
we take the \emph{union} of those orders%
---%
% each of which represents a strong partial order%
each of which represents a weak 
% or strong 
total order%
---%
\blue{by including the edge $(a, b)$ in the \emph{union graph} \emph{iff} this edge exists in at least one of the individual orders}, 
and test whether this \emph{union graph} represents
a strong partial order
(i.e., is cycle free).
If so,
we have established an \blue{unconditional} \A{B}[][*]\
in the context of \set{X}.
%If not, \blue{we need to be content with a \emph{conditional} $\EIOC$, as the orders from each partition group are not consistent with one another.}
% there is no such \A{B}[][*].
% This is the \emph{unconditional} \EIOC\ of
% \oc[\set{X}]{\A{A}}{\A{B}[][*]};
% the discovered implicit order 
% %we discovered for \A{B}\
% is the same within each partition group. 
% %of \set{X}.
\diagram{These steps are included in Figure~\ref{fig:diagram-ei}.
Note the third step on constructing the union order graph, which is not necessary 
for conditional $\EIOC$s.}

% The general case of $\EIOC$ is to take into account non-empty
% contexts with multiple partition groups.
% To handle this case,
% we first consider each parition group in $\pi_{\set{X}}$ separately
% and use the approach described in Sec.~\ref{sec:eioc-empty-context}
% to establish an implicit order for each one of them
% (Theorem~\ref{thm:emptyContEI}).
% If such an implicit order exists for each parition group,
% then the union of all partial orders is taken
% to verify if they are consistent over the entire table.
% The order is valid if the \emph{final order-graph},
% with nodes representing
% values and edges corresponding to ordering directions, is cycle-free.

% \todo{TODO Reza: provide a figure with an example for the order-graph corresponding to Example 10 and Table 1. Assume monthNum is monthGreg converted to number, i.e. Jan $->$ 1 etc.}

% \vspace{-2mm}
\begin{theorem}
There exists an implicit domain order
\(\Ord{P}[{\A{B}[][*]}][\prec]\)
such that the \EIOC\
\(\simularCtxSet{\set{X}}{\A{A}}{\A{B}[][*]}\) holds
\emph{iff}
the union graph is cycle free.
\end{theorem}

% We iterate over each of the derived strong partial orders $g_i$ in
% $\Ord{P}[{\set{D}}][\prec]$ over each partition group. For each
% vertex $v \in g_i$, we first make sure that it exists in the final
% order graph, and then add its neighbouring vertices to the final
% adjacency list. Since the adjacent elements for a vertex can have
% overlaps among different $g_i$'s, we use sets to store adjacent
% vertices of each vertex in the final order graph, and are able to
% add each new neighbour in time $O(1)$. Finally, we use depth-first
% search (DFS) to check, if the final order graph contains any cycles.

\begin{example} \label{example:union-graph}
The \EIOC\ 
%\[
\oc[\brac{\A{yearGreg}, \A{yearLunar}}]
   {\A{monthNum}}
   {\A{monthLun}[][{\tau_{\A{monthNum}}}]}
%\]
\noindent
holds \blue{\emph{unconditionally}} in Table~\ref{table:festivals}
since the union graph is cycle free. 
% \footnote{In the general case, $\A{yearLun}$ should also be present in the context.}.
% 	Hence, the implicit order over Lunar months is 
% 	\(
% 		\mbox{\emph{Corner}}
% 			\prec
% 		\mbox{\emph{Peach}}
% 			\prec
% 		\mbox{\emph{Plum}}
% 			\prec
% 		\mbox{\emph{Pomegranate}}
% 			\prec
% 		\mbox{\emph{Orchid}}
% 			\prec
% 		\mbox{\emph{Osmanthus}}
% 			\prec
% 		\mbox{\emph{Chrysanthemum}}
% 	\).
%    When there is a leap month
%    (for instance, in 2020),
%    there are two Plum months.
%}
%
Figure~\ref{fig:eioc-num-lunar-2-ec}
shows the partial orders
corresponding to this \EIOC\ 
%candidate
for years (2020, 4718) and (2021, 4719)
\blue{, each derived from one partition group using the algorithm described in Section~\ref{sec:eioc-empty-context}}.
(Note that the partition group for years (2021, 4718) is ignored since it only has one tuple.)
% \blue{and as a result, no edge in its order graph}),
Figure~\ref{fig:eioc-num-lunar-merged} shows the union order.
\blue{Note that in the resulting union order, an edge is included \emph{iff} it belongs to at least one of the order graphs in Figure~\ref{fig:eioc-num-lunar-2-ec}; e.g., the edge $(Pomeg., Lotus)$ is included while $(Lotus, Osman)$ is not.}
% Figures
% \ref{fig:eioc-num-lunar-2-ec}
% and
% \ref{fig:eioc-num-lunar-merged} show the partial orders
% corresponding to this \EIOC\ 
% %candidate
% for 2019 and 2020
% before and after being unioned together.
\end{example}

\begin{example}
In Table~\ref{table:festivals},
the \FD\
% \(\fdC{ \A{country}, \A{count} }{\A{ribbon}}\)
\(\fd{ \A{country}, \A{count} }{\A{ribbon}}\)
holds. %;
\blue{Given the $\EIOD$ candidate $\simularC{\A{country}}{\A{count}}{\A{ribbon}^*}$,} 
\((\pi_{\A{ribbon}})_{\A{count}}\) is
an interval partitioning
within each partition group
with respect to the context. % of \A{country}.
However,
the candidate implicit orders 
over \A{ribbon}%
---%
\(\emph{White} \prec \emph{Red} \prec \emph{Blue}\)
within \emph{China} 
and 
\(\emph{White} \prec \emph{Blue} \prec \emph{Red}\)
within \emph{Canada}
---%
are not consistent,
as the \emph{Blue} and \emph{Red} values are flipped, \blue{making this candidate hold only \emph{conditionally}}.
\end{example}

Building the graph data-structure
for the union of the group orders (DAGs)
is simple.
\blue{This can be done by traversing the order from each partition group and adding each of their edges to the final graph if they have not been added yet.}
We then walk the resulting graph
by \emph{depth-first search} (DFS)
to determine whether it is cycle free.

\begin{lemma}
\blue{The time complexity of discovering $\EIOC$s with non-empty context is ${\mathcal O}(n\ln n+pm^2)$.}
\end{lemma}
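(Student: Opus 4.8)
The plan is to charge the running time to the three phases of the non-empty-context procedure sketched above — sorting and partitioning by the context, running the empty-context routine independently inside each partition group, and building and DFS-testing the union graph — and to argue that the sorting term ${\mathcal O}(n\ln n)$ together with the aggregated per-group term ${\mathcal O}(pm^2)$ absorb every other contribution. First I would handle preprocessing: a single global sort realizing $\tau_{\lst{X}\A{A}}$, costing ${\mathcal O}(n\ln n)$, simultaneously produces the partition $\pi_{\set{X}}$ and leaves the tuples of each group already ordered by $\A{A}$ as required downstream; extracting the $p$ groups from the sorted sequence is one linear scan, ${\mathcal O}(n)$, which the sort absorbs.

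Next I would amortize the per-group work. By the preceding lemma, the empty-context $\EIOC$ routine on a group holding $n_i$ tuples and $m_i$ distinct $\A{B}$-values runs in ${\mathcal O}(n_i + m_i^2)$ once its tuples are sorted — a precondition the global sort already meets. Summing over the $p$ groups gives $\sum_i {\mathcal O}(n_i + m_i^2)$; since $\sum_i n_i = n$ and each $m_i \le m$ across exactly $p$ terms, we have $\sum_i m_i^2 \le pm^2$, so the total intra-group cost is ${\mathcal O}(n + pm^2)$.

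Finally I would bound the union phase. The union graph has at most $m$ vertices — the distinct values of $\A{B}$ — and hence at most $m^2$ edges, so inserting edges while scanning the $p$ group orders costs ${\mathcal O}(pm^2)$, and the DFS deciding cycle-freeness (the characterization guaranteeing the existence of $\Ord{P}[{\A{B}[][*]}][\prec]$) costs ${\mathcal O}(m + m^2) = {\mathcal O}(m^2)$. Combining the three phases yields ${\mathcal O}(n\ln n + n + pm^2 + m^2)$, which collapses to ${\mathcal O}(n\ln n + pm^2)$ because $n \le n\ln n$ and $m^2 \le pm^2$.

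The only delicate point — and the main obstacle — is the amortization $\sum_i m_i^2 \le pm^2$. One must resist bounding $\sum_i m_i$ by $m$: a single value of $\A{B}$ may recur in many partition groups, so $\sum_i m_i$ can be as large as $pm$, and only the coarser per-term bound $m_i \le m$ is safe. I would remark that the resulting $pm^2$ term is tight up to constants — attained when every value appears in every group — so it is intrinsic to this group-wise strategy rather than an artifact of loose analysis.
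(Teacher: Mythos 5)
Your proposal is correct and follows essentially the same decomposition as the paper's proof: per-group application of the empty-context routine (sorting absorbed into an ${\mathcal O}(n\ln n)$ term), the amortized $\sum_i m_i^2 \le pm^2$ bound, and a linear-time union-graph construction plus DFS cycle check. The only cosmetic difference is that you perform one global sort up front while the paper charges ${\mathcal O}(k\ln k)$ to each group and bounds the sum by ${\mathcal O}(n\ln n)$; both yield the stated ${\mathcal O}(n\ln n + pm^2)$.
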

\section{I/I Discovery} \label{sec:iioc}

%~~~~~~~~~~~~~~~~~~~~~~~~~~~~~~~~~~~~~~~~~~~~~~~~~~~~~~~~~~~~~~~~~~~~~~~~~~~~%

%~~~~~~~~~~~~~~~~~~~~~~~~~~~~~~~~~~~~~~~~~~~~~~~~~~~~~~~~~~~~~~~~~~~~~~~~~~~~%

% Gen-SOC is np hard
% Gen-SOC with empty context well behaved
% Spec-SOC polynomial

% \subsection{A General Definition for Order Compatibilities}
% \label{sec:genOrderDefinition}

A surprise for us
% in our work on this topic
was 
%the realization
that domain orders can also be discovered
even when \emph{no} explicit domain orders are known!
%It is possible to discover \IIOC s,
%% and \IIOD s,
%which, in turn, may identify implicit 
%%domain 
%orders.
%%This section is organized as follows.

\begin{itemize}[nolistsep,leftmargin=*]
\item 
    We first must extend what is meant
    by an implicit domain order
    as defined in Section~\ref{sec:implicit orders}:
    now it is \emph{two} implicit domain orders
    that we seek to discover %via the ``co-ordering''
    % of the domain values of two attributes
    % over the table
    (Section~\ref{sec:pairs of implicit domains}).

\item 
    % For implicit-to-implicit (\A{I}/\A{I}) domain-order discovery,
    We then subdivide the problem
    of domain-order discovery
    via \IIOC s and \OD s
    as follows:
    
    \begin{itemize}[nolistsep,leftmargin=*]
    \item 
    	candidates that have an empty context
    	\emph{or} that have a non-empty context
    	that is treated as \emph{conditional}
    	(Section~\ref{sec:iioc-empty-context});
    \item 
    	candidates that have a non-empty context
    	that is treated as \emph{unconditional}
    	(Section~\ref{ii-with-context});
    	and
    \item 
    	that have a corresponding \FD\
    	(Section~\ref{SEC/IIODs}).
    \end{itemize}
\end{itemize}

\begin{change}
Thus,
in Section~\ref{sec:pairs of implicit domains},
we define when two attributes,
\A{A}\ and \A{B},
with a context \set{X}\
can be co-ordered.
% with respect to the data, \T{r};
% that is,
% when \oc[\set{X}]{\A{A}[][*]}{\A{B}[][*]}\
% holds over \T{r}.
We also define 
what \emph{strongest} orders can be derived;
i.e., 
\A{A}[][*]\ and \A{B}[][*]. %,
% when \oc[\set{X}]{\A{A}[][*]}{\A{B}[][*]}\ holds.
The following sections then provide
\emph{algorithms}
to determine when \oc[\set{X}]{\A{A}[][*]}{\A{B}[][*]},
and to compute the \emph{strongest} orders
\A{A}[][*]\ and \A{B}[][*]\
when it does.   
\end{change}

%----------------------------------------------------------------------------%
% \vspace{-3mm}
\subsection{Pairs of Implicit Domain Orders}
\label{sec:pairs of implicit domains}

As in the explicit-implicit case,
we have \emph{two} questions to address:
when does \oc[\set{X}]{\A{A}[][*]}{\A{B}[][*]}\ hold over \T{r};
and, if it does,
what are \emph{strongest} partial orders
that we can derive
for \A{A}[][*]\
and
%for 
\A{B}[][*].
Our criterion for whether \oc[\set{X}]{\A{A}[][*]}{\A{B}[][*]}\ holds over \T{r}\ 
is the same as before:
there exists \emph{some} strong total order \Ord{T}[\T{r}][\prec]\
over the tuples in \T{r},
a witness,
such that \A{A}\ and \A{B}'s values projected
into lists from \T{r}\ ordered thusly
represent strong total orders over $\A{A}$ and $\A{B}$'s values, respectively.
%and, likewise, for \A{B}'s values.

% \begin{definition}
% (\emph{\IIOC\ witness})
% \label{DEF/IIOC Witness}
% The \IIOC\ of
% \oc[\set{X}]{\A{A}[][*]}{\A{B}[][*]}\
% \emph{holds} over \T{r}\
% \emph{iff}
% there exists a \emph{witness} strong total order
% \Ord{T}[\T{r}][\prec]\
% such that,
% for \(\tup{s}, \tup{t} \in \T{r}\),
% if \(\tup{s} \prec \tup{t} \in \Ord{T}[\T{r}][\prec]\)
% and
% \(\tup{s}[\set{X}] = \tup{t}[\set{X}]\),
% then
% the derived ``order'' relation over \A{A},
% \[
%     \Ord{T}[{\A{A}[][*]}][\prec] =
% 	\{
% 		\tup{s}[\A{A}] \prec \tup{t}[\A{A}]
% 	\,|\,
% 		\tup{s} \prec \tup{t} \in \Ord{T}[\T{r}][\prec]
% 			\land
% 		\tup{s}[\A{A}] \ne \tup{t}[\A{A}]
% 			\land
%         \tup{s}[\set{X}] = \tup{t}[\set{X}]
% 	\}\mbox{,}
% \]
% \noindent
% is a strong partial order
% \emph{and}
% the derived ``order'' relation over \A{B},
% \[
%     \Ord{T}[{\A{B}[][*]}][\prec] =
% 	\{
% 		\tup{s}[\A{B}] \prec \tup{t}[\A{B}]
% 	\,|\,
% 		\tup{s} \prec \tup{t} \in \Ord{T}[\T{r}][\prec]
% 			\land
% 		\tup{s}[\A{B}] \ne \tup{t}[\A{B}]
% 			\land
%         \tup{s}[\set{X}] = \tup{t}[\set{X}]
% 	\}\mbox{,}
% \]
% \noindent
% is a strong partial order.
% \end{definition}

To determine the strongest derivable orders
for \A{A}[][*]\ and \A{B}[][*]\
is not the same as before,
however.
We cannot define it as simply, 
%as before,
as the intersection of \emph{all} the projected orders.
The 
% first 
reason is
that there is never a \emph{single} witness;
witnesses come in pairs.
Since we have no explicit order to anchor the choice,
if we have a strong total order on \T{r}\ that is a witness,
then the \emph{reversal} of that order is also a witness.
%
% The first reason is as follows.
% Assume there is only a single witness.
% Does this not mean the projected (strong total) orders
% on \A{A}\ and on \A{B}\
% offer the only solution
% for \A{A}'s and \A{B}'s implicit orders?
% No,
% because it is impossible to have
% ever a \emph{single} witness in this case.
% Since we have no explicit order to anchor the choice,
% if we have a strong total order on \T{r}\ that is a witness,
% then the reversal of that order is also a witness.
%
Which direction,
``ascending'' or ``descending'',
is the right one to choose?
The choice is arbitrary.
We call this \emph{polarization}.
% Note that the intersection
% of the witness order and its reversal
% is \emph{empty},
% however.
% This is not informative;
% often, we can derive much stronger constraints
% based on \T{r}\
% on how \A{A}\ and \A{B}\ must co-order.

\begin{example} \label{example:polarity-size-color}
\blue{Consider the $\IIOC$ candidate $\simular{\A{size}}{\A{color}}$ and the first five rows of Table~\ref{table:festivals}. The total order $\Ord{T}[\T{r}][\prec]$ of $\tup{t}_3$ $\prec$ $\tup{t}_2$ $\prec$ $\tup{t}_5$ $\prec$ $\tup{t}_4$ $\prec$ $\tup{t}_1$ is a valid \emph{witness order}, and results in the \emph{derived order}s \emph{Medium} $\prec$ \emph{Large} $\prec$ \emph{X-Large} and \emph{White} $\prec$ \emph{Red} $\prec$ \emph{Blue} over the values of $\A{size}$ and $\A{color}$, respectively. However, the reverse of $\Ord{T}[\T{r}][\prec]$ with the opposite \emph{polarization} (i.e., $\tup{t}_1$ $\prec$ $\tup{t}_4$ $\prec$ $\tup{t}_5$ $\prec$ $\tup{t}_2$ $\prec$ $\tup{t}_3$) is also a valid witness, resulting in reversed derived orders over $\A{size}$ and $\A{color}$ as well. Through $\IIOC$s, it is not possible to assert which one of these polarizations is the correct one.}
\end{example}

To circumvent
that a witness order over \T{r}\
and the reversal of that order
which is also a witness
``cancel'' each other out
(that is,
their intersection is the empty order),
we define a \emph{witness class},
which consists of all the witnesses
that are reachable from one another via \emph{transpositions}.
That a transposition leads to another order which is also a witness
means that the forced order between tuples
in the two transposed blocks
is immaterial to how \A{A}\ and \A{B}\ can be co-ordered.

\blue{More formally, by applying a valid \emph{transposition} to a \emph{witness} order $\Ord{T}[\set{D}][\prec]$ over $\T{r}$, we can move a contiguous sub-sequence of tuples to another location within the same partition group, subject to the condition that the new total order $\Ord{S}[\set{D}][\prec]$ over tuples has to be a valid \emph{witness} as well. We refer to the set of all witness total orders over $\T{r}$ that can be converted to each other via one or more valid transpositions as a \emph{witness class}. Given a witness total order $\Ord{T}[\set{D}][\prec]$, the \emph{strongest derivable} order pairs $\A{A}^*$ and $\A{B}^*$ can be defined as the \emph{intersection} of all of the orders within the witness class of $\Ord{T}[\set{D}][\prec]$.}

\subsection{I/I OCs, Empty-Context or Conditional}
\label{sec:iioc-empty-context}

We first consider the cases of
\IIOC s with an empty context
and when the context is not empty
but for which
we treat the partition groups as independent (\emph{conditional}).
%%% 
\blue{
For the $\IIOC$ candidate $\oc[\set{X}]{\A{A}[][*]}{\A{B}[][*]}$, 
our goal is to discover whether, 
within each partition group,
there exist
$\A{A}[][*]$ and $\A{B}[][*]$
such that they can be co-ordered.
To do so, 
}
%%%
% % Removed in the revision
% These are \IIOC s of the form
% \oc[\set{X}]{\A{A}[][*]}{\A{B}[][*]}.
% %
% To discover whether there exist
% such \A{A}[][*]\ and \A{B}[][*]\
% such that they can be co-ordered in \T{r},
we build a bipartite graph,
\BG[\A{A}, \A{B}]\ over 
% \blue{the tuples in each partition group.}
\T{r}.
In this,
the nodes on the left represent
the partition groups by \A{A}'s values in \T{r},
\(\pi_{\set{X}\A{A}}\),
and those on the right represent
the partition groups by \A{B}'s values in \T{r},
\(\pi_{\set{X}\A{B}}\).
For each tuple \(\tup{t}\in\T{r}\),
there is an edge between
\tup{t}[\A{A}]\ %(on the left)
(left)
and
\tup{t}[\A{B}]\ %(on the right).
(right).

\begin{definition}
\emph{(3-fan-out)}
\label{DEF/3-fan-out}
	A bipartite graph has a \emph{3-fan-out}
	\emph{iff} it has a node
	that is connected to at least three other nodes.
\end{definition}

%~~~~~~~~~~~~~~~~~~~~~~~~~~~~~~~~~~~~~~~~~~~~~~~~~~~~~~~~~~~~~~~~~~~~~~~~~~~~%
%%% WRONG!! [PG]
%%% Needs fixing.
%%% And make as FIG 2c to save space.
% FIGURE: FANOUT3
%%% THROW AWAY; USE FIG 1b INSTEAD [PG]
% FIGURE: INTERESTINGNESS EXAMPLE
% \begin{figure}[t]
%   \begin{minipage}{0.22\textwidth}
%   \centering
%      \includegraphics[height=3.5cm]{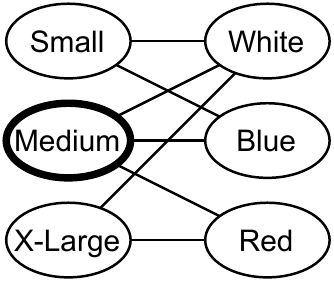}
%      \paddingT
%      \caption{{\sf BG} with 3-fan-out.}
%      \label{fig:fanout3}
%   \end{minipage}%
%   \hfill
%   \begin{minipage}{0.22\textwidth}
%   \centering
%      \includegraphics[height=3.5cm]{figs/interestingness-example.pdf}
%      \paddingT
%      \caption{Partial order graph.}
%      \label{fig:interestingness-example}
%   \end{minipage}
%   %\paddingD
%   \todo{Remove Fig 4. \& merge Fig 2 and 3.}
% \end{figure}
%~~~~~~~~~~~~~~~~~~~~~~~~~~~~~~~~~~~~~~~~~~~~~~~~~~~~~~~~~~~~~~~~~~~~~~~~~~~~%

% $\set{E}(\tup{t}_{X})$ $\models$  $\simular{\A{A}^{*}}{\A{B}^{*}}$

% FIGURE: SIZE vs RIBBON

\begin{figure}[t]
% \label{figure:biGraph}
    \centering
    \begin{subfigure}[]{0.15\textwidth} % boxplot 1
        \centering
        \scalebox{0.91}{
        \includegraphics[width=2.2cm]{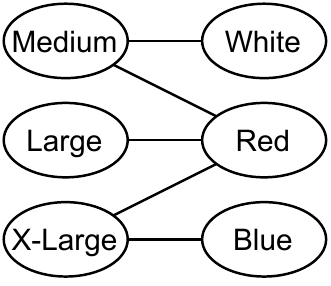}
        }
        \caption{China.}\label{fig:size-ribbon-china}
        % \label{figure:biGrapha}
    \end{subfigure}%
    ~ 
    \begin{subfigure}[]{0.15\textwidth} % boxplot 2
        \centering
        \scalebox{0.91}{
        \includegraphics[width=2.2cm]{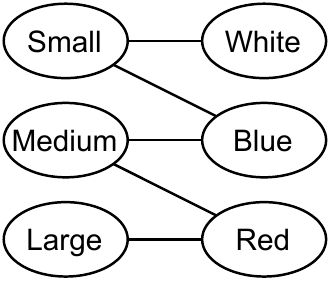}
        }
        \caption{Canada.}\label{fig:size-ribbon-canada}
        % \label{figure:biGraphb}
    \end{subfigure}
    ~ 
    \begin{subfigure}[]{0.15\textwidth} % boxplot 2
        \centering
        \scalebox{0.91}{
        \includegraphics[width=2.2cm]{figs/size-ribbon-all.pdf}
        }
        \caption{All tuples.}\label{fig:size-ribbon-all}
        % \label{figure:biGraph-all}
    \end{subfigure}
    \paddingT
    \caption{\sf{BG}s or \sf{BG'}s for \sf{I/I OC}.}
     \paddingD
    \label{fig:size-ribbon}
\end{figure}

It does not suffice to consider directly
\BG[\A{A}, \A{B}]\
to determine whether
\oc{\A{A}[][*]}{\A{B}[][*]}.
This is because a node of degree one
in the \BG\ over \T{r}\
can never invalidate
the \IIOC.
% with that context.
E.g.,
\emph{White} has just degree one
in both of the \BG s
in Figures~\ref{fig:size-ribbon-china} and \ref{fig:size-ribbon-canada}.
% in Figure~\ref{fig:size-ribbon}.
These have to be excluded before
we check the \emph{3-fan-out} property.

\begin{definition}
\emph{(Singletons and \BGp)}
\label{DEF/Singleton}
	Call a node in a \BG\ with degree one a \emph{singleton}.
	Let \BGp\ be the \BG\
	in which the singletons
	and their associated edges
	have been removed.
\end{definition}

With \BGp[\A{A}, \A{B}],
we can test whether \oc{\A{A}[][*]}{\A{B}[][*]}.

\begin{theorem} \label{theo:bipValidity}
\oc{\A{A}[][*]}{\A{B}[][*]}\
is \emph{valid}
over \T{r}\
\emph{iff}
both of the following two conditions are \emph{true}
for \BGp[\A{A}, \A{B}]\ over \T{r}:

\begin{enumerate}[nolistsep,leftmargin=*]
\item it contains no 3-fan-out;
	and
\item it is acyclic.
\end{enumerate}
\end{theorem}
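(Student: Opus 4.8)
The plan is to recast validity of \oc{\A{A}[][*]}{\A{B}[][*]} as a purely combinatorial \emph{contiguity} condition on linear arrangements of the tuples, and then read that condition off the structure of \BGp. By the definition of \IIOC\ validity, \oc{\A{A}[][*]}{\A{B}[][*]} holds over \T{r}\ exactly when there is a witness total order of the tuples whose projections onto \A{A}\ and onto \A{B}\ are each strong total orders. The first step I would establish is the elementary equivalence that such a projection is a strong total order precisely when all tuples sharing a given attribute value occur \emph{contiguously}. Hence validity is equivalent to the existence of a linear arrangement of the tuples in which, for every value of \A{A}\ and every value of \A{B}, the tuples carrying that value form one contiguous block. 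Equivalently --- and this is the view I will use throughout --- each \A{A}-value and each \A{B}-value is assigned a contiguous interval of positions, the \A{A}-intervals are pairwise disjoint and cover all positions, the \A{B}-intervals likewise, and each tuple $(a,b)$ lies in $I_a \cap I_b$. Note that the reversal of any witness is again a witness (the polarization of Section~\ref{sec:pairs of implicit domains}), but since we only assert \emph{existence}, polarization is irrelevant here.

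For necessity I argue the contrapositive of each condition. Suppose first that \BGp\ contains a 3-fan-out, i.e.\ a node $v$ with at least three distinct neighbours. In any valid arrangement the tuples at $v$ form a contiguous block, inside which the tuples going to each neighbour appear as disjoint contiguous sub-blocks; since at most two sub-blocks can touch the two ends of $v$'s block, some neighbour's sub-block, say that of $u$, is \emph{interior} to $v$'s block. Because $u$ survives in \BGp\ it is not a singleton (Definition~\ref{DEF/Singleton}), so it has an edge to some node $w \neq v$ whose tuple lies outside $v$'s block. Then $u$'s block would have to stretch from an interior point of $v$'s block to a position outside it, necessarily covering tuples whose value differs from $u$ --- contradicting contiguity. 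This shows validity forces the absence of a 3-fan-out, and in particular that every node of \BGp\ then has degree at most two.

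Armed with this degree bound, the acyclicity requirement becomes clean. Assuming validity (hence maximum degree two), every cyclic connected component of \BGp\ is a pure cycle in which each node carries \emph{exactly} its two cycle edges. Grouping parallel tuples between a fixed adjacent pair of nodes into a single ``bundle'', the two bundles at each node must be contiguous, so every bundle must be adjacent, in the linear arrangement, to both of its cyclic neighbours. But the bundle placed leftmost in the arrangement can be adjacent to only one other bundle, while it is required to be adjacent to two --- a contradiction. Hence a valid arrangement admits no cycle in \BGp, completing necessity.

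For sufficiency, assume \BGp\ has no 3-fan-out and is acyclic. The degree bound makes \BGp\ a disjoint union of simple paths, and I would build the witness constructively: order the tuples of each path by traversing it from one end to the other, which automatically keeps every node's tuples contiguous, then concatenate the blocks of the distinct connected components (these use disjoint value sets, so concatenation introduces no conflict). The remaining work is to re-insert the tuples incident to the removed singletons: a singleton value co-occurs with only one partner value, so its tuples can be slotted in as a single block at the appropriate end of that partner's block without disturbing any existing contiguity. Verifying that this re-insertion can always be performed --- stitching the degree-one material back onto the path skeleton while preserving both families of intervals --- is the fiddliest part of the argument and the step I expect to be the main obstacle; the two necessity directions, by contrast, reduce to the short extremal arguments sketched above once the contiguity reformulation and the degree bound are in hand.
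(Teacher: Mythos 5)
Your proposal follows essentially the same route as the paper's proof. Both necessity directions are the same extremal contradictions: for the 3-fan-out, an interior sub-block whose owner, being a non-singleton, must reach a tuple outside the enclosing block (the paper instantiates this with the six concrete tuples $(a_1,b_1),\dots,(a_4,b_3)$ and derives the impossibility of placing $\tup{t}_2$); for the cycle, your "each bundle owes two adjacencies but the leftmost bundle can pay only one" is the same obstruction the paper reaches via the inductive chain of forced precedences ending in $a_1 \prec a_{k/2}$ and $a_{k/2} \prec a_1$. Sufficiency is likewise the same constructive end-to-end traversal of the path decomposition (the paper's Algorithm~\ref{pc:deriveChains}) followed by re-insertion of the singletons. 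The contiguity reformulation you lead with is a clean packaging of what the paper does implicitly with projections of witness orders.

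The one step that would fail as written is the singleton re-insertion. You propose to slot a singleton's tuples ``at the appropriate end of that partner's block,'' but when the partner value $a$ is an \emph{internal} node of a path in \BGp, both ends of $a$'s block are already glued to the bundles of $a$'s two non-singleton neighbours $u_1$ and $u_2$: inserting the singleton's tuples at either end separates the bundle of $(u_i, a)$ from the rest of $u_i$'s block and destroys $u_i$'s contiguity. The correct placement --- the one the paper uses, inserting $(a_i, b_k)$ between $(a_i, b_{i-1})$ and $(a_i, b_i)$ --- is \emph{between} the two bundles of $a$, i.e., in the interior of $a$'s block, where the insertion disturbs neither family of intervals. (You also need the degenerate case of tuples both of whose values are singletons; their components are simply appended as separate blocks.) You flagged this step as the expected obstacle, and it is indeed the only real gap; with the interior insertion the argument closes.
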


The intuition behind the requirement
of no 3-fan-outs
is that there has to be a way
to order the left values in an attribute on the left
to order the right values in an attribute on the right
such that none of the edges of \BGp\ \emph{cross}.
Also, there is no order if there is a cycle.
\diagram{Figure~\ref{fig:diagram-ii} demonstrates these steps, 
where only the first two steps are required for conditional $\IIOC$ candidates.}

%%% CHANGED [PG]
\begin{example}
The \BGp[\A{size}, \A{ribbon}]\
over Table~\ref{table:festivals}
and shown in Figure~\ref{fig:size-ribbon-all}
% is shown in Figure~\ref{fig:fanout3}.
has 3-fan-out:
\emph{Medium} connects to
\emph{White},
\emph{Blue},
and
\emph{Red}.
Thus,
the candidate \IIOC\ of
\oc[\emptySet]{\A{size}[][*]}{\A{ribbon}[][*]}\
is not valid.
\end{example}

% That 
\blue{Even though the $\IIOC$ candidate}
\oc[\emptySet]{\A{size}[][*]}{\A{ribbon}[][*]}\
over Table~\ref{table:festivals}
does not hold%
\blue{, it}
does not mean that
\oc[\set{X}]{\A{size}[][*]}{\A{ribbon}[][*]}\
does not hold with respect to some context \set{X}.
The latter is a weaker statement.

\begin{example} \label{ex:GPMonths}
	Consider Table~\ref{table:festivals}
	and the \IIOC\ of
	\oc[\brac{\A{country}}]{\A{size}[][*]}{\A{ribbon}[][*]}.
	Figures~\ref{fig:size-ribbon-china} and \ref{fig:size-ribbon-canada} show
% 	Figure~\ref{fig:size-ribbon} shows
	the two \BG s
	for \emph{China} and \emph{Canada}
	(the values of our context, \A{country}),
	respectively.
%	\footnote{%
%		E.g.,
%		this is \BG[\A{size},\A{ribbon}]\
%		over \(\set{E}({\tup{t}[1]}_{\A{country}})\)
%		and
%		over \(\set{E}({\tup{t}[6]}_{\A{country}})\).
		%respectively.
		% \BG[\A{size},\A{ribbon}][\set{E}({\tup{t}[1]}_{\A{country}})]
		% \BG[\A{size},\A{ribbon}][\set{E}({\tup{t}[6]}_{\A{country}})]
%	}
	Thus,
	there exists
	a co-order between \A{size}\ and \A{ribbon}\
	over \(\set{E}({\tup{t}[1]}_{\A{country}})\)
	(that is, for \(\A{country} = \mbox{\emph{`China'}}\))
	and a co-order between \A{size}\ and \A{ribbon}\
	over \(\set{E}({\tup{t}[6]}_{\A{country}})\)
	(that is, for \(\A{country} = \mbox{\emph{`Canada'}}\)).
\end{example}

We next need to show how to extract a co-order
once we know one exists.
As with Sections~\ref{sec:eioc-empty-context}
    and
    \ref{section:nonemptyEI},
we may discover a partial order,
this time both for \emph{left} and \emph{right},
within each partition group with respect to the context. 
The partial order is of a specific type:
we find a disjoint collection of \emph{chains}.
Each chain is a strong total order over its values. Note that the singleton elements (which were initially ignored in $\BGp$) will be inserted into this order, creating the final order.
Again,
there is no specified direction
in which to read each chain; i.e., 
what its polarity is.

%%% REWRITE: repetitive [PG]
% Because an explicit order is not involved, 
% we cannot know which direction along the chain
% should be considered as \emph{ascending}
% and which  \emph{descending}.
% This is an arbitrary choice.
% We call this the \emph{polarity} of the chain.

% For a pair of attributes \A{A}\ and \A{B},
If, for each partition group with respect to \set{X}\ over \T{r},
\BGp[\A{A}, \A{B}]\ over the partition group
satisfies the conditions
in Theorem~\ref{theo:bipValidity},
then the conditional \IIOC\ of
\oc[\set{X}]{\A{A}[][*]}{\A{B}[][*]}\
holds over \T{r}.
\BGp[\A{A}, \A{B}]\ over each partition group
yields a strong partial order%
---%
a disjoint collection of chains%
---%
for each of \A{A}\ and \A{B}.
A walk of
\BGp[\A{A}, \A{B}]\ 
suffices to enumerate the chains,
for both \A{A}\ and \A{B}.

\begin{example}
% \todo{TODO Reza: describe first how chains in the Ex 12 are derived.}
Consider the \BG\
in Figure~\ref{fig:size-ribbon-canada}
over the \IIOC\ of
\oc[\brac{\A{country}}]{\A{size}[][*]}{\A{ribbon}[][*]},
over values
%which shows the co-occurrence of values
in Table~\ref{table:festivals}.
By iteratively zig-zagging from left to right in this bipartite graph,
we obtain the chains
\Lst{\emph{Small}, \emph{Medium}, \emph{Large}}\
and
\Lst{\emph{White}, \emph{Blue}, \emph{Red}}\
over \A{size}\ and \A{ribbon}, respectively,
over 
partition group
\(\set{E}({\tup{t}[6]}_{\A{country}})\).
\end{example}

% \begin{delete}
% As chains for left and right are extracted in this way,
% the pair of chains extracted via a zigzag walk
% for left and right, respectively,
% are \emph{coupled}.
% Their \emph{polarities} are locked together:
% if one is reversed,
% the other one has to be also reversed.
% E.g.,
% \Lst{\emph{Large}, \emph{Medium}, \emph{Small}}\
% and
% \Lst{\emph{Red}, \emph{Blue}, \emph{White}}.

% \begin{definition}
% A \emph{polarity group}, \PG, is a directed acyclic graph.
% Let \PG[][R]\ denote the \emph{reversed} polarity group of \PG:
% \(
% 	\PG[][R] = \brac{a \prec b\ |\ b \prec a \in \PG}
% \).
% \end{definition}

% % The corresponding chains over the $\IIOC$ of
% % \oc[\set{X}]{\A{A}[][*]}{\A{B}[][*]}\ within each partition group
% % are called a \emph{polarity group} denoted as $\sf{PG}$.
% % The reverse polarity group is referred to as $\sf{PG}^{R}$.

% Let us start with a collection of polarity groups,
% one for each chain pair (the chain from left and from right)
% from each partition group.
% Each initial partition group
% consists of the union of its left and right chain (DAGs).
% \end{delete}

As in Section~\ref{section:nonemptyEI},
an \IIOC\ with a non-empty context can be treated
either as \emph{conditional} or \emph{unconditional}.
Our discovered domain orders between partition groups
with respect to the context
may differ.
For the conditional case,
this is considered fine; e.g.,
in Table~\ref{table:festivals},
the order of ribbon colors
w.\ r.\ t.\ the festival size
differs per \A{country}:
in \emph{China},
	\(
		\emph{White}
			\prec
		\emph{Red}
			\prec
		\emph{Blue}
	\);
in \emph{Canada},
	\(
		\emph{White}
			\prec
		\emph{Blue}
			\prec
		\emph{Red}
	\).

% The discovered domain orders across partition groups with respect
% to the context can be different. We call these implied orders
% \emph{conditional} and the corresponding dependencies \emph{conditional}
% $\IIOC$s and $\IIOD$s.

% With respect to the size of the festival, the ribbon color order
% is conditional in various countries (e.g., in Canada the rank is
% White, Blue, Red and in China it is White, Red, Blue). Also, ribbon
% colours vary in different countries, such as Red, Blue, Yellow in
% UK and Blue, Red White. Similarly, the grades in various countries
% are conditional, e.g., in North America from A to F and in many
% European countries from 2 to 5.

\begin{example}
In Table~\ref{table:festivals},
the conditional \IIOC\ of
\oc[\brac{\A{country}}]{\A{size}[][*]}{\A{ribbon}[][*]}\
holds as 
\(
	\set{E}({\tup{t}[1]}_{\A{country}})
		\models
	\oc{\A{size}[][*]}{\A{ribbon}[][*]}
\)
and
\(
	\set{E}({\tup{t}[6]}_{\A{country}})
		\models
	\oc{\A{size}[][*]}{\A{ribbon}[][*]}
\).
\end{example}

\begin{lemma}
\blue{The runtime of validating a conditional $\IIOC$ with empty or non-empty context is ${\mathcal O}(n)$.}
\end{lemma}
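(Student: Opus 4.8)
The plan is to bound the cost of the validation procedure that underlies Theorem~\ref{theo:bipValidity}: partition $\T{r}$ by the context $\set{X}$, and for each partition group build the simplified bipartite graph $\BGp[\A{A}, \A{B}]$ and test its two conditions (no 3-fan-out and acyclicity). Since the conditional semantics treats each partition group independently, the whole check is simply the disjoint union of these per-group checks, and the empty-context case is the special case of a single group equal to all of $\T{r}$. The crux of the argument is therefore an \emph{aggregation} claim: although there may be many partition groups, the total work summed over all of them is still linear in $n$.

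First I would account for the partitioning step. Computing $\pi_{\set{X}}$ and bucketing every tuple into its group $\set{E}(\tup{t}_{\set{X}})$ is done by hashing on $\tup{t}[\set{X}]$, which costs ${\mathcal O}(n)$ expected time, exactly as in the partition-by-hash arguments used earlier in the paper. Next, within each group I build $\BG[\A{A}, \A{B}]$ by scanning its tuples: each tuple $\tup{t}$ contributes the single edge between $\tup{t}[\A{A}]$ and $\tup{t}[\A{B}]$, and duplicate edges are collapsed by hashing on the pair. The key observation is that every tuple lies in exactly one partition group and contributes at most one edge, so $\sum_{\text{groups}} |E_{\text{group}}| \le n$; the number of distinct node values in a group is at most twice its edge count, so the node totals are likewise ${\mathcal O}(n)$. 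Hence the \emph{sizes} of all the bipartite graphs sum to ${\mathcal O}(n)$.

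With this accounting in hand, each graph operation is linear in the size of its graph and therefore linear in $n$ in aggregate. Deriving $\BGp[\A{A}, \A{B}]$ from $\BG[\A{A}, \A{B}]$ only removes degree-one nodes (Definition~\ref{DEF/Singleton}), a single pass over nodes and incident edges. Testing the 3-fan-out condition (Definition~\ref{DEF/3-fan-out}) amounts to checking that no node has degree three or more; since $\sum_{v} \deg(v) = 2|E|$, this is ${\mathcal O}(|V| + |E|)$ per graph. Testing acyclicity is a depth-first traversal, again ${\mathcal O}(|V| + |E|)$ per graph. Summing over all groups gives ${\mathcal O}(n)$ in total. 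Crucially, no sorting is needed anywhere, because no explicit order anchors the $\IIOC$ and validity is decided purely by the combinatorial shape of $\BGp$; this is precisely what removes the logarithmic factor present in the $\EIOC$ bounds.

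The main obstacle I anticipate is the aggregation bookkeeping rather than any single step: one must verify that the per-group overhead truly sums to ${\mathcal O}(n)$ and does not hide a factor of $p$ (the number of groups) or $m$ (the number of distinct implicit-order values). This is secured by the disjointness of the groups' tuple sets together with the bound $\sum_{\text{groups}} (|V_{\text{group}}| + |E_{\text{group}}|) = {\mathcal O}(n)$, and by relying on expected-${\mathcal O}(1)$ hashing for both the partitioning and the edge de-duplication. A secondary point worth stating explicitly is that the bound is for \emph{validation}; if one additionally wishes to extract the witnessing chains (the strongest derivable $\A{A}[][*]$ and $\A{B}[][*]$), the zig-zag walk of $\BGp$ is itself linear and does not change the asymptotics.
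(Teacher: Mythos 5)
Your proposal is correct and follows essentially the same route as the paper's proof: build the bipartite graph(s) from a single pass over the tuples, check the 3-fan-out and acyclicity conditions of Theorem~\ref{theo:bipValidity} by a DFS-style traversal, and repeat per partition group for the non-empty-context case. Your explicit aggregation argument (that the per-group graph sizes sum to ${\mathcal O}(n)$ because each tuple contributes at most one edge to exactly one group) makes precise a step the paper's proof passes over quickly, but it is the same underlying argument.
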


% To validate an \IIOC\ candidate with an empty context
% involves generating the \BG,
% iterating over the tuples once,
% then using BFS traversal to check
% for cycles and 3-fan-outs,
% as described above.
% Thus,
% the runtime of this is linear
% in the number of tuples, \(n\).
% For non-empty contexts,
% validation of \IIOC s requires
% the above steps for each partition group.
% Thus,
% the overall runtime remains
% \(O(n)\).

% First, we scan the tuples to create the biGraph$^{'}$.  Then, we
% use BFS to validate the biGraph$^{'}$. Algorithm \ref{pc:deriveChains}
% involves a single scan over each biGraph$^{'}$ to derivevthe chains
% for the left-hand-side and right-hand-side of the OC candidate.
% Therefore, up to this point, the complexity is linear in the number
% of tuples.

%\todo{TODO Reza: add material about conditional I/I OCs}

%----------------------------------------------------------------------------%
% \vspace{-1mm}
\subsection{I/I OCs, Unconditional} \label{ii-with-context}

% To check whether a candidate \IIOC\ with a non-empty context
% holds \emph{unconditionally} 
\blue{
To validate an $\IIOC$ candidate with a non-empty context \emph{unconditionally}
and find implicit orders $\A{A}^*$ and $\A{B}^*$ that hold over the \emph{entire} dataset
is significantly harder.} 
% is significantly harder.
The implicit orders for left and for right
discovered per partition group
% must be consistent.
% Polarity choices must be made for these;
% only some, if any, such choices lead to a consistent union.
must be consistent and
polarity choices must be made for them.
% \blue{Furthermore, while in the conditional case, values that only occurred with one other value on the other side (\emph{singleton}s) could not invalidate a given candidate, in the unconditional case it is possible to have an invalid candidate even if all the values in the dataset are singletons. One source of difficulty is the uncertainty in determining the \emph{direction} of orders derived from each partition group.}
% \blue{In Section \ref{sec:ii-algorithms}, we tackle this problem by considering how pairs of unique values co-occur with one another, and the potential orders that such occurrences enforce over the values.}
% The difficulty is to determine a direction (\emph{polarity})
% for each of the chains we found
% that leads to the union of the chains being a DAG.
% We call this the \emph{chain-polarity problem}.
% The most general case is when the context is non-empty and the
% discovered domain orders across the partition groups with respect
% to the context are considered dependent (unconditional).
% Thus, the implied-order must be consistent over partition groups.
% We call these unconditional $\IIOC$s.

For example,
the months in the Gregorian and lunar calendars
are dependent in the context of the year types
with respect to
the \IIOC\ of
\oc[\brac{\A{yearGreg}, \A{yearLun}}]
   {\A{monthGreg}[][*]}
   {\A{monthLun}[][*]}.
In the lunar calendar,
%%% I changed this from 13/14 months per year [Reza]
there are twelve months (sometimes, thirteen),
% there are thirteen months (sometimes, fourteen),
with the new year starting a bit later
than in the Gregorian calendar,
with the lunar months overlapping the Gregorian months.

We prove that this is computationally hard.
To do this,
we show that to determine
whether,
for a collection of chains,
there exists a polarization,
a directional choice for each chain,
such that the union of the chains so directed
represent a strong partial order.
This is a sub-problem for deciding the validity of an \IIOC;
therefore,
this establishes that our problem is hard.

\begin{definition}\label{def:cpp}
(\emph{the chain polarity problem})
\label{DEF/CPP}
For the \emph{Chain Polarity Problem (CPP)},
the \emph{structure}
is a collection of lists of elements. Each list is constrained such
that no element appears twice in the list. A list can be interpreted
as defining a total order over its elements;
e.g., list \Lst{\var{a}, \var{b}, \var{c}, \var{d}}\ infers
$a \prec b$, $b \prec c$, and $c \prec d$.

\begin{table}[t]
\paddingT
\centering
    \caption{\label{tab:nae-to-cpp} A NAE-3SAT instance and the reduced CPP instance.}
    \paddingT
    \renewcommand{\arraystretch}{1.1}
    \scalebox{0.87}{
\begin{tabular}{|c|c|}
\hline
Clauses & Lists \\ \hline
$(p_1\lor p_2\lor \neg p_3)$ & $[t_1, a_1, b_1, f_1]$, $[t_2, b_1, c_1, f_2]$, $[f_3, c_1, a_1, t_3]$ \\ \hline
$(\neg p_1\lor p_2\lor \neg p_3)$ & $[f_1, a_2, b_2, t_1]$, $[t_2, b_2, c_2, f_2]$, $[f_3, c_2, a_2, t_3]$ \\ \hline
$(\neg p_1\lor\neg p_2\lor p_3)$ & $[f_1, a_3, b_3, t_1]$, $[f_2, b_3, c_3, t_2]$, $[t_3, c_3, a_3, f_3]$ \\ \hline
\end{tabular}
}
%  \paddingD
\end{table}

A \emph{polarization} of the list collection is a new list collection
in which, for each list in the original, the list or the
reverse appears.
The decision question 
%(the ``problem'') 
for CPP is whether there
exists a polarization of the CPP instance such that the union of
the total orders represented by the polarization’s lists is a strong
partial order. %Due to the space limit, the proof can be found in the technical report \todo{TODO Reza: create and cite technical report}.

\end{definition}

% \todo{TODO Reza: add the table (from photo in our meeting notes) as an example with brief description illustrating what is the difficulty?}

\begin{lemma}\label{lemma:NPcomplete}
The Chain Polarization Problem is NP-Complete.
\end{lemma}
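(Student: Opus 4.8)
The plan is to prove both membership in NP and NP-hardness. Membership is the easy half: I would take the polarization itself as the certificate, which is one bit per list and hence of size linear in the input. To verify it, orient every list according to the chosen bits, build the digraph whose edges are the covering pairs of consecutive elements of each oriented list, and run a single depth-first search (or topological sort) to test for a directed cycle. The union of the oriented total orders fails to be (extendable to) a strong partial order exactly when this covering-edge digraph contains a cycle, so the check runs in time linear in the total length of the lists. Thus $\mathrm{CPP}\in\mathrm{NP}$.

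For hardness I would reduce from NAE-3SAT (Not-All-Equal 3SAT), realizing exactly the gadget previewed in Table~\ref{tab:nae-to-cpp}. Given a formula over variables $p_1,\dots,p_n$ with clauses $C_1,\dots,C_m$, I introduce two global \emph{marker} elements $t_i,f_i$ for each variable $p_i$ and three clause-local \emph{auxiliary} elements $a_j,b_j,c_j$ for each clause $C_j$. Each clause emits three lists, one per literal, each of the shape (marker, aux, aux, marker): the two markers are those of the literal's variable, arranged as $t_i,\dots,f_i$ when the literal is positive and as $f_i,\dots,t_i$ when it is negative, while the two middle auxiliaries are drawn from $\{a_j,b_j,c_j\}$ so that, read forward, the three lists induce the directed triangle $a_j\to b_j\to c_j\to a_j$. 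This is precisely the pattern of Table~\ref{tab:nae-to-cpp}, and the construction is clearly polynomial.

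The correctness argument fixes the convention that reading a list forward means its literal is \emph{true}. Because each list directly orders the two markers of its variable, any acyclic polarization forces all lists mentioning $p_i$ to agree on whether $t_i\prec f_i$ or $f_i\prec t_i$ (disagreement yields a $2$-cycle). This common direction defines a truth assignment $\sigma$ (with $p_i$ true iff $t_i\prec f_i$) and forces each list to be oriented forward exactly when its literal is true under $\sigma$. Within a clause $C_j$ the three lists contribute exactly the three triangle edges on $\{a_j,b_j,c_j\}$, and a three-node triangle with one edge per pair is acyclic iff the three orientations are \emph{not} all forward and \emph{not} all reversed; translated through the convention, this is precisely the requirement that the clause's three literals are neither all true nor all false, i.e.\ the NAE condition. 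Hence every acyclic polarization produces a NAE-satisfying assignment, establishing the ``only if'' direction, while the converse supplies the candidate polarization ``forward iff literal true.''

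The hard part will be the sufficiency direction: showing that the polarization induced by a NAE-satisfying $\sigma$ has \emph{no} unintended cycles beyond the local $2$- and $3$-cycles already analyzed, since an auxiliary of one clause can reach markers shared with other clauses. I would settle this by a structural argument on the covering-edge digraph under $\sigma$. Each variable's markers split into one \emph{source} marker of in-degree $0$ and one \emph{sink} marker of out-degree $0$; every list then has the form (source-side marker)$\to$aux$\to$aux$\to$(sink-side marker); auxiliaries are clause-local; and the only aux-to-aux edges are the clause triangles, which are now acyclic by the NAE condition. Consequently every edge is of type source-marker$\to$aux, aux$\to$aux, or aux$\to$sink-marker, so no marker can lie on a directed cycle (sources have no incoming edge, sinks no outgoing edge), forcing any cycle to live entirely inside the disjoint union of acyclic triangles, which is impossible. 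This certifies acyclicity, completes the equivalence, and together with the NP membership above yields NP-completeness of the CPP.
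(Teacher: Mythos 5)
Your proof is correct and takes essentially the same approach as the paper: the identical NAE-3SAT reduction with the marker elements $t_k, f_k$ and the clause-local triangle on $a_j, b_j, c_j$ from Table~\ref{tab:nae-to-cpp}, the same truth-assignment interpretation $t_i \prec f_i \Leftrightarrow p_i$ true, and the same certificate-plus-cycle-check argument for membership in NP. If anything, your source/sink structural argument for the sufficiency direction (that a NAE-satisfying assignment induces a polarization with no cycles threading through the marker elements across clauses) is spelled out more carefully than in the paper, which only asserts that direction by contrapositive.
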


We prove Lemma~\ref{lemma:NPcomplete} using a reduction from NAE-3SAT, which is a variation of 3SAT that requires that the three literals in each clause are not all equal to each other.

\eatproofs{
\begin{proof}
The input size of a CPP instance may be measured as the sum of the lengths of its lists; let this be $n$. Consider a pair explicitly implied by the list collection to be in the binary ordering relation if the pair of elements appears immediately adjacent in one of the lists. Thus, the number of explicitly implied pairs is bounded by $n$.

\textbf{CPP is in the class NP.}
An answer of yes to the corresponding decision question means there exists a polarization of the CPP instance that admits a strong partial order. Given such a polarization witness, its validity can be checked in polynomial time. The size of the polarization is at most $n$. The set of explicitly implied ordered-relation pairs is at most $n$. Computing the transitive closure over this set of pairs is then polynomial in $n$. If no reflexive pair (e.g., a $\prec$ a) is discovered, then there are no cycles in the transitive closure, and thus this represents a strong partial order. Otherwise, not.

\textbf{CPP is NP-complete.}
The known NP-complete problem NAE-3SAT (Not-All-Equal 3SAT) can be reduced to CPP.

The structure of a NAE-3SAT instance is a collection of clauses. Each clause consists of three literals. A literal is a propositional variable or the negation thereof. A clause is interpreted as the disjunction of its literals, and the overall instance is interpreted as the logical formula which is the conjunction of its clauses. Since each clause is of fixed size, the size of the NAE-3SAT instance may be measured by its number of clauses; call this $n$.

The decision question for NAE-3SAT is whether there exists a truth assignment to the propositional variables (propositions) that satisfies the instance formula such that, for each clause, at least one of its literals is false in the truth assignment and at least one is true. 
%(as necessary, of course, for the instance formula to be satisfied). (This extra condition on each of the clauses is the “not-all-equal” restriction.)

We can establish a mapping from NAE-3SAT instances to CPP instances which is polynomial time to compute and for which the decision questions are synonymous.
Let $p_{1}, .., p_{m}$ be the propositions of the NAE-3SAT instance. For clause \emph{i}, let ($L_{i,1}, L_{i,2}, L_{i,3}$) represent it, where each $L_{i,j}$ is a placeholder representing the corresponding proposition or negated proposition as according to the clause.
We build a corresponding CPP instance as follows. For each clause, we add three lists. For clause \emph{i}, add
[$X_{i,1}, a_i, b_i, Y_{i,1}$],
[$X_{i,2}, b_i, c_i, Y_{i,2}$], and
[$X_{i,3}, c_i, a_i, Y_{i,3}$].
Each $X_{i,j}$ and $Y_{i,j}$ are placeholders above, and correspond to the $L_{i,j}$’s in the clauses.

We replace them in the lists as follows. There are two cases for each: $L_{i,j}$ corresponds to a proposition or the negation thereof. Without loss of generality, let $L_{i,j}$ correspond to $p_k$ or to $\neg p_k$. If $L_{i,j}$ corresponds to $p_k$, we replace $X_{i,j}$ with element $t_k$ and $Y_{i,j}$ with element $f_k$. Else ($L_{i,j}$ corresponds to $\neg p_k$), we replace $X_{i,j}$ with $f_k$ and $Y_{i,j}$ with $t_k$.
Call the $t_i$ and $f_i$ elements propositional elements, and call the $a_i$, $b_i$, and $c_i$ elements confounder elements.

There exists a polarization of the CPP instance that admits a strong partial order iff the corresponding NAE-3SAT instance is satisfiable such that, for each clause, at least one of its literals has been assigned false.
For the propositional elements, let us interpret $t_i \prec f_i$ in the partial order as assigning proposition $p_i$ as true; and $f_i \prec t_i$ as assigning it false.

For any polarization of the CPP instance that admits a strong partial order, for each clause, $i$, one or two, but not all three, of the corresponding lists must have been reversed. Otherwise, there will be a cycle in the ordering relation over the confounder elements, $a_i$, $b_i$, and $c_i$.
Since not all three lists for clause $i$ can be reversed, $X_{i,1} \prec Y_{i,1}, X_{i,2} \prec Y_{i,2}$, or $X_{i,3} \prec Y_{i,3}$. And thus, at least one of the clause’s literals is marked as true, and so the clause is true. Since at least one of the three lists for clause i must be reversed, we know that $Y_{i,1} \prec X_{i,1}, Y_{i,2} \prec X_{i,2}$, or $Y_{i,3} \prec X_{i,3}$. And thus, not all of the clause’s literals are marked as true.
If two lists (coming from different clauses) “contradict” — that is, one implies, say, $t_i \prec f_i$ and the other that $f_i \prec t_i$ — then there would be a cycle in the transitive closure of the ordering relation. This would contradict that our polarization admits a strong partial order. Thus, for each $p_i$, the partial order either has $t_i \prec f_i$ or $f_i \prec t_i$. This corresponds to a truth assignment that satisfies the NAE-3SAT instance.

In the other direction, if no polarization of the CPP instance admits a strong partial order, then there is no truth assignment that satisfies the NAE-3SAT instance such that, for each clause, at least one of its literals has been assigned false.
\end{proof}
}

% Thus, Theorem~\ref{theo:nphardness} that our problem is 
Thus, our problem being 
NP-hard follows
from Lemma~\ref{lemma:NPcomplete}
(as it is a superset of the CPP problem),
and 
as
a solution
that an unconditional \IIOC\ is valid
can be verified in polynomial time,
it is NP-complete. 
%hence,
%, which shows that 
%our problem remains in the class NP-complete.

% \vspace{-2mm}
\begin{theorem}\label{theo:nphardness}
The problem of validating a given unconditional $\IIOC$ with non-empty context is NP-complete.
\end{theorem}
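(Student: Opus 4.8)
The plan is to establish the two halves of NP-completeness separately: membership in NP, which is routine, and NP-hardness, which I would obtain through a reduction from the Chain Polarity Problem (CPP), already shown NP-complete in Lemma~\ref{lemma:NPcomplete}. This mirrors the paper's informal remark that the unconditional \IIOC\ validation problem is a ``superset'' of CPP; the reduction below makes that embedding precise.

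For membership in NP, I would take as the certificate the pair of derived orders \A{A}[][*]\ and \A{B}[][*], equivalently a choice of polarity for each of the per-partition-group chains. Given such a certificate, verification is polynomial: for each partition group in \(\pi_{\set{X}}\), build \BGp[\A{A}, \A{B}]\ and extract its chains as in Section~\ref{sec:iioc-empty-context}, checking the two conditions of Theorem~\ref{theo:bipValidity}; then orient each chain according to the certificate, form the union of the oriented chains over \A{A}\ (and over \A{B}), and test acyclicity by a DFS / transitive-closure computation. Every step is polynomial in \(|\T{r}|\), so the problem lies in NP.

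For NP-hardness, I would reduce an arbitrary CPP instance --- a collection of lists \(\ell_1, \dots, \ell_k\) over a ground set of elements --- to an unconditional \IIOC\ validation instance. I would use a context whose attribute set is \set{X}, assigning one fresh context value \(x_i\) to each list \(\ell_i\), together with two attributes \A{A}\ and \A{B}. Reading \(\ell_i = [e_1, \dots, e_m]\), I treat the \(e_j\) as \A{A}-values and introduce \A{B}-values \(b_1^{(i)}, \dots, b_{m-1}^{(i)}\) that are \emph{local} to group \(x_i\), then add tuples realizing the bipartite path \(e_1 - b_1^{(i)} - e_2 - b_2^{(i)} - \cdots - e_m\) inside that group. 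By construction each group's \BGp[\A{A}, \A{B}]\ is a single chain (no 3-fan-out, acyclic), so the conditions of Theorem~\ref{theo:bipValidity} hold per group, the extracted \A{A}-chain equals \(\ell_i\), and its polarity is free. Because every \A{B}-value occurs in exactly one group, the \A{B}[][*]\ side can never force a cross-group constraint; hence the only binding requirement for \(\oc[\set{X}]{\A{A}[][*]}{\A{B}[][*]}\) is that the union of the polarized \A{A}-chains be a strong partial order. Since those chains are exactly the (polarized) CPP lists, the \IIOC\ is valid \emph{iff} the CPP instance admits a valid polarization. The construction uses \(O\!\left(\sum_i |\ell_i|\right)\) tuples and is computable in polynomial time, so combined with membership in NP the problem is NP-complete.

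The hard part will be the correctness of the gadget rather than the high-level argument. Specifically, I would need to verify that (i) the path construction yields no 3-fan-out and no cycle within any single group, so that each group genuinely produces a chain under Theorem~\ref{theo:bipValidity}; (ii) the degree-one endpoints \(e_1\) and \(e_m\) of the path, which are dropped as singletons when forming \BGp[\A{A}, \A{B}]\ and then reinserted, land in the correct positions so that the recovered \A{A}-chain is precisely \(\ell_i\) and not a proper sub-chain; and (iii) the polarity coupling between a group's \A{A}-chain and its \A{B}-chain is harmless, which is exactly what the group-local \A{B}-values guarantee (any of the \(2^k\) polarizations of the \A{A}-chains is realizable by a coupled witness order). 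Once these three points are checked, the equivalence with CPP is immediate and the theorem follows.
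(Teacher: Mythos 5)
Your proof is correct and follows the same route as the paper: NP membership via polynomial-time verification of a polarity certificate, and NP-hardness by reduction from the Chain Polarity Problem, whose NP-completeness is established separately via NAE-3SAT. The only difference is that the paper merely asserts that CPP ``is a sub-problem'' of unconditional \IIOC\ validation, whereas you supply the explicit database gadget (one partition group per list, with group-local right-hand-side values forming a bipartite path) that realizes this embedding --- a detail the paper leaves implicit, and which your construction handles correctly, including the no-3-fan-out/acyclicity check and the reinsertion of the degree-one endpoints.
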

% \vspace{-2mm}

% \begin{table}[t]
%     \centering
%     \caption{\label{tab:nae-to-cpp} A NAE-3SAT instance and the reduced CPP instance.}
%     \paddingT
%     \renewcommand{\arraystretch}{1.25}
%     \scalebox{0.87}{
%     \begin{tabular}{|c|c|}
% \hline
% Clause & Lists \\ \hline
% \multirow{3}{*}{$(p_1\lor p_2\lor \neg p_3)$} & $[t_1, a_1, b_1, f_1]$ \\ \cline{2-2} 
%  & $[t_2, b_1, c_1, f_2]$ \\ \cline{2-2} 
%  & $[f_3, c_1, a_1, t_3]$ \\ \hline
% \multirow{3}{*}{$(\neg p_1\lor p_2\lor \neg p_3)$} & $[f_1, a_2, b_2, t_1]$ \\ \cline{2-2} 
%  & $[t_2, b_2, c_2, f_2]$ \\ \cline{2-2} 
%  & $[f_3, c_2, a_2, t_3]$ \\ \hline
% \multirow{3}{*}{$(\neg p_1\lor\neg p_2\lor p_3)$} & $[f_1, a_3, b_3, t_1]$ \\ \cline{2-2} 
%  & $[f_2, b_3, c_3, t_2]$ \\ \cline{2-2} 
%  & $[t_3, c_3, a_3, f_3]$ \\ \hline
% \end{tabular}
% }
% \paddingD
% \end{table}

\begin{example}
Table~\ref{tab:nae-to-cpp} illustrates a NAE-3SAT instance with three clauses and its equivalent CPP instance with nine lists. In the CPP instance, $t_i \prec f_i$ in the partial order is interpreted as assigning proposition $p_i$ as \emph{true}, and $f_i \prec t_i$ as assigning it \emph{false}.
%In the CPP instance, the variables $X$ and $\overline{X}$ ensure that only one value is assigned to the variable $X$ in the equivalent NAE-3SAT instance. The same is done for the variables over $Y$ and $Z$. 
Also, the variables $a_i$, $b_i$ and $c_i$ ensure that there exists at least one true and one false assignment for the literals in each clause. This condition is satisfied as among the three lists generated for each clause, exactly one or two of them have to be reversed in order to avoid a cycle among $a_i$, $b_i$ and $c_i$. This translates to the corresponding literals having false assignment and the rest true assignments. Hence, any valid polarization for the lists in the CPP instance can be translated to a valid solution for the NAE-3SAT instance. 
\end{example}

We illustrate our approach
for validating \IIOC\ candidates
by employing a high quality SAT solver
in Section~\ref{SEC:SAT-Solver}.

% add example with NAE that makes the proof more appreciated
% \begin{figure}[h]
%     \center{\includegraphics[width=7cm]
%     {figs/parkeCPP.png}}
%     \caption{\label{fig:eduLattice} Intuition for the reduction of CPP to NAE-3SAT}
% \end{figure}

%In order to check the validity of an II-OC candidate and the orders derived from different ECs, we first iterate over pairs of partial orders in polynomial time to merge pairs of order if there is only one way for them to be merged, or to return ``Invalid'' if two conflicting PGs are found. This is done to simplify the next step of the algorithm which is the reduction to 3-SAT, and to also detect invalid cases that are easy to detect without spending much time. In our experiments, we actually show that most invalid cases are detectable in this step without requiring checking the next steps. Next, we reduce our problem to 3-SAT and polynomial time and use SAT-solvers to compute the final result. The details of these steps is explained in Section~\ref{sec:ii-algorithms}

%----------------------------------------------------------------------------%
% \vspace{-2mm}
\subsection{I/I ODs} \label{SEC/IIODs}

%%% Controversial! Have to check that this is correct.
Discovery of domain orders via \IIOD s
with an \emph{empty context}
(or with a non-empty context but considered \emph{conditionally})
is essentially impossible.
% This might seem like a ludicrous claim,
% at first.
% After all, 
% an \OD\ is nothing more than an \OC\ and a corresponding \FD.
% And we have just shown how to accomplish domain-order discovery
% via \IIOC s.
% The distinction 
% is that,
While we can discover \IIOD s that hold over the data,
we can only 
%ever 
infer the \emph{empty} order for the domains.
The \FD\ essentially masks any information that could be derived
about the 
%domain 
orders.

%%% \begin{change}
\begin{theorem}
If \fd{\set{X}\A{A}}{\A{B}},
% If \fd{\set{Y}\A{A}}{\A{B}}\
% and \(\set{Y} \subseteq \set{X}\),
% $\ordersCtxSet{\set{Y}\A{A}}{\emptyLst{}}{\A{B}}$,
then the conditional \IIOD\ candidate
\oc[\set{X}]{\A{A}[][*]}{\A{B}[][*]}\ must be valid.
Furthermore,
there is a unique partial order
that can be derived
for \A{A}[][*]\ and for \A{B}[][*]:
the empty order.
\end{theorem}

\begin{table}
% \begin{small}
\paddingT
\parbox{.45\linewidth}{
\centering
%     \caption{\label{tab:iiod-sample-database} Valid ${\sf I/I OD}$s.}
%     \paddingT
%     \scalebox{0.87}{
%     \begin{tabular}{c |  c c | c c c|}
% %\hline
% \# & \A{A} & \A{B} & \A{E} & \A{C} & \A{D} \\ \hline 
% \tup{t}[1] & $a_1$ & $b_1$ & $x_1$ & $c_1$ & $d_1$ \\ %\hline
% \tup{t}[2] & $a_2$ & $b_1$ & $x_1$ & $c_2$ & $d_2$ \\ %\hline
% \tup{t}[3] & $a_3$ & $b_2$ & $x_2$ & $c_1$ & $d_1$ \\ %\hline
% \tup{t}[4] & $a_4$ & $b_2$ & $x_2$ & $c_3$ & $d_2$ \\ %\hline
% \tup{t}[5] & $a_5$ & $b_3$ & $x_3$ & $c_2$ & $d_1$ \\ %\hline
% \tup{t}[6] & $a_6$ & $b_3$ & $x_3$ & $c_3$ & $d_2$ \\ %\hline
% \end{tabular}
% }
    \caption{\label{tab:iiod-sample-database} Valid ${\sf I/I\ OD}$.}
    \paddingT
    \scalebox{0.87}{
    \begin{tabular}{c |c c c}
%\hline
\# & \A{year} & \A{month} & \A{version\#} \\ \hline 
\tup{t}[1] & 2018 & Jan & v99 \\ %\hline
\tup{t}[2] & 2018 & Feb & v100 \\ %\hline
\tup{t}[3] & 2019 & Jan & v99 \\ %\hline
\tup{t}[4] & 2019 & March & v100 \\ %\hline
\tup{t}[5] & 2020 & Feb & v99 \\ %\hline
\tup{t}[6] & 2020 & March & v100 \\ %\hline
\end{tabular}
}
}
\hfill
\parbox{.45\linewidth}{
\centering
    \caption{\label{tab:cpp-sat-reduction-sample} CPP to SAT.}
    \paddingT
     \scalebox{0.87}{
    \begin{tabular}{ c| c c c }
%\hline
\# & \A{C} & \A{A} & \A{B} \\ \hline 
\tup{t}[1] & $1$ & $1$ & $1$ \\ %\hline
\tup{t}[2] & $1$ & $2$ & $2$ \\ %\hline
\tup{t}[3] & $1$ & $3$ & $3$ \\ %\hline
\tup{t}[4] & $2$ & $1$ & $4$ \\ %\hline
\tup{t}[5] & $2$ & $2$ & $5$ \\ %\hline
\tup{t}[6] & $2$ & $4$ & $5$ \\ %\hline
\tup{t}[7] & $2$ & $4$ & $6$ \\ %\hline
% \tup{t}[1] & $1$ & $1$ & $1$ \\ %\hline
% \tup{t}[2] & $1$ & $2$ & $1$ \\ %\hline
% \tup{t}[3] & $1$ & $2$ & $2$ \\ %\hline
% \tup{t}[4] & $1$ & $3$ & $2$ \\ %\hline
\end{tabular}
}
}
\paddingD
% \end{small}
\end{table}

\begin{example}
Consider the \IIOC\ 
%candidate
\oc{\A{festival}^*}{\A{monthGreg}^*}\
and Table~\ref{table:festivals}.
Since the \FD\ of \fd{\A{festival}}{\A{monthGreg}}\ holds,
the empty partial order is the 
% only valid 
implicit order
over 
%the values of 
\A{monthGreg}.
\end{example}

However,
in the case of a candidate \IIOC\
with a non-empty context considered unconditionally
paired with an \FD\
that holds also ``only within a non-empty context'',
it \emph{is} possible for us to discover meaningful
domain orders.

\begin{example}
Consider Table~\ref{tab:iiod-sample-database},
which shows different versions
of a software released in each year and month,
and the unconditional \IIOD\
of \od[\brac{\A{year}}]{\A{month}}{\A{version\#}^*}.
The only valid strong partial orders
over the values of \A{month}\
and 
% over the values of 
\A{version\#}\
are
    \(\emph{Jan} \prec \emph{Feb} \prec \emph{March}\)
        and
    \(\emph{v99} \prec \emph{v100}\),
or the reversals of these,
respectively.
% This is the case
% even though there exists a one-to-one correspondence
% between the values of the columns
% in each partition group.
% Note more generally,
% the \FD\ might not hold.
% Also,
% a rather large number of versions stored as strings
% would make it nearly impossible for humans to decipher directly.
% %and are beyond the approaches in the previous sections.
\end{example}

\section{Using a SAT Solver for I/I OC's}
% \label{sec:merge-all-pgs}
\label{SEC:SAT-Solver}

% \todo{Describe it wrt to the 4 properties from e-mail? List them as 1, 2, 3, 4 the same as in the e-mail. Then, you can follow how it can be reduced, so readers know the intuition.}

%Given the set of not pruned polarity groups from Algorithm \ref{pc:mergePolarityGroups}, 
%%%\begin{change}
Given that discovering implicit domain orders via \IIOC s is NP-hard,
we reduce it to an instance of the SAT problem
to validate the candidate
and then to establish
valid strong partial orders.
%over the domains.
%%% when possible.
The first step is similar to the conditional case
in Section~\ref{sec:iioc-empty-context}:
we derive bipartite graphs, \BG[i]'s,
for the tuples from each partition group.
Presence of cyclicity or 3-fan-out invalidates the candidate,
as by Theorem~\ref{theo:bipValidity}.
Thus next,
we check each \BG[i]\ for cyclicity or 3-fan-out;
this validates or invalidates the candidate in linear time.%
\footnote{
    The constraint that each \BG[i]\ has no 3-fan-out
    restricts the size of the graph to be linear
    in the number of distinct values of the domain.
    Without this,
    the size of the graph could be quadratic
    in the number of distinct values of the domain.
}
% This condition is required for an $\IIOC$
% to be valid even within a single partition group.
% This pruning helps to reduce the number of generated clauses
% in our corresponding SAT instance.

To translate an instance of our \IIOC\ validation problem
isomorphically into a SAT instance,
we create two types of clauses:
one type to capture the constraints on order
implied by the data;
and the second type
that encodes transitive closure
over these.
% to enforce a strong partial order over the values.

%%%\end{change}

% Since the general case of discovering implicit domain order through \IIOC s is NP-hard, we reduce it to an instance of the SAT problem to validate the orientation of polarity groups. 
% To verify an $\IIOC$ candidate, there are three conditions that have to be satisfied for the graph of union orders from various polarity groups.
%over the graph representing union orders over polarity groups.

\begin{enumerate}[nolistsep,leftmargin=*]
    % \item \textbf{Chain order.} The resulting union orders have to be consistent with the partial orders inferred from each of partition groups. 
\item \textbf{No Swaps.} 
    \blue{To enforce an order between pairs of values in each partition group without causing a \emph{swap}.}
    % To determine if an order should exist between two distinct values that co-occur within a partition group, and to guarantee it not causing a \emph{swap}.    
    
    %While being consistent with the individual partial orders, 
    % There has to be no swap within the partition group in the table with regard to the union orders. 
    % If an order has to exist between distinct values
    % that co-occur within a partition group, 
    % it should not cause a \emph{swap}.
    
\item \textbf{Transitivity.}
    To guarantee a valid strong partial order,
    we need transitive closure to check for cycles.
    % we require it to satisfy the transitivity relationship. 
\end{enumerate}

We now explain the translation to SAT.
The input is 
% a list of polarity groups denoted by $\sf{PG}$ and 
a list of bipartite graphs, \BG's,
which indicate the co-occurrence of values
in each partition.
For each 
% polarity group $\sf{PG}_i$ and 
bipartite graph \BG[j],
the left-hand side (LHS) and right-hand side (RHS) denote
the values of the two attributes of the \IIOC\ candidate,
respectively.
Let $l_1, \dots, l_m$ and $r_1, \dots, r_k$
denote the distinct values of each.
% Note $\sf{BG}$s may include a larger set of values, as for instance,  \emph{singleton} elements (i.e., values $l_i$ and $r_j$ that only co-occur with each other within a partition group), are not present within polarity groups. 

We define two sets of propositional variables:
$\forall 1\le i, j\le m: l_{i, j}$ and $\forall 1\le i, j\le k: r_{i, j}$. Assigning \emph{true} to a variable $l_{i, j}$ indicates $l_i\prec l_j$, while assigning \emph{false} means that either $l_j\prec l_i$, or the order between these values has not been discovered. Thus, for every two variables $l_{i,j}$ and $l_{j,i}$,  $\neg(l_{i,j}\land l_{j,i})\equiv(\neg l_{i,j}\lor\neg l_{j,i})$ is added, as these variables cannot both be \emph{true}. The same applies for variables $r_{i,j}$.

\textbf{No swaps.} 
\blue{$\forall (l_u, r_t),$ $(l_v, r_w)\in {\sf BG_i}, l_u\ne l_v\text{ and }r_t\ne r_w: (l_{u,v}\land r_{t,w})\lor(l_{v,u}\land r_{w, t})\equiv(l_{u,v}\lor r_{w,t})\land(l_{v,u}\lor r_{t,w})$. 
Note that the initial
conditions ($(\neg l_{u,v}\lor\neg l_{v,u})$ and $(\neg r_{t,w}\lor\neg r_{w,t})$) were used to simplify these conditions.}

\textbf{Transitivity.} Next, we add the following clauses to ensure that transitivity is satisfied: $\forall 1\le u, v, w\le m\land u, v, w$ \mbox{distinct}: $(l_{u, v}\land l_{v, w})\implies l_{u, w}$. We add similar clauses for the RHS values. 

\begin{theorem}
The unconditional $\IIOC$ candidate is valid \emph{iff} the corresponding SAT instance is satisfiable.
% The generated 3-SAT instance is satisfiable iff the initial CPP instance has a polarization of its chains that is a valid partial order.
\end{theorem}

% The correctness of this reduction to SAT follows from the no swaps and transitivity conditions described above.
If the SAT instance is satisfiable, to derive the final partial orders
over the values of $\A{A}$ and $\A{B}$, we take the satisfying assignment 
and set $i\prec j$ for $\A{A}$ iff $l_{i,j}=\emph{true}$, and similarly 
for the values of $\A{B}$.
\blue{
To achieve a pair of \emph{strongest derivable orders}, 
% according to Section~\ref{sec:pairs of implicit domains},
% Definition~\ref{DEF/A* and B* via IIOC}, 
we remove the order over pairs of values which should not exist in the final order, 
while keeping the order graph valid.
% without changing the validity of the order graph. 
For every pair of distinct values $l_u, l_v\in\PG_i$ where $l_u\prec l_v$, 
we keep them in the final order \emph{iff} one of these conditions holds 
(and similarly for the RHS values):
1) the nodes $l_u$ and $l_v$ %(or equivalently $r_t$ and $r_w$) 
are in the same connected component in $\sf{BG}_i$ 
and the path from $l_u$ to $l_v$ contains at least two nodes with degree two or larger;
% and the LHS and RHS of this connected component each have at least two nodes;
2) $\exists\sf{BG}_j, j\ne i$ and distinct values $v_r, v_s$ belonging to 
the same attribute, $s.t.$ $v_r,$ $v_s$ $\in$ $\BG_j$ $\land$ $v_r$ and $v_s$ 
are in the same connected component in $\BG_i$ as $l_u$ and $l_v$, 
%(or equivalently $r_t$ and $r_w$), 
respectively
(note that $v_r$ and $v_s$ could be the same as $l_u$ and $l_v$).  
%%
% For value $v_p\in \PG_i$, let $CC_i(v_p)$ indicate the connected component in $\PG_i$ that $v_p$ belongs to. 
% For every pair of distinct values $l_u, l_v\in\PG_i$ where $l_u\prec l_v$, we keep them in the final order \emph{iff} one of these conditions holds (and similarly for the RHS values):
% 1) if $CC_i(l_u)=CC_i(l_v)$, and the LHS and RHS of $CC_i(l_u)$ each have at least two nodes; 
% or 2) if
% $\exists \BG_j, v_p, v_q, j\ne i\land v_p\ne v_q\land (v_p, v_q\in \A{A} \lor v_p, v_q\in \A{B}), s.t.: v_p\in CC_i(l_u)\land v_q\in CC_i(l_v)\land v_p, v_q\in \BG_j$.
Intuitively, either of these conditions would make it impossible to \emph{remove} an order between two values through valid \emph{transposition}s within the same \emph{witness class}, as defined in Section~\ref{sec:pairs of implicit domains}.% (Definitions~\ref{DEF/Order Transposition} \& \ref{DEF/Witness Class}).
}
\diagram{These steps for validating an unconditional $\IIOC$ 
candidate are included in Figure~\ref{fig:diagram-ii}.}

\begin{example}
Consider Table~\ref{tab:cpp-sat-reduction-sample} and the $\IIOC$ $\simularC{\A{C}}{\A{A}^*}{\A{B}^*}$ candidate. The propositional variables for the reduction are $l_{1,2},$ $l_{2,1},$ $\dots,$ $l_{4,3}$ and $r_{1,2},$ $\dots,$ $r_{6,5}$.
Initial clauses are generated for each pair of values to ensure that an order in both directions is not established, e.g., $(\neg l_{1,2}\lor\neg l_{2,1})\land(\neg l_{2,3}\lor\neg l_{3,2})$.
For the \emph{no swaps} condition, the clauses $(l_{1,2}\lor r_{2,1})\land(l_{2,1}\lor r_{1,2})$ are generated for tuples $\tup{t}_1$ and $\tup{t}_2$, and similarly for the rest of the pairs of tuples in $\BG_1$ and $\BG_2$.
% DELETE Within the first partition group, when considering $\tup{t}_1$ and $\tup{t}_2$, because of $\tup{t}_4$ and $\tup{t}_5$ in the second partition group (where the values 1 and 2 are common on the LHS), the clauses 
% DELETE $(l_{1,2}\lor r_{2,1})\land(l_{2,1}\lor r_{1,2})$ are generated. But as the value 3 is not present in either of LHS or RHS of the second partition group, for tuples $\tup{t}_1$ and $\tup{t}_3$ (and similarly for $\tup{t}_2$ and $\tup{t}_3$) the generated clauses are as follows: 
% DELETE $(l_{1,3}\implies r_{1,3})\land(l_{3,1}\implies r_{3,1})\land(l_{1,3}\implies l_{1,3})\land(l_{3,1}\implies l_{3,1})$.
%
% DELETE A similar idea is applied in the second partition group as well. For example, the clauses $(l_{2,4}\lor r_{6,5})\land(l_{4,2}\lor r_{5,6})$ are generated for tuples $\tup{t}_5$ and $\tup{t}_7$ without requiring checking the first partition group (as their respective nodes are connected in $\BG_2$).
%
% To satisfy the no swaps condition, the following clauses are generated:
% % $(l_{1,2}\implies\neg r_{2,1})\land(l_{2,1}\implies\neg r_{1,2})$ 
% $(l_{1,2}\lor r_{2,1})\land(l_{2,1}\lor r_{1,2})$ 
% for tuples $\tup{t}_{1}$ and $\tup{t}_{3}$, i.e., edges $(1, 1)$ and $(2, 2)$ in the $\sf{BG}$, and similar clauses are added for the edges $(1,1),(3,2)$ and $(2,1),(3,2)$. 
%
Finally, we add the clauses $(l_{1,2}\land l_{2,3})\implies l_{1,3}$, $(l_{1, 3}\land l_{3,2})\implies l_{1,2}$, and accordingly the remaining clauses to capture the \emph{transitivity}. 
%condition.

% ----- new -------
\blue{
Since this is a valid $\IIOC$, some strong partial order can be derived 
using the SAT variable assignments. To derive the \emph{strongest} orders 
from this pair, the final orders between values such as (2, 4) and (1, 2)
on the LHS are kept, as these pairs satisfy the first and 
second condition, respectively (values 2 and 4 exist in a 
connected component in $\BG_2$ with two nodes with degree two along their path, %with two values on each side, 
and values 1 and 2 are present on the LHS of both $\BG_1$ and $\BG_2$). 
However, the relation between values 1 and 3 (and similarly 2 and 3) 
is removed since these values do not satisfy any of the conditions. 
This process is repeated for the RHS as well.
}
\end{example}

\begin{lemma}
\blue{The cost of the SAT reduction is ${\mathcal O}(n+pm^2+m^3)$.}
\end{lemma}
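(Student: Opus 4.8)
The plan is to account separately for the cost of each phase of the reduction described above and to show that their sum collapses to ${\mathcal O}(n + pm^2 + m^3)$. First I would bound the preprocessing cost. Building the bipartite graph $\mathsf{BG}_i$ for every partition group amounts to a single pass over the table: each tuple is routed to its context group via a hash on the context values and contributes exactly one edge. This phase, together with the per-graph cyclicity and $3$-fan-out tests whose combined cost is linear in the total number of edges, therefore runs in ${\mathcal O}(n)$, since the edges across all groups number at most $n$. Note that no sorting is required here, matching the ${\mathcal O}(n)$ preprocessing already established for the conditional case.

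Next I would count the propositional variables and the two clause families. There are ${\mathcal O}(m^2)$ variables $l_{i,j}$ for the left attribute and ${\mathcal O}(m^2)$ variables $r_{i,j}$ for the right, so the ``at most one direction'' clauses $(\neg l_{i,j} \lor \neg l_{j,i})$ (and the corresponding clauses over the $r$ variables) number ${\mathcal O}(m^2)$ as well. The transitivity clauses range over all ordered triples of distinct values and thus contribute ${\mathcal O}(m^3)$ clauses for each side; crucially, this family is generated once over the value domains and does not depend on the partition groups.

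The delicate term is the no-swaps family, and this is where I expect the main obstacle to lie. These clauses are emitted per partition group, one for each pair of co-occurring edges in $\mathsf{BG}_i$, so a naive bound on the number of edge-pairs would be quadratic in the number of edges, and the number of edges is in principle quadratic in the number of distinct values. The key observation --- exactly the structural fact recorded in the footnote --- is that the candidate reaches the clause-generation phase only after every $\mathsf{BG}_i$ has passed the $3$-fan-out test; a bipartite graph with no $3$-fan-out has maximum degree two, hence ${\mathcal O}(m)$ edges and ${\mathcal O}(m^2)$ edge-pairs. Summing over the $p$ partition groups then yields ${\mathcal O}(pm^2)$ no-swaps clauses, rather than a larger bound.

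Finally I would add the phases: ${\mathcal O}(n)$ for preprocessing, ${\mathcal O}(m^2)$ for variables and initial clauses, ${\mathcal O}(pm^2)$ for no-swaps, and ${\mathcal O}(m^3)$ for transitivity. Since $m^2 \le pm^2$ and $m^2 \le m^3$, the two ${\mathcal O}(m^2)$ terms are absorbed, giving the claimed ${\mathcal O}(n + pm^2 + m^3)$.
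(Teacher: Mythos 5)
Your proposal is correct and follows essentially the same decomposition as the paper's proof: count the ${\mathcal O}(m^2)$ variables and initial clauses, bound the no-swaps clauses by ${\mathcal O}(pm^2)$ via the observation that each 3-fan-out-free bipartite graph has only ${\mathcal O}(m)$ edges, and add the ${\mathcal O}(m^3)$ transitivity clauses. Your treatment is in fact slightly more explicit than the paper's in accounting for the ${\mathcal O}(n)$ preprocessing term, which the paper leaves implicit.
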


% Let 
% %$|\R{r}|$ and 
% $m$ denote the number of tuples and the number of unique values of an attribute, respectively. 
% The SAT representation has $O(m^2)$ propositional variables. We initially also generate $O(m^2)$ clauses for every two variables $v_{i,j}$ and $v_{j,i}$. 
% % For each polarity group, it takes $O(\min(k, m))$ time to generate chain order clauses, where $k$ denotes the number of tuples in that partition group. This is because the number of edges in the partial order derived from each partition group is $O(\min(k, m))$, since the initial $\sf{BG}$ is acyclic and does not contain any \emph{3-fan-out}s and each edge corresponds to at least one tuple in the partition group. This makes the total runtime of this step (and the number of generated clauses) $O(\min(|\R{r}|, m^2))$.
% Generating the no swaps clauses for each $\sf{BG}$ takes $O(m^2)$ time. This is because the number of edges in the bipartite graph derived from each partition group is $O(m)$, since the initial $\sf{BG}$ is acyclic and does not contain any \emph{3-fan-out}s. This makes the runtime of this step (and the number of generated clauses) $O(pm^2)$, where $p$ denotes the number of partition groups. Also, intuitively adding transitivity  clauses takes $O(m^3)$ time. This makes the total cost of the reduction to the SAT problem $O(pm^2+m^3)$. However, 
Since $m$ tends to be small in practice (i.e., $m\ll |\R{r}|$) for meaningful cases 
and $p$ heavily depends on $m$ as well, this runtime is manageable in 
real-life applications (Section~\ref{sec:experiments}).
% \blue{The co-occurrence of value pairs in each $\BG$ can also be stored at the same time, which can be used to derive the final order from the potential SAT answer without changing the runtime.}

We also use an optimization based on the overlap between values in 
different partition groups to decrease the number of variables and clauses. 
Initially, we compute disjoint sets of values that have co-appeared 
% together 
in the same partition groups. While considering pairs or 
triples of values to generate variables or transitivity clauses, 
respectively, 
% we only considers values within the same subsets,
we consider these sets of attributes separately, 
which reduces the runtime of the algorithm, 
% (and the number of generated clauses), 
without affecting the correctness of the reduction. 
Furthermore, before reducing to a SAT instance, 
we consider pairwise $\BG$s and check their compatibility, 
in order to falsify impossible cases as early as possible. 
\section{Measure of Interestingness}
\label{sec:measure-interestingness-def}

The search space and the number of discovered implicit domain orders
%through $\EIOD$s and $\IIOC$s 
may be large in practice. 
% \footnote{For a single attribute, several $\OD$s may be found with different contexts and RHS/LHS attributes. This is why the number of implicit orders found can be larger than the number of attributes.}
\blue{Inspired by previous work \cite{SGG+2017:ODD,chiang-data-quality-2008},}
% To 
to decrease the cognitive burden of human verification, we propose a measure of \emph{interestingness} to rank the discovered domain orders based on how 
% they resemble
close each is to being
a strong total order. %, 
% %the degree the found orders resemble a total order
% the number of unique values, and how much the  implied orders between various partition groups overlap with each other. 
% %We call this the \emph{pariwise} measure of interestingness.
% % We consider two measures of interestingness, one based on the \emph{pairwise} relationship between values, and a simplified one based on the longest-path in the implied order. 
\blue{We argue that by focusing on similarity to a strong total order, this measure is successful in detecting meaningful and accurate implicit orders, and 
% As this is a novel work in extracting implicit orders from data and assessing their interestingness, w
% We then 
demonstrate this in our experiments in Section~\ref{sec:experiments}.
% that our measure indeed works as expected.
}

Given a DAG $G$ representing a strong partial order, the \emph{pairwise} interestingness measure is defined as \emph{pairwise}$(G)=|\emph{pairs}(G)| / \binom{m}{2}$, where $\emph{pairs}(G)$ = $\{(u, v): u, v \in G$ \emph{and there is a path between} \text{$v$  \emph{and} $u$}\}, and $m$ is the number of vertices in $G$. The number of pairs of vertices that are connected demonstrates the quality of the found strong partial orders, while the binomial coefficient in the denominator is used to normalize. 
% over the possible pairs with respect to the number of unique values $m$. 
Based on this measure, a strong total order graph has the perfect score of $1$, while a completely disconnected graph has a score of $0$.

% \vspace{-2mm}
\begin{example}
Consider the order graph $G$ presented in Figure~\ref{fig:eioc-num-lunar-merged}. There are 23 pairs of connected vertices 
%(e.x., \emph{Corner}, \emph{Peach}, \emph{Corner}, \emph{Pomeg.} etc.), 
and $\binom{8}{2}=28$ possible pairs. Thus, the pairwise score is $\emph{pairwise}(G)$ $=$ $\frac{23}{28}$ $\approx$ $0.82$.
\end{example}

For conditional implicit orders, we divide the number of \emph{pairs} 
in each partition group over the total number of pairs possible among \emph{all} 
the values in the attribute, and then compute their average. This is to prevent 
candidates with many partition groups with less interesting partial orders from 
achieving a high score. To achieve a score of 1, the partial order in each
partition group needs to be strong total order over \emph{all} the values in the attribute. 
% For conditional implicit orders, we divide the number of \emph{pairs} 
% in each partition group over the total number of pairs possible among \emph{all} 
% the values in the attribute, and then compute their average. This is to prevent 
% candidates with many partition groups with less interesting partial orders from 
% achieving high scores, as for achieving score 1 all the partial orders in the
% partition groups need to be strong total orders over \emph{all} the values in the attribute. 
%
Our algorithm for computing this measure may take quadratic time ${\mathcal O}(m^2)$ in the number of vertices in the graph, which corresponds to the number of unique elements in the attribute.
This is not significant, in practice (Section~\ref{sec:experiments}). 
%Our experiments (Section~\ref{sec:experiments}) show that the effect of computing the \emph{pairwise} score for valid candidates is not significant for the total time of the discovery algorithm.

% For each side of a conditional $\IIOC$ candidate, we compute the score of partial orders derived from individual partition groups and then compute their harmonic mean to get a score for each side (harmonic mean is used in order to penalize low-scoring small chains). To get a final score for the $\IIOC$ as a whole, we suggest using the minimum of the scores of LHS and RHS. In Section~\ref{sec:experiments}, we investigate the effectiveness of this measure in practice and quantify its effect on the runtime.

%$\EIOC$s and $\IIOC$s (both conditional and unconditional). We compare both measure in Section~\ref{sec:experiments}.

% It is worth noting that both of these measures are applicable to \EIOC s as well as $\IIOC$s, since both these OCs will result in one or two partial order graphs respectively. 
%In Section \ref{sec:experiments} we consider the cost of computing these measures in practice, as well as comparing the score of valid IOCs on them.

%\end{change}

\newcounter{exp-c}
\setcounter{exp-c}{0}

% \vspace{-1mm}
\section{Experiments} \label{sec:experiments}

%We now present an experimental evaluation of our techniques to quantify their efficiency and effectiveness.  
%In Sec.~\ref{sec:exp:scal}, we start by investigating scalability with the number of rows and columns, and we point out the efficiency of the SAT solver. In Sec.~\ref{sec:exp:lattice}, we highlight the relationship between interestingness and lattice level (which determines the context size). 
%Finally, in Sec.~\ref{sec:exp:apps}, 
% we show that moving beyond explicit order reveals many more $\OD$s in real-life data, and that our approach is effective in finding many new implicit orders that do not exist in the knowledge-base. 
%\blue{we demonstrate the applicability of implicit orders is different domains as discussed in Section~\ref{SEC/intro}.}

%We now present qualitative experiments, focusing on interesting orders found in real-life datasets, and quantitative experiments, focusing on the performance and scalability of discovering semantic ODs in real-life datasets.
We implemented our implicit domain order discovery algorithm, named $\iOrder$,
% \blue{\footnote{\blue{Our code can be accessed online at \url{https://github.com/rkaregar/iOrder}.}}} 
on top of a Java implementation of the  set-based $\EEOD$ discovery algorithm~\cite{SGG+2017:ODD,SGG2018:BiODD}. 
%We implemented the discovery algorithm on top of a Java code base developed by the authors of FASTOD. We rely on the code base for the validation of $SynOC$s. Validation methods for other OCs have been developed as part of our work. 
Furthermore, we use Sat4j, 
%\footnote{https://www.sat4j.org}%
which is an efficient SAT solver library~\cite{le2010sat4j}. 
Our experiments were run on a machine with a Xeon CPU 
%E5-2630 v3 
2.4GHz with 64GB RAM. 
%
%\subsection{Data Characteristics} \label{sec:dataChars}
We use two integrated datasets from the 
Bureau of Transportation Statistics (BTS) and the North Carolina Board of Elections (NCSBE):\blue{\footnote{\blue{These datasets can be accessed through \url{https://cs.uwaterloo.ca/~mkaregar/datasets/}.}}}
% We used datasets from the North Carolina State Board of Elections (NCSBE) 
% %\footnote{https://www.ncsbe.gov/ \label{footnote:voter}} 
% and the Bureau of Transportation Statistics (BTS):
% %\footnote{https://www.bts.gov/ \label{footnote:flight}}.

% available through the Hasso Plattner Institute (HPI) repository%
% \footnote{http://metanome.de}%
% and the UCI Machine Learning repository%
% \footnote{https://archive.ics.uci.edu}. 
\begin{itemize}[nolistsep,leftmargin=*]
    \item \textbf{Flight} 
    %The obtained ${\sf Flight}$ dataset 
    contains information about flights in the US with 1M tuples and 35 attributes (\url{https://www.bts.gov}). %(Footnote~\ref{footnote:flight}).
    % \blue{Each row in this dataset includes information about a single flight such as flight date, airline, origin airport, origin city, the duration and distance of the flight and delays caused by different factors.}
    %This dataset has 20 columns and 500K rows, and contains information on U.S. domestic flights. Each row includes information about a single flight such as flight date, carrier, origin airport ID, origin city, origin state, and origin WAC.
    \item \textbf{Voter} 
    %The obtained ${\sf Voter}$ dataset 
    contains data  about voters in the US with 1M tuples and 35 attributes
    (\url{https://www.ncsbe.gov}).
    %(Footnote~\ref{footnote:voter}).
    % \blue{Each row in this dataset includes information about an individual registered to vote in the state of North Carolina such as name, county, address, phone number, age, registration date, race, political affiliation, and sex.}
    %This dataset has 22 columns and around 940K rows, and contains voter registration information from the state of North Carolina. Each row includes information about an individual registered to vote in the state of North Carolina such as district, name, address, phone number, age, registration date, race, political affiliation, and sex.
% \todo{TODO Reza: should we remove the synthetic dataset from our experiments and just stick to three real-world datasets?}
%     \item \textbf{fd-reduced-30}: This synthetic dataset has 30 columns and 250K rows. It was generated by the dbtesma\footnote{https://sourceforge.net/projects/dbtesma} data generator.
    % \item \textbf{adult}: This dataset has 14 columns and around 49K rows. It contains census data collected in 1994. Each row includes information about an individual such as their age, education, employment status, marital status, occupation, race, capital gain, capital loss, native country, and hours per week of work.

%    \item \textbf{Olympic Athletes.} The $\sf{olympic \mbox{ } athletes}$  dataset has 13 attributes and around 250K rows and contains information about Olympic sport events, scraped from the internet\footnote{www.sports-reference.com}. 
\end{itemize}
\blue{We chose these datasets due to their size for scalability experiments and for having real-life attributes with interesting implicit orders.}
\subsection{Scalability}
\label{sec:exp:scal}

\refstepcounter{exp-c}\label{exp-scal-tup}
% \subsection{Scalability in $|\T{r}|$}
\textbf{Exp-1: Scalability in $|\T{r}|$.}
We measure the running time of $\iOrder$ by varying the number of tuples %as reported in 
(Figure~\ref{figure:scalTup}). We use the $\sf{Flight}$ and $\sf{Voter}$ datasets
%%
% up to 1M tuples, and we select 10 attributes annotated by \blue{domain experts} as potential candidates for implicit orders.
%%
\blue{with 10 attributes} and up to 1M tuples, \blue{by showing data samples to users and asking them to mark attributes as potential candidates.}
\blue{In the absence of user annotation, the algorithm can be run multiple times
over different subsets of attributes 
to capture potential implicit orders.}
% \blue{The annotation was done by selecting attributes with similar domains (e.g., $\A{age}$ and $\A{ageGroup}$) and marking attributes with not too many unique values (<10k) or attributes of type string (rather than numeric) as potential candidates for implicit orders. In the absence of user annotation, this process can be automated by applying automatic heuristics based on names and cardinality of attributes and running the algorithm multiple times to capture all the potential implicit orders.}
Figure~\ref{figure:scalTup} shows a linear runtime growth as computation is dominated by the verification of $\OD$ candidates, and the non-linear factors in our algorithm
% (including NP-complete ones solved by the SAT solver (Exp-3)) 
tend to have less impact. Thus, $\iOrder$ scales well for large datasets. 
% Also, $\iOrder$ takes more time on the ${\sf Voter}$ dataset with the same number of tuples, 
% %which indicates that this dataset is generally more difficult compared to the ${\sf Flight}$ dataset, 
% due to a higher cardinality of the attributes and connectivity of union orders.

% \begin{figure}[t]
%     \center
%     %space
%     % \includegraphics[scale=0.48]{pics/cols.png}
%     \includegraphics[width=0.5\textwidth]{pics/rows.pdf}
%     %\includegraphics[scale=0.55]{fig/rows.pdf}
%     \paddingT
%      \caption{Scalability and effectiveness in $|\textbf{r}|$.}
%      \label{figure:scalTup}
%     \paddingD
% \end{figure}

\refstepcounter{exp-c}\label{exp-scal-attr}
% \subsection{Scalability in $|\T{R}|$}
\textbf{Exp-2: Scalability in $|\T{R}|$.}
Next, we 
%measure the runtime of $\iOrder$ by 
vary the number of attributes. We use the $\sf{Flight}$ and $\sf{Voter}$ datasets with 1K tuples 
(to allow experiments with a large number of attributes in reasonable time)  
and up to 35 attributes. Figure~\ref{figure:scalAttr} illustrates that the running time increases exponentially with the number of attributes (the Y-axis is in log scale). This is because the number of implicit order candidates is exponential in the worst case.
%, and the set-based OD discovery algorithm~\cite{SGG+2017:ODD,SGG2018:BiODD} has exponential worst case complexity in the number of attributes (Section ~\ref{sec:prelims}). 
% Similar to Exp-1 for scalability in the number of tuples, the $\sf{Voter}$ dataset requires more time for the same number of attributes.
The $\sf{Voter}$ dataset requires more time for the same number of attributes due to a larger number of candidates.

\refstepcounter{exp-c}\label{exp-npc-cases}
\textbf{Exp-3: NP-complete Cases in Practice.} %\label{sec:NPHardnessInPractice}
The most general case of implicit domain order discovery through unconditional $\IIOC$s is NP-complete (Section~\ref{ii-with-context}).
%, which can potentially take exponential runtime  in the worst case. 
However, the majority of observed cases took a short time. 
The cases reduced to SAT were on average solved in under 60 ms in Exp-\ref{exp-scal-tup} and Exp-\ref{exp-scal-attr}, indicating that NP-complete cases are handled well in practice. 
In Exp-\ref{exp-scal-attr}, with varying the number of attributes and 1K tuples, on average, %around 
33\% of the total runtime was spent on reducing to, and solving, the SAT instances.  However, in the corresponding Exp-\ref{exp-scal-tup},  with varying the number of tuples up to 1M tuples, this ratio was less than 1\%. This is attributed to the large number of %implicit order 
candidates in Exp-\ref{exp-scal-attr} %(without annotating them to narrow the search space) 
and the small number of tuples, so the linear steps of the algorithm do not dominate the runtime.

\begin{figure}[t]
 %\paddingT
% \label{figure:biGraph}
    \centering
    \begin{subfigure}[]{0.23\textwidth} % boxplot 1
        \centering
        \includegraphics[width=4.05cm]{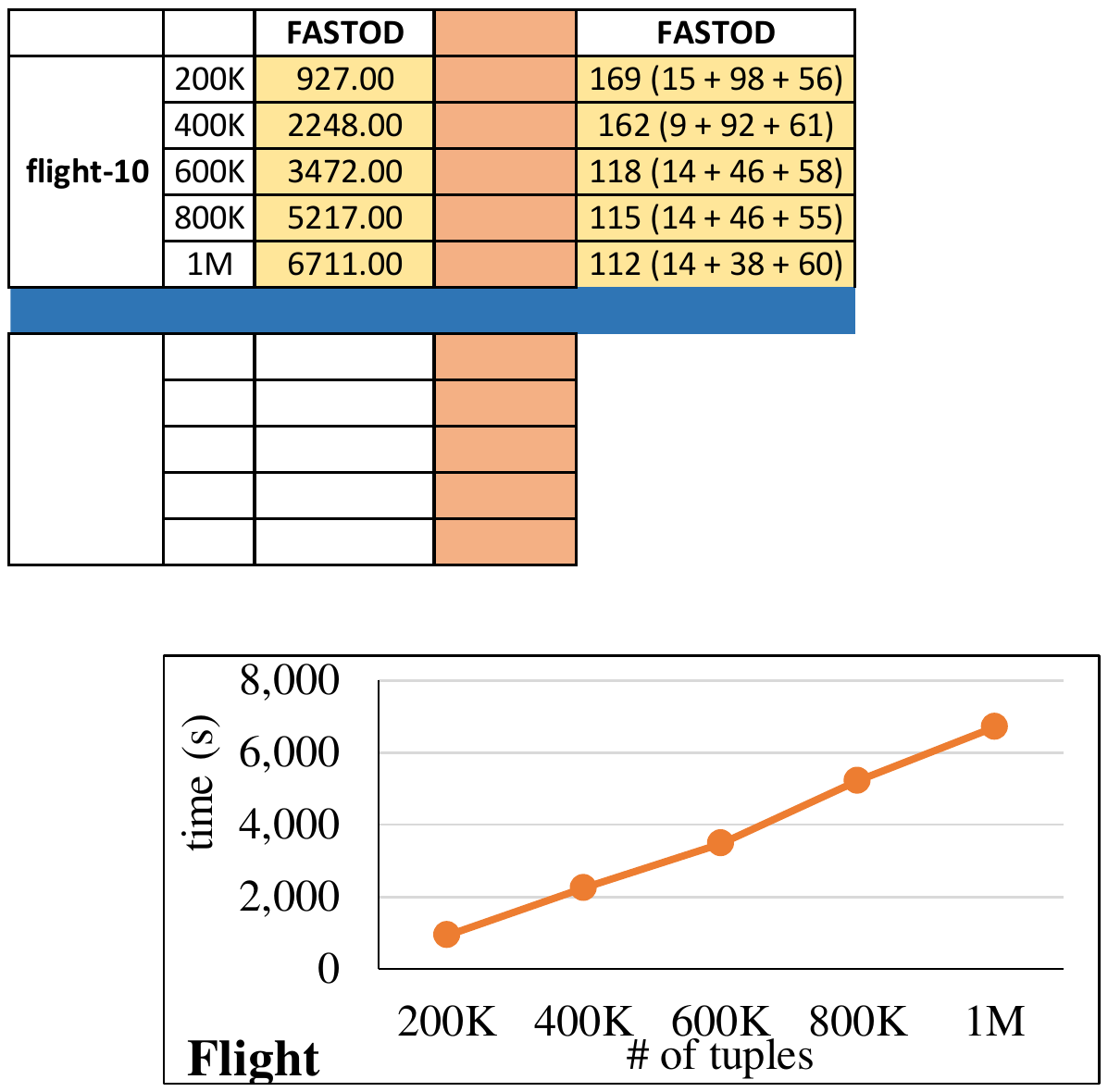}
        % \caption{\label{fig:sth1} Partial orders (19' \& 20').}
    \end{subfigure}%
    ~ 
    \begin{subfigure}[]{0.23\textwidth} % boxplot 2
        \centering
        \includegraphics[width=4.05cm]{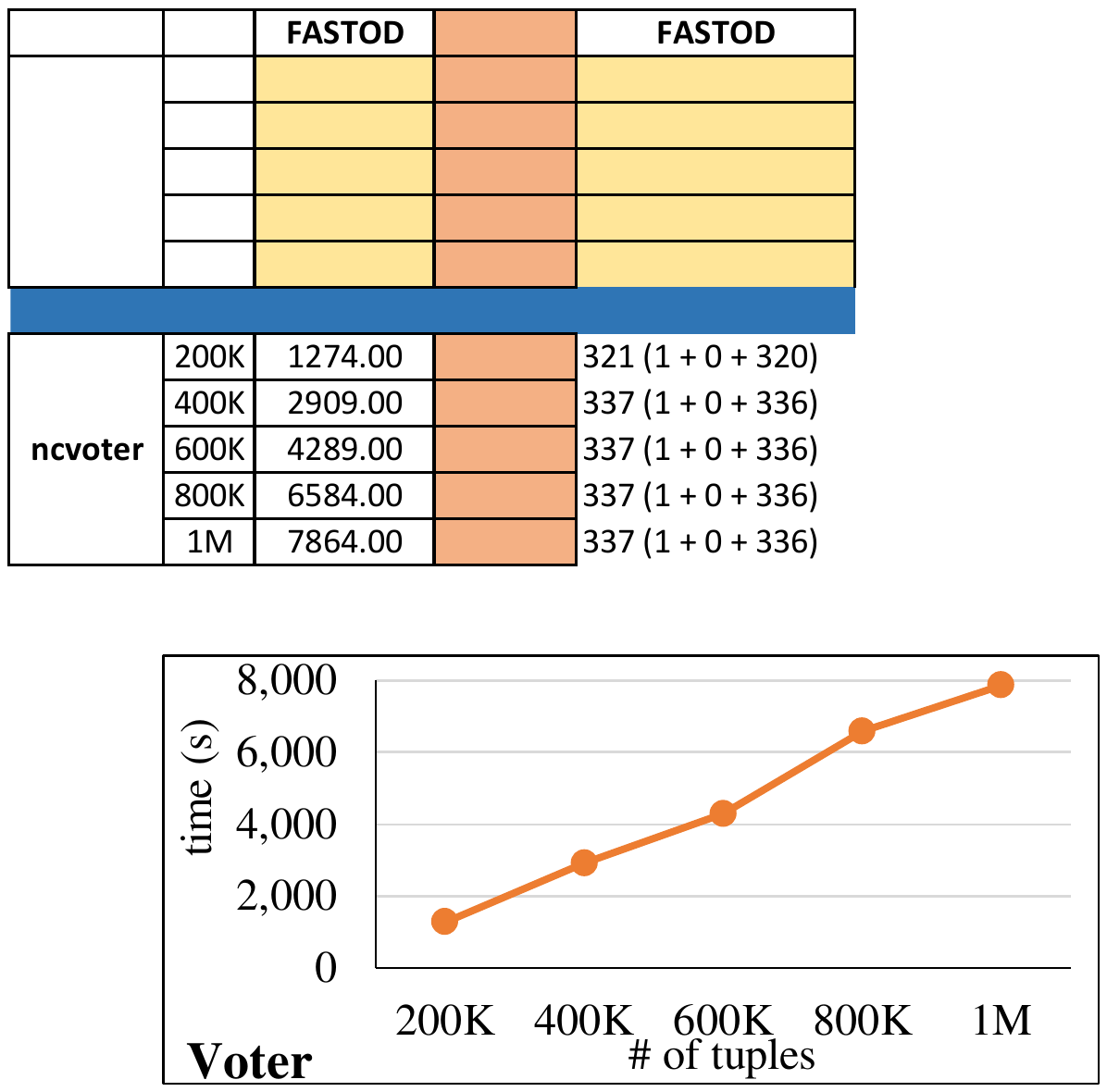}
        % \caption{\label{fig:sth2} Union orders.}
    \end{subfigure}
    \paddingT
    \caption{Scalability and effectiveness in $|\textbf{r}|$.}
    \paddingD
    \label{figure:scalTup}
\end{figure}

\begin{figure}[t]
% \label{figure:biGraph}
    \centering
    \begin{subfigure}[]{0.23\textwidth} % boxplot 1
        \centering
        \includegraphics[width=4.05cm]{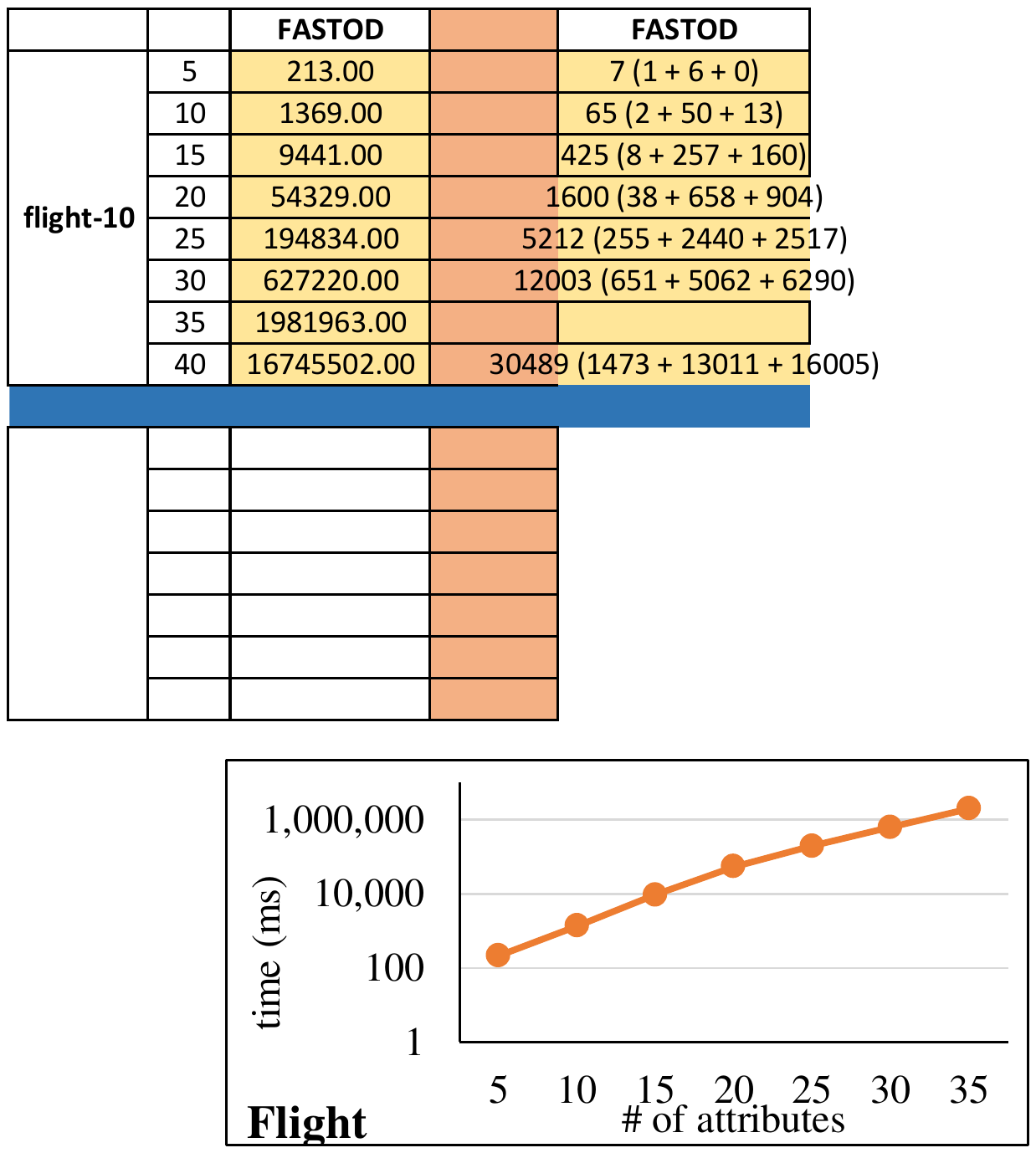}
        % \caption{\label{fig:sth1} Partial orders (19' \& 20').}
    \end{subfigure}%
    ~ 
    \begin{subfigure}[]{0.23\textwidth} % boxplot 2
        \centering
        \includegraphics[width=4.05cm]{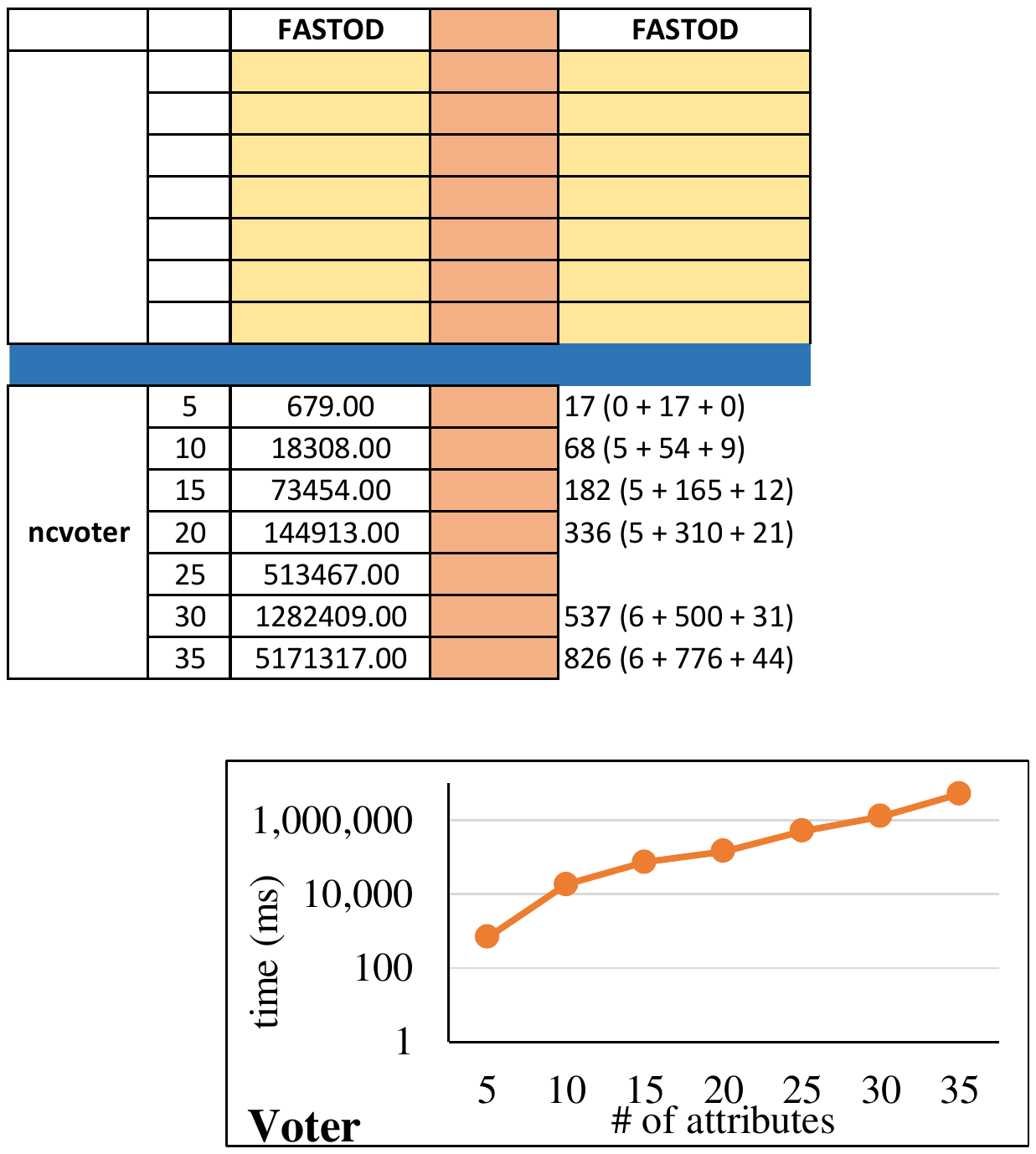}
        % \caption{\label{fig:sth2} Union orders.}
    \end{subfigure}
    \paddingT
    \caption{Scalability and effectiveness in $|\textbf{R}|$.}
    \paddingD
    % \vspace{-2mm}
    \label{figure:scalAttr}
\end{figure}

\refstepcounter{exp-c}\label{exp-lattic-level}
% \vspace{-2mm}
% \subsection{Lattice Levels \& Interestingness}
\subsection{\blue{Effectiveness}}
\label{sec:exp:lattice}
%We evaluate how our algorithm behaves over different levels of the lattice and the interestingness measure.   

\textbf{Exp-4: Effectiveness over lattice levels.}
Here, we measure the running time and the number of discovered implicit domain orders at different levels of the lattice (Figure~\ref{figure:lattice-runtime}). We report the results with 10 attributes over 1M tuples (from Exp-\ref{exp-scal-tup}) 
in the $\sf{Flight}$ and $\sf{Voter}$ datasets. Since the attribute lattice %traversed by the set-based OD discovery algorithm~\cite{SGG+2017:ODD,SGG2018:BiODD} 
is diamond-shaped and nodes are pruned over time through axioms, the time to process each level first increases, up to level five, and decreases thereafter.

%% Removed in the revision [Reza]
% We believe that implicit domain orders at upper levels of the lattice, with respect to a smaller context, are more interesting (as verified in Exp-5). 
As most of the interesting implicit orders are found at the top levels 
with respect to a smaller context (as verified in Exp-\ref{exp-interestingness}), %% Moved to here in revision
we can prune the lower levels %(beyond a threshold) 
to reduce the total time. 
In the $\sf{Flight}$ and $\sf{Voter}$ datasets, approximately 85\% and 73\% of the orders are found in the first three levels, taking about 45\% and 19\% of the total time, respectively.
\blue{In the $\sf{Voter}$ dataset, fewer implicit orders are found in the first levels of the lattice, creating fewer pruning opportunities. Therefore, more time is spent on validating candidates with larger contexts, % at lower levels of the lattice, 
explaining the runtime difference between the two datasets.}

%We now consider the time spent by our algorithm on different levels of the lattice. Here, level 1 indicates the simplest case with an empty context. Figures~\ref{fig:lattice-runtime-flight} and \ref{fig:lattice-runtime-ncv} show this for the ${\sf flight}$ and ${\sf voter}$ datasets with 1000 tuples and 35 tuples. As can be seen from these figures, the majority of the runtime of the algorithm in these cases is spent on the higher levels of the lattice, i.e., 4 and higher. On the other hand, since the most meaningful dependencies are expected to be found on the lower levels of the lattice (which will be shown in Section~\ref{sec:interestingness-experiments} as well), this suggest that in some cases the candidates on the higher levels of the lattice can be pruned, which will reduce the runtime significantly without sacrificing the finding of the most interesting orders.

%Implicit orders found with respect to the context with more attributes contain more partition groups.

\refstepcounter{exp-c}\label{exp-interestingness}
\textbf{Exp-5: Interestingness of implicit orders.}
We argue that implicit domain orders found at upper levels of the lattice are the most interesting. Implicit orders found with respect to a context with more attributes contain more partition groups. 
%Contexts with more attributes contain more unique partition groups. 
Hence, an implicit order with respect to a less compact context may hold, but may not be as meaningful, due to overfitting. Figure~\ref{figure:lattice-runtime} illustrates that the interestingness score %(Section~\ref{sec:measure-interestingness-def}) 
drops from the fourth level on for the $\sf{Flight}$ dataset and from the third level on  for the $\sf{Voter}$ dataset. %We manually verified that the implicit domain orders found at higher levels of the lattice are truly more interesting. 
%Automatically discovered implicit domain orders can be manually validated by domain experts, which is a much easier task than manual specification. 
% Our scoring function can thus point domain experts to the most important implicit domain orders for manual verification.

% Given that the search space and the number of discovered implicit orders through different types of $\OD$s are large as reported in Table~\ref{tab:num-of-dependencies} (first number in bold), Figure~\ref{fig:interestingness-scores} illustrates that our interestingness measure can reduce the implicit domain orders. 
Figure~\ref{fig:interestingness-scores} illustrates that our interestingness measure can reduce the 
\blue{number of} 
implicit domain orders. 
% The highest-scoring orders found in these datasets include attributes such as month and age group, respectively. 
% %to the  ones with a high score. 
\blue{In the $\sf{Flight}$ dataset, other than $\A{monthGreg}^*$ and $\A{monthLunar}^*$, we found a high-scoring order   $\A{delayDesc}^*$, which orders flight delay as
\emph{Early} $\prec$ \emph{On-time} $\prec$ 
\emph{Short delay} $\prec$ \emph{Long delay},  
and is discovered through the $\EIOD$ $\simular{\A{delay}}{\A{delayDesc}^*}$.
This order is interesting since long delays may result in fines on the airline, so detecting these instances is valuable.  Implicit orders over  $\A{distanceRangeMile}$ and 
$\A{distanceRangeKM}$ were discovered through the $\IIOC$
$\simular{\A{distanceRangeMile}^*}{\A{distanceRangeKM}^*}$, 
as the categories for these attributes are overlapping 
(e.g., ranges $0-700$ and $700-3000$ miles overlap with the range of $1100-4800$ kilometers). 
% Similar overlaps for other ranges are used in discovering such an $\OC$).
Another ordered attribute in this dataset is $\A{flightLength}$
(\emph{Short-haul} $\prec$ \emph{Medium-haul} $\prec$ \emph{Long-haul}),  
which was discovered through the $\EIOD$
$\simularC{\A{airline}}{\A{flightDuration}}{\A{flightLength}^*}$. 
The non-empty context is due to different airlines using 
different ranges to define flight duration.
% Another ordered attribute extracted from this dataset is $\A{originTimeZone}$ (\emph{Hawaii-Aleutian} $\prec$ $\dots$ $\prec$ \emph{Central} $\prec$ \emph{Eastern}), which could be detected through the $\EIOD$ $\simular{\A{originTimeDifference}}{\A{originTimeZone}^*}$ ($-10$, $-9$, $\dots$).
% do we want to keep only one of flightLength or timeZone?
In the $\sf{Voter}$ dataset, the attributes $\A{ageRange}$ 
($12-17$ $\prec$ $18-24$ $\prec$ $\dots$ $\prec$ $+75$) 
and $\A{generation}$ 
% (e.g., \emph{Generation Z} $\prec$ \emph{Millennial} 
% $\prec$ \emph{Generation X} $\prec$ \emph{Baby Boomer}) 
(e.g., \emph{Baby Boomer} $\prec$ \emph{Generation X} 
$\prec$ \emph{Millennial} $\prec$ \emph{Generation Z}) 
are discovered through $\EIOD$s $\simular{\A{age}}{\A{ageRange}^*}$
and $\simular{\A{birthYear}}{\A{generation}^*}$,  respectively, as well as the $\IIOC$
$\simular{\A{ageRange}^*}{\A{generation}^*}$
as the categories are overlapping. % as well.
Finally, an order over $\A{birthYearAbbr}$
was detected using the \emph{conditional} $\EIOD$ 
$\simularC{\A{isCentenarian}}{\A{birthYear}}{\A{birthYearAbbr}^*}$. 
This is because for the people born in the early 1917 or before with $\A{isCentenarian}=True$
we have $'97$ $\prec$ $'98$ $\prec$ $\dots$ $\prec$ $'16$ $\prec$ $'17$, 
while for people born after this date with $\A{isCentenarian}=False$ we have 
$'17$ $\prec$ $'18$ $\prec$ $\dots$ $\prec$ $'98$ $\prec$ $'99$. Thus an unconditional $OD$ is not possible. 
}
% For a single attribute, several $\OD$s may be found with different contexts and RHS/LHS attributes. This is why the number of implicit orders found can be larger than the number of attributes.

%Intuitively, one would expect for the most interesting orders to appear in the first levels of the lattice, as this would mean a less complicated context as well as potentially larger equivalence classes which are less likely to produce an IOC valid order by chance. This hypothesis seems to hold true in practice and in both of our proposed measures. For example, in Table \ref{tab:interestingness-lattice-level}, we have plotted the average scores for candidates in each level of the lattice along with the number of valid cases for both of our measures on the $\sf{flight}$ dataset with 100k tuples. As can be seen, with both measures, a sudden drop in the average score happens after the third level of the lattice, i.e., with two or more columns in the context. 

\begin{figure}[t]
%  \paddingT
  \paddingT
    \center
    \scalebox{0.95}{
    \includegraphics[width=0.45\textwidth]{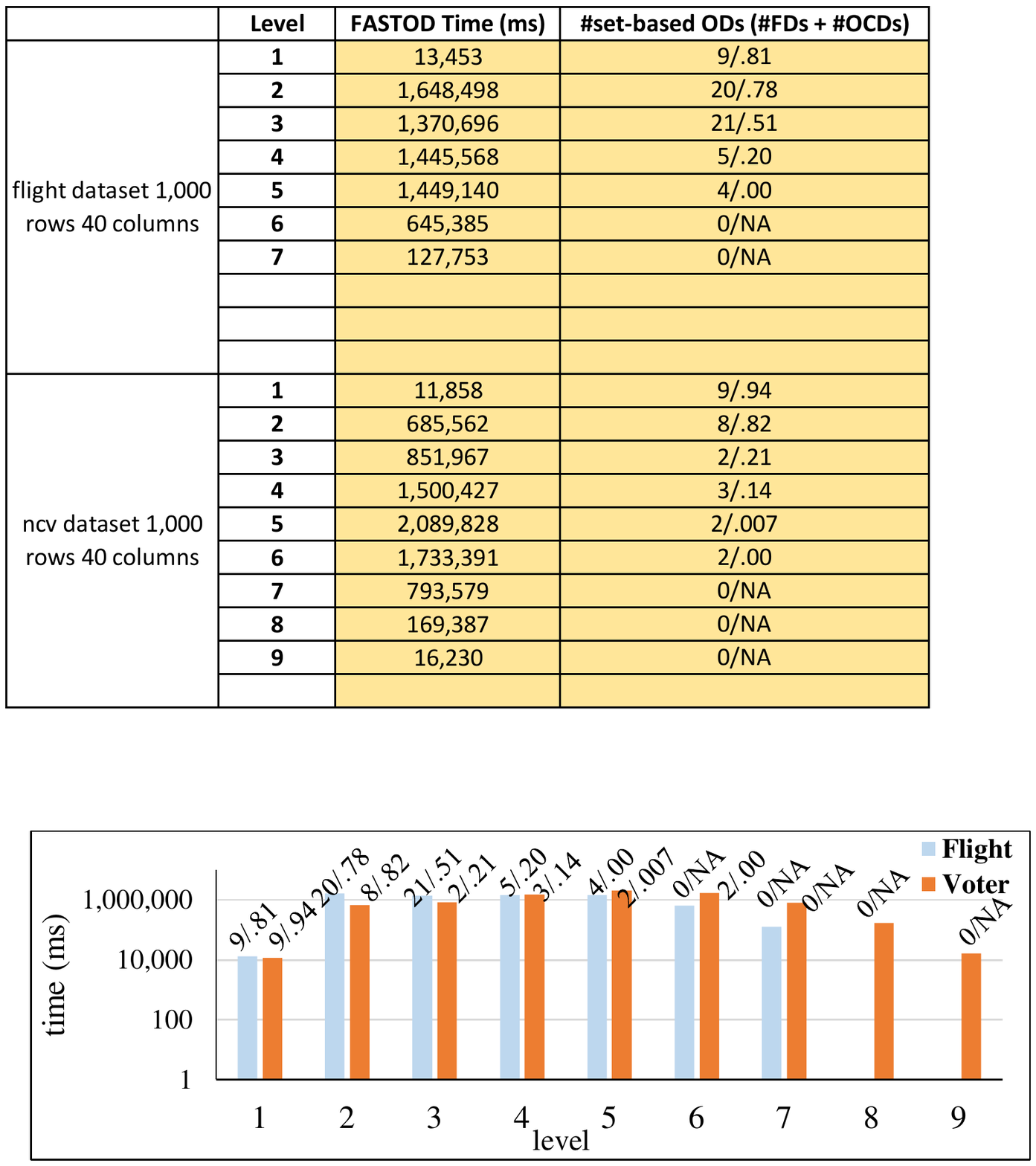}
    % \scalebox{0.8}{
    % \includegraphics[width=0.5\textwidth]{pics/lattice-runtime.pdf}
    }
        %\vskip -0.05cm
        \paddingT
     \caption{Runtime, \# of orders, and average interestingness.}
     \label{figure:lattice-runtime}
        \paddingD
        % \vspace{-2mm}
\end{figure}

\begin{figure}[t]
    \center
    \scalebox{0.95}{
    \includegraphics[width=0.45\textwidth]{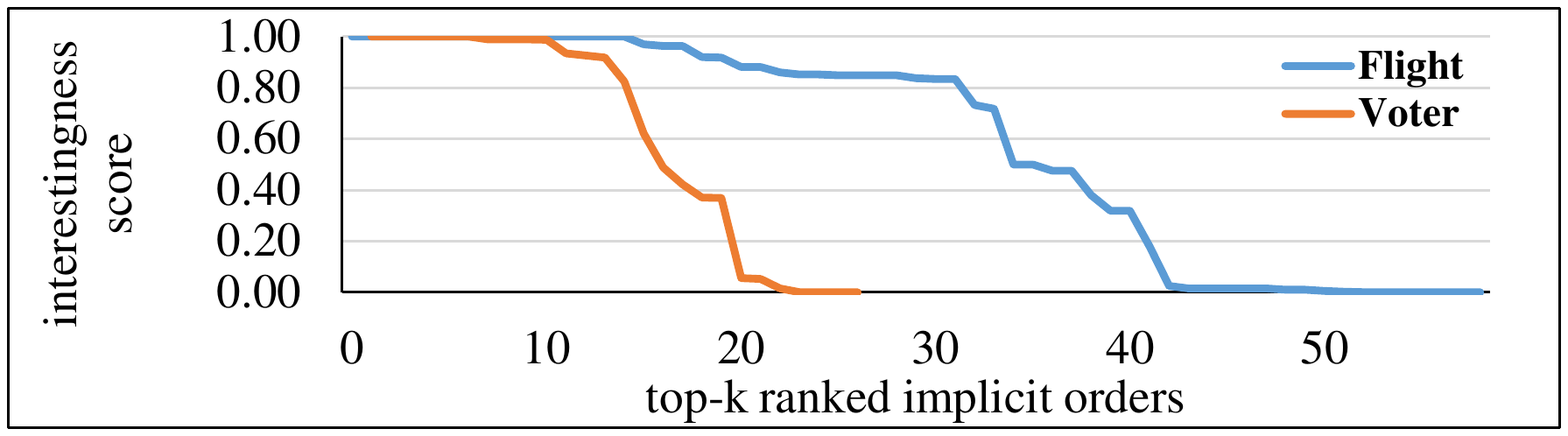}
    }
        %\vskip -0.05cm
        \paddingT
     \caption{Interestingness scores of discovered implicit orders.}
     \label{fig:interestingness-scores}
      \paddingD
        % \vspace{-2mm}
\end{figure}

\blue{We evaluate our measure of interestingness by reporting the precision of the top-$k$ discovered orders. We define precision as the percentage of correct orders from $m$ randomly selected pairs of values (e.g., for $\A{monthLun}$ if ``Corner $\prec$ Peach'' is correct), verified manually by Computer Science graduate students well-versed in databases. We set $k$ to 5 and $m$ to 10. The precision of our algorithm is 87\% and 89\% over the $\sf{Flight}$ and $\sf{Voter}$ datasets, respectively. 
% The higher precision value is consistent with the higher overall interestingness value for the $\sf{Voter}$ dataset. 
This experiment demonstrates that the proposed scoring function successfully identifies interesting orders.}

\refstepcounter{exp-c}\label{exp-compute-score-time}
\textbf{Exp-6: Time to compute interestingness.} 
We consider the effect of computing the pairwise measure of interestingness on the algorithm runtime. We observed that in Exp-\ref{exp-scal-tup} in the number of tuples, the runtime increase is less than 1\%.
%, which is negligible. 
In Exp-\ref{exp-scal-attr} in the number of attributes, this increase is around 10\% on average, which is attributed to the higher ratio of unique values to the number of tuples.

\subsection{Applications}
\label{sec:exp:apps}
%\label{sec:interestingness-experiments}
%\textbf{Exp-3: Comparing scores in pairwise and path-based measure.} We first look at the behaviour of valid candidates and their scores based on the two measures. In Figure \ref{fig:ncv-100k-interestingness}, we have sorted the final valid candidates in the ${\sf ncvoter}$ dataset for 100k rows based on their pairwise score, and have plotted both of the measures' score for these candidates. 

%Two points are worth noting about this plot: First, it seems that for a particular candidate, these two measures would result in very similar scores. This would allow us to use the path-based measure instead of pairwise, if the number of unique values in the columns is large. Second, these measures seem to somewhat \emph{divide} the candidates to categories, a small set of candidates with a very high score, and the rest of the candidate with a very smaller score. This suggests that there is a rather large gap between interesting and non-interesting candidates, which would allow us to easily pick the top candidates in the final result and investigate them in more details.

\begin{table}[t]
\begin{small}
    \centering
    % \paddingT
    \caption{\label{tab:num-of-dependencies} The number of \textbf{implicit orders} (data dependencies).}
    \paddingT
    \scalebox{0.87}{
\begin{tabular}{|l||c|c|c||c|c|c|}
\hline
\multicolumn{1}{|l||}{dataset} & \multicolumn{3}{c||}{Flight} & \multicolumn{3}{c|}{Voter} \\ \hline
\multicolumn{1}{|l||}{type / \# of tuples} & 200K & 600K & 1M & 200K & 600K & 1M \\ \hline \hline
$\FD$ & \textbf{0} (20) & \textbf{0} (20) & \textbf{0} (20) & \textbf{0} (7) & \textbf{0} (5) & \textbf{0} (5) \\ \hline
$\EEOC$ & \textbf{0} (6) & \textbf{0} (5) & \textbf{0} (12) & \textbf{0} (1) & \textbf{0} (1) & \textbf{0} (1) \\ \hline
cond. $\EIOC$ & \textbf{41} (41) & \textbf{31} (31) & \textbf{30} (30) & \textbf{56} (56) & \textbf{23} (23) & \textbf{10} (10) \\ \hline
uncond. $\EIOC$ & \textbf{6} (6) & \textbf{6} (6) & \textbf{6} (6) & \textbf{6} (6) & \textbf{6} (6) & \textbf{6} (6) \\ \hline
cond. $\EIOD$ & \textbf{4} (4) & \textbf{4} (4) & \textbf{4} (4) & \textbf{1} (1) & \textbf{1} (1) & \textbf{1} (1) \\ \hline
uncond. $\EIOD$ & \textbf{3} (3) & \textbf{3} (3) & \textbf{3} (3) & \textbf{1} (1) & \textbf{1} (1) & \textbf{1} (1) \\ \hline
cond. $\IIOC$ & \textbf{6} (3) & \textbf{0} (0) & \textbf{0} (0) & \textbf{4} (2) & \textbf{2} (1) & \textbf{6} (3) \\ \hline
uncond. $\IIOC$ & \textbf{16} (8) & \textbf{16} (8) & \textbf{16} (8) & \textbf{8} (4) & \textbf{2} (1) & \textbf{2} (1) \\ \hline
\end{tabular}
}
% \paddingD
\end{small}
\end{table}

\refstepcounter{exp-c}\label{exp-data-profiling}
\blue{
\textbf{Exp-7: Data Profiling.}
}
%The focus of this work is to demonstrate that real data contain domain-specific relationships beyond explicit order relationships captured by $\EEOD$s. 
%. These relationships,  defined as implicit domain orders, can be leveraged during data dependency discovery to enforce application-specific data quality rules. Unfortunately, 
%This leads to a decreased number of reported $\OD$. 
Table~\ref{tab:num-of-dependencies} illustrates that $\iOrder$ finds implicit $\OD$s in
%can significantly reduce the number of false negatives in $\OD$ discovery that rely on traditional $\EEOD$ on 
both $\sf{Flight}$ and $\sf{Voter}$ with 10 attributes up to 1M tuples. 
% For a single attribute, several $\OD$s may be found with different contexts and RHS/LHS attributes. This is why the number of implicit orders found can be larger than the number of attributes.
The second number in the round brackets represents the number of various types of discovered $\OD$s. 
\blue{
% An interesting observation is that 
While the number of found orders is similar between the datasets with 200K tuples, 
when using 1M tuples, it drops in the $\sf{Voter}$ dataset 
while not changing much in the $\sf{Flight}$ dataset. 
(The increase in the number of conditional $\IIOC$s in the $\sf{Voter}$ dataset is due to having fewer valid candidates at the upper levels of the lattice, and as a result, fewer pruning opportunities for candidates with larger contexts.)
% An interesting observation is that for a smaller number of tuples, the number of found orders and dependencies is similar between these two datasets. However, with the increase in the number of tuples, while these numbers remain almost the same for the $\sf{Flight}$ dataset, they drop significantly for the $\sf{Voter}$ dataset. 
This is consistent with our intuition since the $\sf{Flight}$ dataset 
has more attributes with %meaningful 
explicit or implicit orders, 
and confirms the importance of a scalable approach to 
invalidate spurious orders as the number of tuples grows.
% \red{
% Note that in some cases, the number of valid candidates with larger contexts may increase with the number of tuples,
% % (e.g., $\IIOC$s in the $\sf{Voter}$ dataset), 
% which is due to having fewer valid candidates at the upper levels of the lattice, and fewer pruning opportunities as a result.
% % Note that in some cases, the number of valid $\OC$s may increase with the number of tuples, which is due to having fewer valid candidates at the lower levels of the lattice, and fewer pruning opportunities for the algorithm.
% }
% \blue{Note that since}
% % Since 
% the number of $\sf{I/I}$ and $\sf{E/I}$ $\OD$s can be large, these could be filtered by the interestingness measure (Exp-5).
To investigate the importance of different types of implicit $\OC$s, 
we categorize the number of top-k orders found in the $\sf{Voter}$ dataset 
using each type of implicit $\OC$ in Table~\ref{tab:top-k-types}. 
It can be seen that all types of implicit $\OC$s 
% are utilized to find the most interesting orders.
contribute to the most interesting orders found. 
A similar pattern was observed in the $\sf{Flight}$ dataset.
% , which is 
% omitted due to space constraints.
% deferred to the technical report
% % not included 
% due to space constraints \cite{technical-report}.
}

\begin{table}
\blue{
\begin{small}
    \centering
     \paddingT
    \caption{\label{tab:top-k-types} \blue{Types of top-k orders.}}
    \paddingT
    \scalebox{0.87}{
    % \scalebox{1}{
    \begin{tabular}{|l|c|c|c|c|c|}
\hline
\textbf{dataset} & \textbf{k} & 
\textbf{\begin{tabular}{@{}c@{}}cond. \\ $\EIOC$/$\OD$ \end{tabular}} & \textbf{\begin{tabular}{@{}c@{}}ucond. \\ $\EIOC$/$\OD$ \end{tabular}} & \textbf{\begin{tabular}{@{}c@{}}cond. \\ $\IIOC$\end{tabular}} & \textbf{\begin{tabular}{@{}c@{}}ucond. \\ $\IIOC$\end{tabular}} \\ \hline
\multirow{3}{*}{\textbf{$\sf{Voter}$}} & \textbf{5} & 0 & 4/1 & 0 & 0 \\ \cline{2-6} 
 & \textbf{10} & 2/1 & 4/1 & 1 & 1 \\ \cline{2-6} 
 & \textbf{15} & 3/1 & 6/1 & 2 & 2 \\ \hline
\end{tabular}
% \begin{tabular}{|l|c|c|c|c|c|c|}
% \hline
% \textbf{dataset} & \textbf{k} & \textbf{\begin{tabular}{@{}c@{}}cond. \\ $\EIOC$/$\OD$ \end{tabular}} & \textbf{\begin{tabular}{@{}c@{}}ucond. \\ $\EIOC$/$\OD$ \end{tabular}} & \textbf{\begin{tabular}{@{}c@{}}cond. \\ $\IIOC$\end{tabular}} & \textbf{\begin{tabular}{@{}c@{}}ucond. \\ $\IIOC$\end{tabular}} & \textbf{min. score} \\ \hline
% % \multirow{4}{*}{\textbf{$\sf{Flight}$}} & \textbf{15} & 4/0 & 0/3 & 0 & 8 & 1 \\ \cline{2-7} 
% %  & \textbf{20} & 6/0 & 3/3 & 0 & 8 & 0.91 \\ \cline{2-7} 
% %  & \textbf{25} & 10/0 & 4/3 & 0 & 8 & 0.85 \\ \cline{2-7} 
% %  & \textbf{30} & 11 & 4/3 & 0 & 12 & 0.83 \\ \hline
% \multirow{3}{*}{\textbf{$\sf{Voter}$}} & \textbf{5} & 0 & 4/1 & 0 & 0 & 1 \\ \cline{2-7} 
%  & \textbf{10} & 2/1 & 4/1 & 1 & 1 & 0.98 \\ \cline{2-7} 
%  & \textbf{15} & 3/1 & 6/1 & 2 & 2 & 0.62 \\ \hline
% \end{tabular}
}
 \paddingD
\end{small}
}
\end{table}

\refstepcounter{exp-c}\label{exp-yago}
\textbf{Exp-8: Knowledge Base \blue{Enhancement}.} 
\blue{As another application of implicit domain orders in data profiling,} 
% Since we devised the first automatic approach to discover implicit domain orders, 
we now compare with an open-source manually curated knowledge-base: YAGO. We quantify the percentage of automatically discovered implicit domain orders by our algorithm among the top-5  (ranked by our pairwise measure of interestingness) that exist in YAGO. 
\blue{The existence of an implicit order in YAGO is evaluated by considering pairs of values within the ordered domain and verifying 
if there exist knowledge triples specifying the relationship between the two entities.
% their presence within the knowledge RDF triples.
}
% We conduct this experiment over the $\sf{Flight}$ and the $\sf{Voter}$ datasets with 10 annotated attributes and 1M tuples. 
The result is that only 20\% of the top discovered orders exist in YAGO. 
\blue{As shown in Exp-\ref{exp-interestingness}, the top orders indicated by our measure of interestingness tend to be meaningful.}
Thus, existing knowledge bases can be enhanced by our techniques, 
%to avoid human burden of manual verification. 
%This is especially helpful in enhancing knowledge-bases 
% especially those focusing on objects, rather than concepts, 
% where implicit orders are more common.  
\blue{especially in instances where the discovered orders are domain-specific or in knowledge bases that focus on objects rather than concepts, where implicit orders are more common.
This may be done by incorporating pairs of ordered values as knowledge triples, e.g., $(Corner, $ $Less\ than, $ $Peach)$.}

\refstepcounter{exp-c}\label{exp-query-optimization}
\begin{change}
\textbf{Exp-9: Query Optimization.} 
We show that ODs derived from implicit orders can eliminate sort operations in query plans for queries issued by IBM customers.
\footnote{\blue{We thank IBM for providing us with access to these queries.}}  We use the 10GB TPC-DS benchmark in 
Db2\textregistered. 

Q1 (Figure~\ref{query:year-str}) employs a year abbreviation by removing the first two digits from the year and appending 
an apostrophe (e.g., 2002 becomes '02), denoted by the attribute $\A{dYearAbbr}$. 
Our work on implicit orders can be combined with 
the notion of \emph{near-sortedness} \cite{KFM11}. Since the OD 
$\od{\A{dDate}}{\A{dYearAbbr}^*}$ 
is found, the optimizer can then take advantage of
the index on $\A{dDate}$, speeding up the sort operator in the plan 
to accomplish the order-by. Given that each $\A{dYearAbbr}$
block fits into memory, it can be re-sorted in main memory
on-the-fly by $\A{smType}$ and $\A{wsWebName}$.

%Consider the query shown in Figure~\ref{query:year-str}, in which an index is defined over the $\A{date}$ attribute and the attribute $\A{yearShort}$  stores the last two digits of the year as strings (e.g., ``99'', ``00'', etc.). While regular $\OD$s fail to detect this pattern, after
% contains the year stored as string values (e.g., ninety seven, ninety eight, etc.). 
% After  detecting the $\EIOC$ $\simular{\A{date}}{\A{yearShort}^*}$, the optimizer can use the index over $\A{date}$ to speed up the group-by and order-by statements over $\A{yearShort}$.

Next, consider Q2 (Figure~\ref{query:zip-code}). Let there be an index on $\A{sZip}$ in the table $\A{store}$. The first five digits indicate the post office and the following four digits indicate the delivery route. The attribute $\A{sPostOffice}$ denotes the string code of the post office. Since there is the OD $\od{\A{sZip}}{\A{sPostOffice}^*}$, the optimizer can choose to scan the index on $\A{sZip}$ to accomplish the group-by on-the-fly, so no partitioning or sorting is required. Note that a clever programmer cannot rewrite the query manually with a group-by $\A{sZip}$ since $\A{sPostOffice}$ changes the partitioning of the group-by.

\begin{figure}[t]
\paddingT
    % \begin{center}
    \center
    \fontsize{8}{8}\selectfont\ttfamily
    % \color{blue}
\begin{BVerbatim}[commandchars=\\\{\},commentchar=!]
select dYearAbbr, smType, wsWebName,
    sum(case when (ws_ship_date_sk
            - ws_sold_date_sk <= 30)
        then 1 else 0 end) as "30 days", ...
    sum(case when (ws_ship_date_sk
            - ws_sold_date_sk > 120) 
        then 1 else 0 end) as "> 120 days"
from webSales, warehouse, shipMode,
    webSite, dateDim
where wsShipDateSk = dDateSk and ...
group by dYearAbbr, smType, wsWebName
order by dYearAbbr, smType, wsWebName;
\end{BVerbatim}
% group by \ldots
% order by \ldots;
    % \end{center}
    % \vskip -.4cm
    \paddingT
    \caption{\col{\blue{Year abbreviation variation with order-by (Q1).}}}
    % \vskip -.6cm
    \label{query:year-str}
    \paddingD
\end{figure}

%Next, consider the query in Figure~\ref{query:zip-code} and let there be an index on $\A{zip}$ in table $\A{store}$. This is a 9 or 10 digit attribute, where the first 5 digits indicate the post office and the remaining digits indicate the delivery route. 
%Therefore, there exists an implicit order over the attribute $\A{postOffice}$, given that it is derived from the first 5 digits of $\A{zip}$.
%After detecting the $\EIOC$ $\simular{\A{zip}}{\A{postOffice}^*}$, the optimizer can choose to do an index scan using the index on $\A{zip}$ to accomplish the group-by on-the-fly (called partial group-by). 
%Afterward, no partitioning or sort operator would be required in the query plan.
% Note that a clever SQL programmer could not rewrite Query 2 manually with group by s zip to avoid this issue, since the substring changes the partition of the group-by.

\begin{figure}[t]
    % \begin{center}
    \center
    \fontsize{8}{8}\selectfont\ttfamily
    % \color{blue}
\begin{BVerbatim}[commandchars=\\\{\},commentchar=!]
select sPostOffice,
    count(distinct sZip) as cnt,
    sum(ssNetProfit) as net
from storeSales, store
where ssStoreSk = sStoreSk
group by sPostOffice;
\end{BVerbatim}
    % \end{center}
    % \vskip -.4cm
    \paddingT
    \caption{\col{\blue{Post office with group-by (Q2).}}}
    \paddingD
    % \vskip -.6cm
    \label{query:zip-code}
\end{figure}

In Q3 (Figure~\ref{query:to-char-date}), the data are ordered by the attribute $\A{dDateHour}$, which indicates the date in the format ``YYYYDDMM'',  concatenated with a time constant ``12:00:00'' which is commonly used in business intelligence reporting. The optimizer can leverage the OD $\od{\A{dDate}}{\A{dDateHour}^*}$ to use the index on $\A{dDate}$.

%In the query in Figure~\ref{query:to-char-date}, let an index exist over $\A{date}$ in the table $\A{dateDim}$. Furthermore, let the attribute $\A{dateTimeStr}$ consist of the value of date in the format ``MMDDYYYY'' concatenated with the string `` 12:00:00'', e.g., ``03252020 12:00:00''. This type of attributes and queries are commonly used in business-intelligence reporting. The query optimizer can then use the implicit order over $\A{dateTimeStr}$ (as a regular $\OD$ does not exist here) to optimize the order-by clause in this query.

\begin{figure}[t]
% \paddingT
    % \begin{center}
    \center
    \fontsize{8}{8}\selectfont\ttfamily
    % \color{blue}
\begin{BVerbatim}[commandchars=\\\{\},commentchar=!]
select iItemDesc, dDateHour as when,
    sum(wsSalesPrice) as total
from webSales, item, dateDim
where wsItemSk = iItemSk and
    iCategory = 'Children' and
    wsSoldDateSk = dDateSk
group by iItemDesc, dDateHour
order by dDateHour;
\end{BVerbatim}
    % \end{center}
    % \vskip -.4cm
    \paddingT
    \caption{\col{\blue{Using string conversion and concatenation (Q3).}}}
    \paddingD
    % \vskip -.6cm
    \label{query:to-char-date}
    % \paddingT
\end{figure}

Q4 (Figure~\ref{query:year-range}) can have its where-clause rewritten as: $\A{dDate}$ between date('1998-01-01') and date('2002-12-31').  This rewrite allo\-ws the use of an 
index on $\A{dDate}$. Since the OD $\od{\A{dDate}}{\A{dYearAbbr}^*}$ holds (with implicit order `98 $\prec$ `99 $\prec$ $\dots$ $\prec$ `02), the rewritten query is semantically equivalent. 

%The query in Figure~\ref{query:year-range} can be effectively rewritten by the optimizer to the form in Figure~\ref{query:year-range-new}. 
%To evaluate the new query, the index on $\A{date}$ can then be used. 
%The constants ``98'', ``99'', ..., and ``02'' 
% The constants ``ninety eight'', ..., ``two thousand and two'' 
%are used in the initial query as a filter predicate in its where clause. 
%By rewriting the query after inferring the order 
%``98'' $\prec$ ``99'' $\prec$ $\dots$ $\prec$ ``02'',
% ``ninety eight'' $\prec$ ``ninety nine'' $\prec$ $\dots$, 
%the optimizer can then utilize the index on $\A{date}$ by restricting the date range between the appropriate values. This method generalizes to a query rewrite
%technique.

\begin{figure}[t]
% \paddingT
    % \begin{center}
    \center
    \fontsize{8}{8}\selectfont\ttfamily
    % \color{blue}
\begin{BVerbatim}[commandchars=\\\{\},commentchar=!]
select iItemDesc, iCategory,
    iClass, iCurrentPrice,
    sum(wsExtSalesPrice) as revenue
from webSales, item, dateDim
where wsItemSk = iItemSk and
    wsSoldDateSk = dDateSk and
    dYearAbbr in ('98, ..., '02),
group by iItemId, iItemDesc,
    iCategory, iClass, iCurrentPrice
order by iCategory, iClass, iItemId;
\end{BVerbatim}
    % \end{center}
    % \vskip -.4cm
    \paddingT
    \caption{\col{\blue{With a predicate when abbreviating year (Q4).}}}
    \paddingD
    % \vskip -.6cm
    \label{query:year-range}
    % \paddingT
\end{figure}

Q5 (Figure~\ref{query:warehouse}) is similar to Q2, but with a filter predicate in its where-clause. The attribute $\A{wWarehouseCode}$ is derived by compr\-essing the first ten characters in the attribute $\A{wWarehouseName}$. 
% We can remove the sort operator Since there is an index on $\A{wWarehouseName}$ and the $\OD$ $\od{\A{wWarehouseName}}{\A{wWarehouseCode}^*}$ is detected.
Using an index on $\A{wWarehouseName}$ and the $\OD$ $\od{\A{wWarehouseName}}{\A{wWarehouseCode}^*}$, we can remove the sort operator. 
% Since there is an index on $\A{wWarehouseName}$ and the $\OD$ $\od{\A{wWarehouseName}}{\A{wWarehouseCode}^*}$ is detected, we can remove the sort operator. 
%This version does not contain a group-by clause, so the order-by $\A{wWarehouseCode}$ could be rewritten manually by a clever programmer. If a group-by were done on $\A{wWarehouseCode}$ also, it could not be. It illustrates our implicit order techniques the same in either case though. 

%The query in Figure~\ref{query:warehouse} is similar to the one in Figure~\ref{query:zip-code}, but with a filter predicate in its
%where clause. In this example, the attribute $\A{warehouseCode}$ is uniquely derived by compressing the first characters in $\A{warehouseName}$ and the implicit order over $\A{warehouseCode}$ can be used to optimize the order-by clause.
%Note that since this query does not contain a group-by clause, the order-by could also be manually rewritten. This would not be the case if a group-by was also done on $\A{warehouseCode}$.

\begin{figure}[t]
\paddingT
    % \begin{center}
    \center
    \fontsize{8}{8}\selectfont\ttfamily
    % \color{blue}
\begin{BVerbatim}[commandchars=\\\{\},commentchar=!]
select wWarehouseCode
from webSales, warehouse
where wsWarehouseSk = wWarehouseSk
    and wsQuantity > 90
order by wWarehouseCode;
\end{BVerbatim}
    % \end{center}
    % \vskip -.4cm
    \paddingT
    \caption{\col{\blue{Warehouse code with order-by (Q5)}}}
    \paddingD
    % \vskip -.6cm
    \label{query:warehouse}
    % \paddingT
\end{figure}

%\begin{figure}[t]
%    \begin{center}
%    \fontsize{8}{8}\selectfont\ttfamily
%    \color{blue}
%\begin{BVerbatim}[commandchars=\\\{\},commentchar=!]
%select ... from ... where ... and
%    date between
%        date('1998-01-01') and
%        date('2002-12-31')
%group by ... order by ...;
%\end{BVerbatim}
%    \end{center}
%    \vskip -.4cm
%    \caption{\col{Rewrite of the query in Figure~\ref{query:year-range}}}
    % \vskip -.6cm
%    \label{query:year-range-new}
%\end{figure}

\begin{figure}[t]
\paddingT
    % \begin{center}
    \center
    \fontsize{8}{8}\selectfont\ttfamily
    % \color{blue}
\begin{BVerbatim}[commandchars=\\\{\},commentchar=!]
select count(*) as count
    over (partition by wsAbsoluteMonth)
from webSales;
\end{BVerbatim}
    % \end{center}
    % \vskip -.4cm
    \paddingT
\caption{\col{\blue{OLAP query (Q6).}}}
    \paddingD
    % \vskip -.6cm
    \label{query:date-formula}
    % \paddingT
\end{figure}

Finally, Q6 (Figure~\ref{query:date-formula}) is an OLAP query that uses the partition-by clause over the attribute $\A{wsAbsoluteMonth}$, which is computed from the last two digits of the year multiplied by 100 and adding the value of the month in the numeric format. The query plan can employ the index on $\A{wsSoldDate}$ if the optimizer detects the $\OD$ $\od{\A{wsSoldDate}}{\A{wsAbsoluteMonth}^*}$.

%Finally, the query in Figure~\ref{query:date-formula}, is an OLAP query which uses a partition-by clause over a modified TPC-DS schema. The attribute $\A{absoluteMonth}$ is computed from the last two digits of $\A{year}$ times 100 added to the value of $\A{monthNum}$ (e.g., it would equal $2\times 100+5=205$ for ``20020521''). Similar to previous examples, the index on $\A{wsSoldDate}$ can be employed to optimize this query.

Table~\ref{tab:query-runtimes} shows the execution times in two modes: 
with and without implicit orders rewrites. 
The results for implicit order optimized queries are significantly better,  
with an average decrease in runtime of 30\%.
% with the performance improvement on average 30\% on elapsed time. 
Our techniques optimize expensive operations such as sort, 
which are super-linear, and which dominate the execution costs as the dataset size increases. 

%Table~\ref{tab:query-runtimes} shows the runtime for each of these queries before and after applying the appropriate optimizations. As can be seen, these optimizations have improved the runtime of these queries by 30\%, indicating the usefulness of implicit orders in query optimization.

% balancing the columns, should be in the left column in the last page
\balance

\refstepcounter{exp-c}\label{exp-data-mining}
\textbf{Exp-10: Data Mining.}
We next evaluate the impact of implicit orders on the task of data summarization in data mining. 
We use the techniques described in \cite{info-summarization-2014} and \cite{info-summarization-ordinal},
which aim to summarize a database in $k$ rows, each representing a subset of the data, by maximizing the  \emph{information gain} of the summary. 
We consider two scenarios: 
one with all the attributes treated as categorical, and one with implicitly ordered attributes treated as ordinal, allowing the summary to refer to value ranges of these attributes rather than individual values.
We compare the information content of summaries of two datasets, $\sf{Wine}$ and $\sf{Electricity}$, 
which are available in the 
% UCI (\url{http://archive.ics.uci.edu/ml/datasets/Wine+Quality})
UCI (\url{https://archive.ics.uci.edu/})
% UCI~\footnote{\url{http://archive.ics.uci.edu/ml/datasets/Wine+Quality}} 
and 
% openML (\url{https://www.openml.org/d/151})
openML (\url{https://www.openml.org/})
% openML~\footnote{\url{https://www.openml.org/d/151}}
repositories.
% In order to do so, we compare the information gain resulting from summarizing the dataset in $k$ rows, under two scenarios: one, with all the attributes treated as categorical, and second, after utilizing the implicit orders which exist over the attributes. We use the techniques described in \cite{info-summarization-2014} and \cite{info-summarization-ordinal} which respectively correspond to these scenarios, and evaluate their performance over two datasets, $\sf{Wine}$ and $\sf{Electricity}$, which are available in the UCI~\footnote{\url{http://archive.ics.uci.edu/ml/datasets/Wine+Quality}} and openML~\footnote{\url{https://www.openml.org/d/151}} repository, respectively.
%%%%
% We used our algorithm to find implicitly ordered attributes in these datasets, 
These datasets contain implicitly ordered attributes 
such as $\A{dayOfWeek}$. 
To create more implicit orders for this experiment, 
we converted some numeric attributes into implicitly-ordered ordinal ones % attributes 
(e.g., representing price as $cheap$ $\prec$ $moderate$ $\prec$ $expensive$).
%%%%
% To better highlight the impact of implicit orders, 
% we convert relevant numerical attributes into ones with implicit orders by
% employing measures such as categorization % discretization 
% (e.g., placing numerical $\A{price}$ into categories $cheap$ $\prec$ $moderate$ $\prec$ $expensive$)
% and use these attributes along with the existing attributes with implicit orders (e.g., $\A{dayOfWeek}$).
% We use the existing attributes with implicit orders (e.g., $\A{dayOfWeek}$) and increase the number of such attributes by measures such as categorizing relevant numerical attributes (e.g., placing $\A{price}$ into categories $cheap$ $\prec$ $moderate$ $\prec$ $expensive$) to better capture the effect of implicit orders.
% To capture the effect of implicit orders, other than using the existing attributes with implicit order (e.g., $\A{dayOfWeek}$), we increase the number of implicit orders by measures such as categorizing relevant numerical attributes (e.g., placing $\A{price}$ into categories $cheap$ $\prec$ $moderate$ $\prec$ $expensive$, etc.).

There are two parameters in the summarization approach from \cite{info-summarization-ordinal}: a sample size from which to mine summary patterns, and the number of top patterns to explore in each iteration of summary construction. 
We use default values of these parameters: 
4 and 16, respectively, and experiment with values of $k$ (summary size) in $\{1, 2, 4, 8, 16\}$. 
%As can be seen in 
% Table~\ref{tab:information-gain}/
As reported in Figure~\ref{figure:information-gain}, utilizing implicit orders 
allows on average 60\% more information gain using summaries of the same size 
(65\% and 54\% for $\sf{Wine}$ and $\sf{Electricity}$ datasets, respectively). 
The larger improvement on the $\sf{Wine}$ dataset 
could be 
% is 
due to 
% the larger number of 
having more 
attributes with implicit orders in this dataset.

% \begin{table}
% \blue{
% \begin{small}
%     \centering
%     % \paddingT
%     \caption{\label{tab:information-gain} Information gain with and without implicit orders.}
%     \paddingT
%     \scalebox{0.87}{
%     % \scalebox{1}{
% \begin{tabular}{|c|c|c|c|c|c|}
% \hline
% \textbf{Table size} & \textbf{1} & \textbf{2} & \textbf{5} & \textbf{10} & \textbf{20} \\ \hline
% \textbf{Wine} & 148/405 & 268/509 & 508/730 & 676/865 & 816/1000 \\ \hline
% \textbf{Electricity} & 2313/5239 & 3368/6034 & 5986/8280 & 8158/9458 & 9052/10324 \\ \hline
% \end{tabular}
% }
%  \paddingD
% \end{small}
% }
% \end{table}

\begin{figure}[t]
% \label{figure:biGraph}
    \centering
    \begin{subfigure}[]{0.23\textwidth} % boxplot 1
        \centering
        \includegraphics[width=4.05cm]{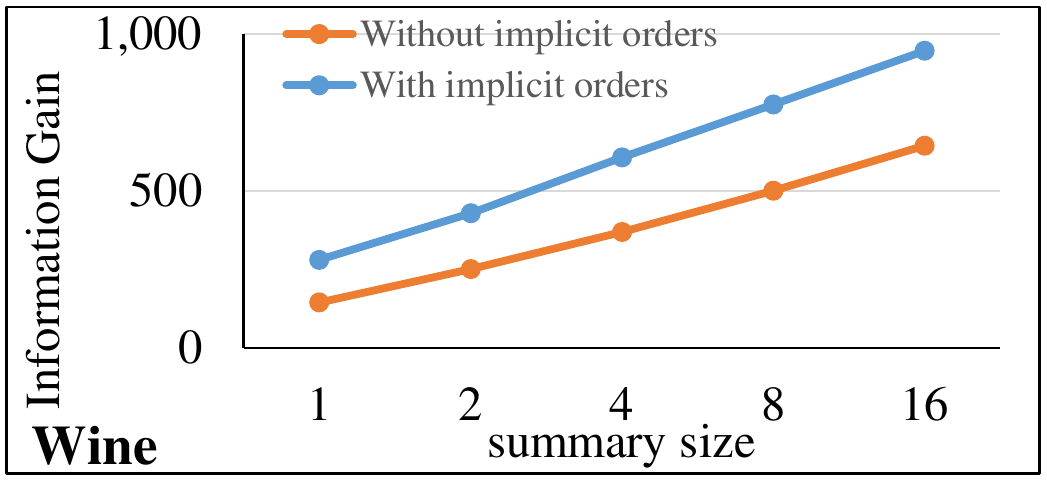}
        % \caption{\label{fig:sth1} Partial orders (19' \& 20').}
    \end{subfigure}%
    ~ 
    \begin{subfigure}[]{0.23\textwidth} % boxplot 2
        \centering
        \includegraphics[width=4.05cm]{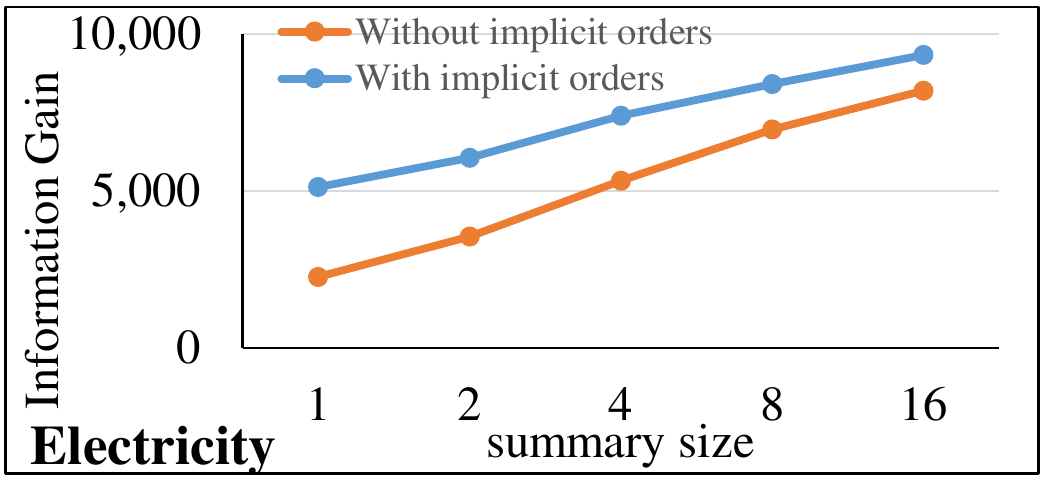}
        % \caption{\label{fig:sth2} Union orders.}
    \end{subfigure}
    \paddingT
    \caption{\blue{Information gain with and without implicit orders.}}
    \paddingD
    % \vspace{-2mm}
    \label{figure:information-gain}
\end{figure}

\end{change}

\begin{table}
\blue{
\begin{small}
    \centering
    % \paddingT
    \caption{\label{tab:query-runtimes} \blue{Runtimes (seconds) for original and rewritten queries.}}
    \paddingD
    \scalebox{0.87}{
    % \scalebox{1}{
\begin{tabular}{|c|c|c|c|c|c|c|}
\hline
\textbf{Query \#} & \textbf{Q1} & \textbf{Q2} & \textbf{Q3} & \textbf{Q4} & \textbf{Q5} & \textbf{Q6}  \\ \hline
\textbf{Original} & 6.35 & 13.92 & 23 & 16 & 54 & 3.34 \\ \hline
\textbf{Rewritten} & 3.33 & 10.88 & 17 & 14 & 34 & 2.44 \\ \hline
%\textbf{Improvement} & 47\% & 22\% & 26\% & 12\% & 37\% & 27\% & 30\% \\ \hline
\end{tabular}
}
 \paddingD
\end{small}
}
\end{table}

\section{Related Work} \label{sec:relWork}

%Ordered sets have been a subject of research in mathematics~\cite{DP02}. However, p
Previous work investigated the  properties of and relationships between sorted sets~\cite{DP02}. However, to the best of our knowledge, no algorithms for discovery of implicit domain orders were proposed. 
%We have presented the first algorithms for implicit domain discovery.

%ICs specify attribute relationships and are commonly used to characterize data quality. 
%Approaches for the management of data include discovery of integrity constraints used in data quality and query processing. Functional dependencies ($\FD$s) are a common example. 
%specify attribute relationships~\cite{TANE}.  
%$\FD$s have found application in a number of areas, such as schema design, data integration, query optimization, data cleaning, and security~\cite{FGL11}.  
%Several techniques to discover $\FD$s have been developed~\cite{TANE,FGL11,PN16}. Additionally, a number of extensions to the classical notion of $\FD$s have been proposed, including order dependencies ($\OD$s)~\cite{Gi1983,Ng,LN16,SGG+2017:ODD}.
%
%A sound and complete set of inference rules for $\OD$s is defined in ~\cite{Szl2012,szl2013}, together with an analysis of the complexity of determining logical implication. 
%Expressing $\OD$s in a natural way relies on lists of attributes, as in the SQL order-by statement. 

Existing $\OD$ discovery algorithms require some notion of explicit order~\cite{SGG2018:BiODD,SGG+2017:ODD,LN16,JZY20} and can benefit from implicit orders to find ``hidden'' $\OD$s that have not been feasible before.  In our solution, we use the set-based $\OD$ discovery algorithm~\cite{SGG+2017:ODD,SGG2018:BiODD} since other approaches cannot discover a complete set of non-trivial $\OD$s.
%, as it is the most suitable one to find implicit domain orders.
%and the only that is complete from currently existing approaches.
%In~\cite{SGG+2017:ODD,SGG2018:BiODD}, a polynomial mapping of list-based $\OD$s into equivalent set-based canonical $\OD$s is provided. The mapping enables a $\OD$ discovery algorithm--proved to be complete--that traverses a much smaller set-containment lattice to identify candidates for $\OD$s rather than the list-containment lattice used in~\cite{LN16}. The algorithm in~\cite{LN16} is intentionally \emph{not} complete to prune the large list-based search space. 
For example, the list-based approach in~\cite{LN16} is intentionally incomplete in order to prune the much larger list-based search space.  A similar approach, recently shown in~\cite{JZY20} is also incomplete despite the authors' claim of completeness:   
%There are detailed analyses in~\cite{GGK20} of the correctness of the results in the recent article in~\cite{CSM19} concerning the $\OD$ discovery problem. 
%For the main claimed results related to the $\OD$ discovery in~\cite{CSM19}, there are errors in the proof, due to the fact that the algorithm is incomplete as shown in~\cite{GGK20}.  
%The recent work on discovery of $\OD$s in \cite{JZY20} has fundamental errors and omissions in the completeness proof and experiments. Despite the authors' claim of completeness, their algorithm is \emph{not} complete, as following the approach in~\cite{LN16}, 
it omits $\OD$s in which the same attributes are repeated in the left- and the right-hand side, such as $\od{\Lst{ \A{country}, \A{profit}} }{ \Lst{ \A{country}, \A{tax}} }$ and reports an $\OD$ only when both the corresponding $\OFD$ and $\OC$ hold. Thus, it leaves out cases when only an $\OFD$ or only an $\OC$ is true 
(e.g, $\OC$
%$\simularC{\A{year}}{\A{week}}{\A{month}}$ 
$\simular{\A{week}}{\A{month}}$
holds, 
but $\OFD$s 
%$\ordersC{, \A{week}}{\emptyLst}{\A{month}}$ and 
%$\ordersC{\A{year}, \A{month}}{\emptyLst}{\A{week}}$ do not hold).}
$\ordersC{\A{week}}{\emptyLst}{\A{month}}$ and 
$\ordersC{\A{month}}{\emptyLst}{\A{week}}$ do not hold
over the tuples within a \emph{single} \A{year}). 
% Previous version
% it omits $\OD$s in which the same attributes are repeated in the left- and the right-hand side, such as $\od{\Lst{ \A{country}, \A{profit}} }{ \Lst{ \A{country}, \A{tax}} }$ and reports an $\OD$ only when both the corresponding $\OFD$ and $\OC$ hold. Thus, it leaves out cases when only an $\OFD$ or only an $\OC$ is true (e.g, $\OC$ $\simular{\A{week}}{\A{month}}$, but $\FD$s $\fd{\A{week}}{\A{month}}$ and $\fd{\A{month}}{\A{week}}$ do not hold). %Distributed computing techniques~\cite{SGI11} that include order operators are used to further improve the efficiency of $\OD$ discovery. 
Additionally, the algorithm recently presented in~\cite{CSM19} is incomplete, as shown in~\cite{GGK20}.  

%Finally, we point out $\OD$ discovery innovations that are orthogonal to ours: in~\cite{SGI11}, the authors use distributed computing techniques to improve the efficiency of dependency discovery, including $\OD$s, and in~\cite{ZST19}, the authors investigate incremental $\OD$ discovery techniques. 
%in response to tuple insertions. 

%Different variations of order relationships have been studied, including sequential dependencies (${\sf SD}$s)~\cite{Go2009}. ${\sf SD}$s specify that when tuples have consecutive antecedent values, their consequents must be within a specified range. The discovery problem was studied for approximate and conditional ${\sf SD}$s~\cite{Go2009} to hold with some exceptions and over subsets of the data, respectively. However, in contrast to approximate band conditional $\OD$s, ${\sf SD}$s do not allow for small variations, without intrinsic violation of actual semantics, controlled by the band-width~\cite{LSBS20, LSBSTR20}. Differential dependencies ${\sf DD}$s)~\cite{SC11} require that if the values of the attribute from antecedent of the ${\sf DD}$ satisfy a distance constraint, then the values of attributes from the consequent of the ${\sf DD}$ satisfy a distance constraint. For ${\sf DD}$s, the inference system is proposed that has been used to define minimal canonical form to discover ${\sf DD}$s. A study of establishing whether a given stream is sufficiently nearly-sorted was described in~\cite{KFM11}.

The importance of sorted sets has been recognized for query optimization and data cleaning.
%The seminal query optimizer System R~\cite{Se1979} paid particular attention to interesting orders by keeping track of all such ordered sets throughout the process of query optimization.  
%$\FD$s and $\OD$s were shown to help generate interesting orders in~\cite{Si1996} and in~\cite{Szli2012,SGG+2013:complexity}, respectively. 
In~\cite{QO-exp-orders-vldb05}, the authors explored 
%the use of 
sorted sets for executing nested queries. Sorted sets created as generated columns (SQL functions and algebraic expressions) were used in predicates for query optimization~\cite{Ma2005,Szli2012}. Relationships between sorted attributes 
%discovered by reasoning over the physical schema 
have been also used to eliminate joins~\cite{Sz2011} 
%over data warehouse schema queries
and to generate interesting orders~\cite{Si1996,SGG+2013:complexity}. A practical application of sorted sets 
%for improved design 
to reduce the indexing space was presented in~\cite{Do1982}. 
% Finally, 
%the constraint repair problem attempts to find low cost changes that, when applied, will cause the constraints to be satisfied. 
In~\cite{QTY18}, the authors proved that finding minimal-cost repairs with respect to $\OD$s is NP-complete, 
%in the  size of the database, 
and introduce an approach to greedy repair.

\section{Conclusions}  \label{sec:futureWork}

We devised the first techniques to discover implicit domain orders. In future work, we plan to study the discovery of \emph{approximate} implicit orders through $\OD$s that hold with some exceptions. 
%within a specified threshold. 
While in this work, we discover implicit domain orders with respect to a single 
set-based OD, we plan to extend our framework to merge orders found 
for a given attribute with multiple set-based ODs.
\blue{We will also address implicit order discovery in dynamic tables, as was recently done for explicit $\OD$ discovery   \cite{pointwise-od-discovery-vldb20}.}

\newpage

% \balance  %% THIS WAS ADDED AFTER THE REVISION [Reza]

\bibliographystyle{abbrv}
\bibliography{bibliography}  % vldb_sample.bib is the name of the Bibliography in this case
% You must have a proper ".bib" file
%  and remember to run:
% latex bibtex latex latex
% to resolve all references

\newpage
\section{Appendix}\label{sec:appendix}
\subsection{Pseudocode}
Algorithm~\ref{pc:eioc-fd-left-to-right} validates an $\EIOD$ candidate 
$\simular{\A{A}}{\A{B}^*}$ when the corresponding $\FD$ 
$\fd{\A{A}}{\A{B}}$ holds (as in the second case described in Section~\ref{sec:eioc-with-fd}). 
In Lines \ref{pcline:eiod-comp-min} and \ref{pcline:eiod-comp-max}, 
the smallest and largest values of $\A{A}$ co-occurring with each $b_i\in\A{B}$ are found; 
this can be done in a single iteration over the database using a hash table.
In Line~\ref{pcline:eiod-sort}, the values of $\A{B}$ are sorted 
according to the smallest value of $\A{A}$ with which 
they co-occur, and placed in the array $B$.
Lines~\ref{pcline:eiod-begin-loop} to \ref{pcline:eiod-end-loop} check consecutive values in $B$. If the ranges of co-occurring 
$\A{A}$ values for any two consecutive values in $B$ overlap, this $\EIOD$ candidate
is invalid and the algorithm returns ``INVALID''.
Otherwise, the order of values in $B$ is \emph{the} valid implicit total order $\A{B}^*$ 
and is returned.

% \vskip -0.2cm
\begin{algorithm}[ht]
\begin{small}
\caption{EIOD-FD-LeftToRight} 
\label{pc:eioc-fd-left-to-right}
\noindent\textbf{Input:} Attributes $\A{A}$ and $\A{B}$. \hspace{5cm} \ \ \ \ \ \ \ \ \\
\noindent\textbf{Output:} A total order over the values of $\A{B}$ or ``INVALID''. \ \ \ \ \ \ \ \ \ \ \ \ \ \ \ \ \ \ \ \ \ \ \ \ \ \ \ \phantom{} \vspace{-0.3cm}
    \begin{algorithmic}[1]
        \STATE set $\text{\emph{min}}_{\A{A}}(b_i)$ as the smallest value of $\A{A}$ co-occurring with $b_i$ \label{pcline:eiod-comp-min}
        \STATE set $\text{\emph{max}}_{\A{A}}(b_i)$ as the largest value of $\A{A}$ co-occurring with $b_i$ \label{pcline:eiod-comp-max}
        \STATE sort the values of $b_i\in\A{B}$ in ascending order based on $\text{\emph{min}}_{\A{A}}(b_i)$ and store in $B$ \label{pcline:eiod-sort}
        \FOR{each $i$ in $\{1, \dots, \text{\emph{size}}(B)\}$} \label{pcline:eiod-begin-loop}
            \IF{$\text{\emph{min}}_{\A{A}}(B[i + 1])$ $<$ $\text{\emph{max}}_{\A{A}}(B[i])$}
                    \RETURN \emph{``INVALID''}
            \ENDIF
        \ENDFOR \label{pcline:eiod-end-loop}
        \RETURN $B$
    \end{algorithmic}
 \end{small}
\end{algorithm}
\vskip -0.2cm

Algorithm~\ref{pc:deriveChains} derives total orders 
over the values of $\A{A}$ and $\A{B}$ based on the disconnected subtrees 
in $\BGp_{\A{A},\A{B}}$, as described in Section~\ref{sec:iioc-empty-context}. 
Line~\ref{pcline:iioc-empty-context-degree1-node} 
sets \emph{currentNode} to a node with degree one. 
Note that since $\BGp_{\A{A},\A{B}}$ is acyclic, such a node must always exist 
in any subtree $G$ of it.
Lines~\ref{pcline:iioc-empty-context-begin-loop} to \ref{pcline:eiod-end-loop} 
traverse the tree back and forth between the nodes of 
$\A{A}$ and $\A{B}$.
% Such an action is always 
The zigzag chain is
uniquely determined since $\BGp_{\A{A},\A{B}}$ 
has no 3-fan-out, and thus, the degree of each node is at most two.
Depending on the side ($\A{A}$ or $\A{B}$), each node is appended to one of the total orders in Lines~\ref{pcline:iioc-empty-context-append1} or \ref{pcline:iioc-empty-context-append2}, and these two orders are returned as the output of the algorithm.
Using the original graph $\BG_{\A{A},\A{B}}$, the \emph{singleton} nodes can easily be inserted into the derived total orders, resulting in the final \emph{weak} total orders, $\A{A}^*$ and $\A{B}^*$.

\begin{algorithm}[ht]
 \begin{small}
    \caption{IIOC-deriveChains} \label{pc:deriveChains}
  \textbf{Input:} a connected subtree $G$ in $\BGp_{\A{A}, \A{B}}$.   \hspace{3.7cm} \ \ \ \\
  \textbf{Output:} total orders over the values of $\A{A}$ and $\A{B}$ in $G$. \ \ \ \ \ \ \ \ \ \ \ \ \ \ \ \ \ \ \ \ \ \ \ \ \ \   \hphantom{1.7cm} \vspace{-3mm}
    \begin{algorithmic}[1]
        \STATE set \emph{currentNode} to an arbitrary node with degree 1 \label{pcline:iioc-empty-context-degree1-node}
        \STATE set $A$ and $B$ to empty lists
        \STATE set \emph{visited} to the set $\{$\emph{currentNode}$\}$ 
        % \STATE \emph{chain} = $\langle [], [] \rangle$
        \STATE append \emph{currentNode} to the appropriate list $A$ or $B$ depending on which side it belongs to \label{pcline:iioc-empty-context-append1}
        \WHILE{\emph{currentNode} has an unvisited neighbor \emph{neighbor}} \label{pcline:iioc-empty-context-begin-loop}
            \STATE add \emph{neighbor} to \emph{visited}
            \STATE append \emph{neighbor} to the appropriate list $A$ or $B$ \label{pcline:iioc-empty-context-append2}
            \STATE set \emph{currentNode} to \emph{neighbor}
        \ENDWHILE \label{pcline:iioc-empty-context-end-loop}
        \RETURN $A, B$
        % \WHILE{\emph{!unvisitedList.isEmpty()}} \label{pcline:mainLoop}
        %     \WHILE{\emph{currNode.degree} $>= 2$} \label{pcline:noSingles}
        %         \STATE append node to \emph{chain.getSide(currNode.side)} \label{pcline:addNodeToCorrSide}
        %         \IF{\emph{currNode.singletons.isEmpty() == false}}
        %             \STATE append the set \emph{currNode.singletons} to \label{pcline:appendHyperNode}
        %             % \\ \hspace{50}
        %              \emph{chain.getSide(currNode.side $\wedge$ 1)} \\ 
        %             \STATE remove all \emph{currNode.singletons} from \emph{unvisitedList} \label{pcline:remHyperNode}
        %         \ENDIF
        %         \STATE remove \emph{currNode} from \emph{unvisitedList} \label{pcline:remCurrNode}
        %     \ENDWHILE
        %     \STATE add \emph{chain} to \emph{allChains} \label{pcline:addToAllChains}
        % \ENDWHILE \label{line:endMainLoop}
    \end{algorithmic}
 \end{small}
\end{algorithm}

Algorithms~\ref{pc:iioc-strongest-order-first} and \ref{pc:iioc-strongest-order-second} 
demonstrate the main steps to remove edges from the partial order output of 
the SAT solver in order to achieve a pair of strongest derivable orders, 
as described in Section~\ref{SEC:SAT-Solver}. 
Each of these algorithms generates a set of pairs of values (i.e., \emph{nonremovables}), 
which must be kept in the final \emph{strongest} derivable order graph. 
Note that these algorithms describe the steps taken only for values on 
one side of the candidate, which can be repeated for the other side as well.

Algorithm~\ref{pc:iioc-strongest-order-first} corresponds to the first condition described in Section~\ref{SEC:SAT-Solver}.
% Line~\ref{pcline:iioc-nonempty-nonrem} creates the empty set $nonremovables$ which stores the pairs of values that should not be removed from the initial partial order (i.e., exist in the final strongest derivable order). 
Lines~\ref{pcline:iioc-nonempty-begin-CC-for} to \ref{pcline:iioc-nonempty-end-CC-for} 
find pairs of values that satisfy this condition, i.e., pairs which exist 
within the same connected component of some $\PG_i$ such that there are at least 
two nodes with degree two or larger along their connecting path.
This is done by running a DFS from every node $l_u$ in the graph and 
detecting all nodes $l_v$ which satisfy this condition.
Algorithm~\ref{pc:iioc-strongest-order-second} corresponds to the 
second condition described in Section~\ref{SEC:SAT-Solver},
i.e., all pairs in a connected component of some $\PG_i$, such that 
there exist two nodes $v_r$ and $v_s$ in 
the same connected component in $\PG_i$ which 
also exist in some other bipartite graph $\PG_j$.
Line~\ref{pcline:iioc-nonempty-samePG} creates a hashmap which, for each 
two values $l_u$ and $l_v$, stores the bipartite graphs in 
which $l_u$ and $l_v$ are connected.
Lines~\ref{pcline:iioc-nonempty-begin-PG-for} to 
\ref{pcline:iioc-nonempty-end-PG-for} store the co-occurrences of 
such pairs of values in the variable \emph{samePG}.
Lines~\ref{pcline:iioc-nonempty-begin-nonrem-for-two} to
\ref{pcline:iioc-nonempty-end-nonrem-for-two} traverse over 
all bipartite graphs to 
detect $\BG_i$s and values $l_p$ and $l_q$
such that $l_p$ and $l_q$ are connected in $\BG_i$ as well as in at least
one other bipartite graph.
Next, all pairs of connected values within such $\BG_i$
are added to the set \emph{nonremovables}.
% by detecting those $\BG_i$s in which two values 
% $l_p$ and $l_q$ are present inside at least one more bipartite graph, 
% and add all pairs of values $l_u$ and $l_v$ within these 
% bipartite graphs to the set $nonremovables$.

The remaining step is to compute the union of output sets of
Algorithms~\ref{pc:iioc-strongest-order-first} and 
\ref{pc:iioc-strongest-order-second} and to
% Finally, Lines~\ref{pcline:iioc-nonempty-begin-removal} to \ref{pcline:iioc-nonempty-begin-removal} 
traverse over all the edges of the initial partial order graph 
(i.e., the output of the SAT solver) 
and remove those which do not exist in the union set, 
creating the final graph for a strongest derivable order.

% \vskip -0.2cm
\begin{algorithm}[ht]
 \begin{small}
    \caption{IIOC-SameConnectedComponent} \label{pc:iioc-strongest-order-first}
 \textbf{Input:} set of bipartite graphs $\BG$ $=$ $\{\BG_1$, $\dots$, $\BG_k\}$. \hspace{2.2cm} \ \\
 \textbf{Output:} set of pairs of values in one side of the candidate that must be kept in the final order \  \hphantom{1.4cm} %\vspace{-3mm}
    \begin{algorithmic}[1]
        \STATE set \emph{nonremovables} as an empty set \label{pcline:iioc-nonempty-nonrem1}
        \FOR{each $\BG_i$ $\in$ $\BG$} \label{pcline:iioc-nonempty-begin-CC-for}
            \FOR{each $l$ $\in$ $\BG_i$}
                \STATE run DFS from $l_v$ and add nodes with degree $\ge2$ to \emph{degTwo} 
                \STATE when visiting node $l_u$:
                \IF{\emph{size}$($\emph{degTwo}$)$ $\ge$ $2$}
                    \STATE add the pair $(l_u,l_v)$ to \emph{nonremovables}
                \ENDIF
            \ENDFOR
        \ENDFOR \label{pcline:iioc-nonempty-end-CC-for}
        \RETURN \emph{nonremovables}
    \end{algorithmic}
 \end{small}
\end{algorithm}
% \vskip -0.2cm

% \vskip -0.2cm
\begin{algorithm}[ht]
 \begin{small}
    \caption{IIOC-MutualPairsOfValuesInPGs} \label{pc:iioc-strongest-order-second}
 \textbf{Input:} set of bipartite graphs $\BG$ $=$ $\{\BG_1$, $\dots$, $\BG_k\}$.  \hspace{2.2cm} \ \\ 
 \textbf{Output:} set of pairs of values in one side of the candidate that must be kept in the final order \  \hphantom{1.4cm} %\vspace{-3mm}
    \begin{algorithmic}[1]
        \STATE set \emph{nonremovables} as an empty set \label{pcline:iioc-nonempty-nonrem2}
        \STATE set \emph{samePG} as a hashmap from pairs of values in $L$ to empty sets \label{pcline:iioc-nonempty-samePG}
        \FOR{each $\BG_i$ $\in$ $\BG$} \label{pcline:iioc-nonempty-begin-PG-for}
            \FOR{each $l_u$, $l_v$ $\in$ $\BG_i$}
                \STATE add $i$ to $\text{\emph{samePG}}[l_u,l_v]$
            \ENDFOR
        \ENDFOR \label{pcline:iioc-nonempty-end-PG-for}
        \FOR{each $\BG_i$ $\in$ $\BG$} \label{pcline:iioc-nonempty-begin-nonrem-for-two}
            \IF{there exist $l_p$, $l_q$ $\in$ $\BG_i$ such that $\text{\emph{samePG}}[l_p,l_q]$ $\ge$ 2}
                \FOR{each $l_u$, $l_v$ $\in$ $\BG_i$}
                    \STATE add the pair $(l_u,l_v)$ to \emph{nonremovables}
                \ENDFOR
            \ENDIF
        \ENDFOR \label{pcline:iioc-nonempty-end-nonrem-for-two}
        \RETURN \emph{nonremovables}
    \end{algorithmic}
 \end{small}
\end{algorithm}

\subsection{Proofs}

\setcounter{lemma}{0}
\setcounter{theorem}{0}

\begin{theorem}
Assume the \FD\ of \fd{\A{A}}{\A{B}}\ holds in \T{r}.
Then
\(\simular{\A{A}}{\A{B}^{*}}\) holds
\emph{iff}
\((\pi_{\A{B}})_{\A{A}}\)
is an interval partitioning;
\Ord{T}[{\A{B}_{*}}][\prec]\
is the unique order of \A{B}'s values
corresponding to their order
in \T{r}\ as sorted by \A{A}.
\end{theorem}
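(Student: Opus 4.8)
The plan is to route everything through the correspondence between witness total orders over $\T{r}$ and the sequence of $\A{B}$-values they induce, exploiting that the $\FD$ $\fd{\A{A}}{\A{B}}$ fixes each $\A{B}$-value from its $\A{A}$-value. First I would record two structural facts. Since $\A{A}$ carries a strong total (explicit) order, any witness for $\simular{\A{A}}{\A{B}^{*}}$ must be consistent with it, so every witness orders any two tuples with distinct $\A{A}$-values the same way; and any two tuples sharing an $\A{A}$-value share a $\A{B}$-value by the $\FD$, so their internal arrangement does not affect the projection onto $\A{B}$. Consequently all witnesses induce the \emph{same} sequence of $\A{B}$-values, namely the one read off from $\tau_{\A{A}}$ after collapsing adjacent duplicates. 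This already yields the uniqueness claim, and it reduces the whole statement to deciding when that single induced sequence is a valid strong total order over $\A{B}$'s values.

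For the $(\Leftarrow)$ direction, assume $(\pi_{\A{B}})_{\A{A}}$ is an interval partitioning, with the groups indexed by ascending minimum $\A{A}$-value (these minima are distinct, since by the $\FD$ each $\A{A}$-value co-occurs with exactly one $\A{B}$-value). I would show that non-overlapping ranges force each $\A{B}$-value's tuples to occupy a \emph{contiguous} block of $\tau_{\A{A}}$: if a tuple carrying $b_j$ fell strictly between two tuples carrying $b_i$ in $\A{A}$-sorted order, then its $\A{A}$-value $v$ satisfies $\min(\set{E}_{i}(\tup{t}_{\A{B}})_{\A{A}}) \prec v \prec \max(\set{E}_{i}(\tup{t}_{\A{B}})_{\A{A}})$ with $v$ belonging to group $j$, contradicting the interval condition. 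Contiguity means that collapsing adjacent duplicates produces a list of \emph{distinct} $\A{B}$-values, hence an irreflexive, acyclic, total relation, i.e.\ a strong total order; the $\A{A}$-sorted tuple order is then a witness, so $\simular{\A{A}}{\A{B}^{*}}$ holds with $\A{B}^{*}$ equal to this order.

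For the $(\Rightarrow)$ direction I would argue the contrapositive. If $(\pi_{\A{B}})_{\A{A}}$ is not an interval partitioning, there exist groups $b_i,b_j$ and an $\A{A}$-value $v$ of group $j$ with $\min(\set{E}_{i}(\tup{t}_{\A{B}})_{\A{A}}) \prec v \prec \max(\set{E}_{i}(\tup{t}_{\A{B}})_{\A{A}})$. Choosing witnessing tuples, every $\A{A}$-consistent witness places a tuple with $\A{B}$-value $b_i$ first, then one with $b_j$, then one with $b_i$ again; the induced $\A{B}$-sequence therefore contains $b_i$ before $b_j$ \emph{and} $b_j$ before $b_i$, a swap that forces a cycle. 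No witness can avoid this, so no valid $\A{B}^{*}$ exists and $\simular{\A{A}}{\A{B}^{*}}$ fails.

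I expect the main obstacle to be the bookkeeping in the $(\Leftarrow)$ direction: converting ``ranges do not overlap'' into ``same-$\A{B}$ tuples are contiguous in $\tau_{\A{A}}$'' requires care about ties in $\A{A}$ and about the chosen indexing of the groups by ascending minimum, and then confirming the collapsed sequence is genuinely acyclic rather than merely duplicate-free. The uniqueness and the $(\Rightarrow)$ direction are comparatively routine once the witness-to-projection correspondence is pinned down.
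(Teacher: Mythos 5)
Your proposal is correct and follows essentially the same route as the paper's own proof: in one direction the $\A{A}$-sorted tuple order is exhibited as the (unique, up to FD-forced ties) witness whose projection yields the total order on $\A{B}$, and in the other direction an overlap of ranges produces three tuples that force $b_i \prec b_j$ and $b_j \prec b_i$ in every $\A{A}$-consistent witness, hence a cycle. Your added care about contiguity and the uniqueness of the induced $\A{B}$-sequence is a slightly more explicit rendering of what the paper asserts implicitly, but the argument is the same.
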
 

\begin{proof}
Assume $(\pi_{\A{B}})_{\A{A}}$ is an interval partitioning and, without loss of generality, let $a_1$, $\dots$, $a_{p_q}$ be the explicit order over values in $\A{A}$ and the value of tuples $\tup{t}_1$ to $\tup{t}_{p_q}$ projected on columns $\A{A}$ and $\A{B}$ be the following: 
$(a_1, b_1)$, $(a_2, b_1)$, $\dots$, $(a_{p_1}, b_1)$, $(a_{p_1+1}, b_2)$, $\dots$, $(a_{p_2}, b_2)$, $\dots$, $(a_{p_{q-1}+1}, b_q)$, $\dots$, $(a_{p_q}, b_q)$ (tuples with duplicate values are represented as one tuple for simplicity). 
Therefore, $\set{E}_{j}(\tup{t}_{\A{B}})_{\A{A}}$ $=$ $\{a_{p_{j-1}+1}$, $\dots$, $a_{p_j}\}$ (assuming $p_0=0$). 
The total order $\Ord{T}[\T{r}][\prec]$, $\tup{t}_1$ $\prec$ $\dots$ $\prec$ $\tup{t}_{p_q}$ is a valid \emph{witness} and its projection on $\A{B}$ results in total order $b_1$ $\prec$ $\dots$ $\prec$ $b_q$. 
Since this is the only valid order (as it is enforced by the explicit order over the values of $\A{A}$), \Ord{T}[{\A{B}_{*}}][\prec] is the only valid---and therefore the strongest derivable---implicit order over the values of $\A{B}$.

If $(\pi_{\A{B}})_{\A{A}}$ is not an interval partitioning, then 
there exist \(i, j \in [1,\ldots,k]\)
such that $i<j$ and 
\(
	\mbox{min}(\set{E}_{i}(\tup{t}_{\A{B}})_{\A{A}})
		\prec
	\mbox{min}(\set{E}_{j}(\tup{t}_{\A{B}})_{\A{A}})
		\prec
	\mbox{max}(\set{E}_{i}(\tup{t}_{\A{B}})_{\A{A}})
\).
Let $\tup{t}_{i_1}=(\mbox{min}(\set{E}_{i}(\tup{t}_{\A{B}})_{\A{A}}), b_i)$,
$\tup{t}_{i_2}=(\mbox{max}(\set{E}_{i}(\tup{t}_{\A{B}})_{\A{A}}), b_i)$, and
$\tup{t}_{j}=(\mbox{min}(\set{E}_{j}(\tup{t}_{\A{B}})_{\A{A}}), b_j)$. 
If this $\EIOD$ candidate is valid, then a total order $\Ord{T}[\T{r}][\prec]$ must exist that is: 1) compatible with the explicit order over the values of $\A{A}$ and 2) its projection produces a valid total order over the values of $\A{B}$. 
Given the first condition, in any total order $\Ord{T}[\T{r}][\prec]$, there must be $\tup{t}_{i_1}$ $\prec$ $\tup{t}_j$ $\prec$ $\tup{t}_{i_2}$, while the projection of this order would produce a relation over $\A{B}$ with $b_i\prec b_j$ and $b_j\prec b_i$, which cannot be a valid total order. Therefore, by contradiction, this $\EIOC$ candidate is not valid.
\end{proof}

Let 
the $\FD$ be 
\fd{\A{A}}{\A{B}},
\(m = |\A{B}|\) (the number of distinct values of \A{B}),
and
\(n = |\T{r}|\) (the number of tuples).

\begin{lemma}
The runtime of discovering $\EIOD$s with an empty % and non-empty  
context is ${\mathcal O}(m\ln m + n)$.% and ${\mathcal O}(n\ln n + m^2)$, respectively.}
\end{lemma}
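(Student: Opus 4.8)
The plan is to bound the cost of Algorithm~\ref{pc:eioc-fd-left-to-right} (EIOD-FD-LeftToRight) line by line, since under the hypothesis \fd{\A{A}}{\A{B}} this is exactly the procedure that validates \(\simular{\A{A}}{\A{B}^{*}}\) and produces \A{B}^{*} (the bidirectional-FD case being a specialization of it). The key observation is that the work splits cleanly into a part that is linear in the number of tuples $n$ and a part governed only by the number of distinct values $m=|\A{B}|$, and that no step forces a full sort of all $n$ tuples.

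First I would account for the aggregation in Lines~\ref{pcline:eiod-comp-min} and \ref{pcline:eiod-comp-max}. Both $\text{min}_{\A{A}}(b_i)$ and $\text{max}_{\A{A}}(b_i)$, for every $b_i \in \A{B}$, can be computed together in a single scan of \T{r}: maintain a hash table keyed by the \A{B}-value, and for each tuple update the running minimum and maximum of its \A{A}-value. Each tuple triggers one hash lookup and a constant number of domain comparisons, all $O(1)$ amortized, so this phase costs $O(n)$. This is also where I would fold in the first (bidirectional) case: collapsing identical $(\A{A},\A{B})$ pairs down to the $m$ distinct \A{B}-values is achievable in the same $O(n)$ hashing pass, rather than by an $O(n\ln n)$ sort of the entire table.

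Second, Line~\ref{pcline:eiod-sort} sorts the $m$ distinct \A{B}-values by their associated $\text{min}_{\A{A}}$ key. A comparison sort over $m$ elements, with $O(1)$ per key comparison, costs $O(m\ln m)$. Finally, the loop in Lines~\ref{pcline:eiod-begin-loop}--\ref{pcline:eiod-end-loop} makes one pass over the sorted array $B$ of length $m$, performing a single comparison per adjacent pair to test the non-overlap (interval-partitioning) condition underlying the correctness theorem; this is $O(m)$. Summing the three phases gives $O(n) + O(m\ln m) + O(m) = O(m\ln m + n)$, as claimed.

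The only point requiring care, and thus the main obstacle, is ensuring that no step secretly incurs an $O(n\ln n)$ cost: I must justify that the \emph{sort} ranges over the $m$ distinct values rather than over all tuples. This relies precisely on aggregating the per-\A{B}-value extrema during the linear hashing pass \emph{before} sorting, so that only $m$ keyed items ever reach the comparison sort. Since in practice $m \ll n$, this separation is exactly what makes the method cheaper than sorting \T{r} directly by \A{A}.
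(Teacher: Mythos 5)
Your proposal is correct and follows essentially the same route as the paper's own proof: hash-based aggregation in $O(n)$, a comparison sort restricted to the $m$ distinct values for $O(m\ln m)$, and a linear scan of the sorted array, with the crucial point being that only the $m$ keyed extrema ever reach the sort. The only (minor) omission is the third case of Section~\ref{sec:eioc-with-fd}, where $\fd{\A{B}}{\A{A}}$ holds but $\fd{\A{A}}{\A{B}}$ does not; the identical accounting applies there since the number of distinct values to be sorted is still bounded by $m$.
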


% \begin{lemma}
% \blue{The runtime of validating $\EIOD$s with empty and non-empty context is ${\mathcal O}(m\ln m + n)$ and ${\mathcal O}(n\ln n)$, respectively. The space complexity for storing the final order in these cases is ${\mathcal O}(m)$ and ${\mathcal O}(m^2)$, respectively.}
% \end{lemma}

\begin{proof}
Let the attributes involved in the $\EIOD$ candidate
with an empty context be $\A{A}$ and $\A{B}$, and 
assume the $\FD$ $\fd{\A{A}}{\A{B}}$ holds.
There are three cases (as enumerated in Section~\ref{sec:eioc-with-fd}) that
depend on which side has the implicit order 
and whether the $\FD$ $\fd{\A{B}}{\A{A}}$ holds. 
%One of three cases in Section~\ref{sec:eioc-with-fd} will happen.
In all cases, we only need to sort $m$ values 
(the number of distinct values of $\A{B}$), 
for a cost of $m\ln m$.
The other steps can be done in linear time using a hash table. 
Thus, the total runtime is ${\mathcal O}(m\ln m + n)$.
% Note that even though storing a weak total order with a graph may cost ${\mathcal O}(m^2)$, 
% but this can be avoided by efficiently storing the values within
% each partition (according to Definition~\ref{DEF/Weak Total Order}) 
% together, making the total required space to store the order only ${\mathcal O}(m)$.
\end{proof}

\begin{theorem}\label{thm:emptyContEI2}
$\simular{\A{A}}{\A{B}[][*]}$ is 
% a valid $\EIOC$ 
valid
\emph{iff}
$\A{B}[][{\tau_{\A{A}}}]$ is a 
% partial 
weak total
order.
\end{theorem}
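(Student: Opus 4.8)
The plan is to prove both directions by pinning down the tight correspondence between \emph{witness} total orders over $\T{r}$ and the derived relation $\A{B}[][{\tau_{\A{A}}}]$. Throughout, I would write $\tau_{\A{A}} = [\set{E}_1, \ldots, \set{E}_p]$ for the sorted partition of $\T{r}$ by $\A{A}$, so that all tuples of a group $\set{E}_k$ carry a single $\A{A}$-value and $\set{E}_1 \prec_{\A{A}} \cdots \prec_{\A{A}} \set{E}_p$. Recall that $\A{B}[][{\tau_{\A{A}}}]$ contains $b_i \prec b_j$ exactly when distinct values $b_i, b_j$ occur in \emph{consecutive} groups $\set{E}_k, \set{E}_{k+1}$.

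For the forward direction, I would start from a witness $\Ord{T}[\T{r}][\prec]$ justifying $\simular{\A{A}}{\A{B}[][*]}$ and let $\prec_B$ be the (weak total) order its $\A{B}$-projection induces on $\A{B}$'s values. Since the witness is compatible with $\A{A}$'s explicit order and distinct groups carry distinct $\A{A}$-values, every tuple of $\set{E}_k$ precedes every tuple of $\set{E}_{k+1}$ in $\Ord{T}[\T{r}][\prec]$; projecting onto $\A{B}$, every inferred pair $b_i \prec b_j$ also holds in $\prec_B$. Hence $\A{B}[][{\tau_{\A{A}}}]$ is a subrelation of a weak total order, so its transitive closure is acyclic. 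Totality then follows from a structural observation I would make about the construction: because consecutive groups compare every pair of distinct values they span and every intermediate group is nonempty, the groups $\set{E}_1, \ldots, \set{E}_p$ linearly connect all $\A{B}$-values, so an acyclic $\A{B}[][{\tau_{\A{A}}}]$ partitions the values into totally ordered tied classes, i.e., is a weak total order.

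For the converse, assuming $\A{B}[][{\tau_{\A{A}}}]$ is a weak total order, I would construct a witness by sorting $\T{r}$ lexicographically by the pair $(\A{A}, \A{B}[][{\tau_{\A{A}}}])$, breaking any remaining ties arbitrarily. Compatibility with $\A{A}$ is immediate from the first sort key. The substance is to verify that the $\A{B}$-projection of this order is monotone non-decreasing with respect to $\A{B}[][{\tau_{\A{A}}}]$, i.e., induces no \emph{swap} (Definition~\ref{definition:swap}). Within a single group this holds by construction; the only risk is at a boundary $\set{E}_k \to \set{E}_{k+1}$, where I must show that the $\prec_B$-maximal value of $\set{E}_k$ does not strictly follow the $\prec_B$-minimal value of $\set{E}_{k+1}$.

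The crux, and the step I expect to be the main obstacle, is exactly this boundary argument when a $\A{B}$-value recurs across groups. If the maximum $b_i$ of $\set{E}_k$ were to strictly follow some $b_j$ of $\set{E}_{k+1}$ with $b_i \ne b_j$, then $b_i \prec b_j$ is inferred across the boundary while $b_j \prec b_i$ is forced by the assumed order, producing a cycle and contradicting that $\A{B}[][{\tau_{\A{A}}}]$ is a weak total order. This is precisely the failure exhibited by the lunar month \emph{Winter} in Example~\ref{EX/monthNumLunEmpty}, where a value lying in non-consecutive $\A{A}$-groups forces both $b \prec b'$ and $b' \prec b$. Establishing that the acyclicity encoded by the weak-total-order test rules out every such inversion shows the constructed order is a valid witness, which completes the equivalence.
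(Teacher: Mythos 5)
Your proposal is correct and takes essentially the same route as the paper's proof: one direction constructs a witness by sorting on $\A{A}$ and then on the derived order over $\A{B}$ and verifies validity at consecutive-group boundaries using acyclicity, while the other reads the inferred pairs off a witness's $\A{B}$-projection and uses the chain of nonempty consecutive groups to conclude that the acyclic relation $\A{B}[][{\tau_{\A{A}}}]$ is a weak total order. The one nit is that ``breaking any remaining ties arbitrarily'' should instead be ``first extend $\A{B}[][{\tau_{\A{A}}}]$ to a strong total order over $\A{B}$'s values and break ties by that'' (as the paper does), since arbitrary tie-breaking could interleave tuples carrying distinct tied $\A{B}$-values within a group and invalidate the projected order.
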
 

\begin{proof}
Assume $\A{B}[][{\tau_{\A{A}}}]$ is a valid weak total order 
and let $\Ord{T}[\A{B}^*][\prec]$ denote an arbitrary
strong total order compatible with---that is, is a superset of---$\A{B}[][{\tau_{\A{A}}}]$.
To find a \emph{witness} order $\Ord{T}[\T{r}][\prec]$ 
over the \emph{tuple}s of the database, sort the tuples based on the 
explicit order over $\A{A}$ and break ties by $\Ord{T}[\A{B}^*][\prec]$. 
Considering the projected order of this witness over
$\A{B}$, %it is clear that all of the relations of the type 
the $b_i\prec b_j$ derived \emph{within} a partition group 
of $\A{A}$ cannot result in a cycle, as they were initially 
enforced by $\Ord{T}[\A{B}^*][\prec]$. 
For the $b_i\prec b_j$ derived from
\emph{different} partition groups of $\A{A}$, it suffices to
verify \emph{consecutive} partition groups of $\A{A}$ 
due to the \emph{transitivity} of the order relation.
Assume the relation $b_i\prec b_j$ can be derived from two tuples in different
partition groups $\PG_p$ and $\PG_q$ where, without loss of generality, $p < q$. 
If $q=p+1$, these partition groups are consecutive and $b_i\prec b_j$
is derived using our algorithm. 
Otherwise, the partition group 
$\PG_{p+1}\ne\PG_q$ contains some tuple with the $\A{B}$-value of $b_k$ (note that $b_k$ could be the same as $b_i$ or $b_j$) since otherwise $\PG_p$ and $\PG_q$ would 
have been consecutive. 
Using our algorithm, the order $b_i\prec b_k$ is derived (if $b_i\ne b_k$).
Due to transitivity, it is now only required for $b_k\prec b_j$ 
to be derived (if $b_j\ne b_k$),
which is inductively derived from $\PG_{p+1}$ and $\PG_{q}$. 
Therefore, the validity of $\A{B}[][{\tau_{\A{A}}}]$ 
implies the validity of the witness order over tuples.

Now we prove the other direction. Without loss of generality, let $\PG_i$ denote 
the $i$-th sorted partition group of $\A{A}$ and $\A{B}_i$ the 
set of values of $\A{B}$ that co-occur with $\PG_i$.
Assume $\simular{\A{A}}{\A{B}^*}$ is valid and that $\Ord{T}_{\T{r}}^{\prec}$ 
is some witness order over the tuples.
Based on the definition of a witness order, the partition groups of 
$\A{A}$ and $\A{B}$ need to be placed consecutively 
(and in ascending order for $\A{A}$)
in $\Ord{T}_{\T{r}}^{\prec}$.
Since the projected order of $\Ord{T}_{\T{r}}^{\prec}$ over $\A{B}$
is a valid total order, for any two
consecutive $\PG_i$ and $\PG_{i+1}$, $\A{B}_i$ and $\A{B}_{i+1}$ 
have at most one value in common. 
For each three consecutive partition groups 
$\PG_{i-1}$, $\PG_i$, and $\PG_{i+1}$, 
let $\A{B}_{i-1,i}=\A{B}_{i-1}\cap\A{B}_i$, 
$\A{B}_{i, i+1}=\A{B}_{i}\cap\A{B}_{i+1}$, and
$\A{B}'_i=\A{B}_i\setminus\A{B}_{i-1}\setminus\A{B}_{i+1}$.
Since $\Ord{T}_{\T{r}}^{\prec}$ is a valid order, for any two
$i\ne j$, $\A{B}'_i\cap\A{B}'_j=\emptyset$.
Therefore, by deriving pairs of ordered
values from consecutive partition groups of $\A{A}$, 
the resulting order graph corresponds to a valid weak total order where each partition 
over the values of $\A{B}$ correspond to the value(s) within a set 
$\A{B}'_i$ or $\A{B}_{i,i+1}$ for some $i$ 
(in cases where $\A{B}'_i=\emptyset$ and $\A{B}_{i-1,i}=\A{B}_{i,i+1}$,
consecutive partitions with the same values can be considered as one).
% For the other side, assume $\A{B}[][{\tau_{\A{A}}}]$ is 
% invalid by containing some cycle in its order graph. 
% Since in any potential \emph{witness} order over the tuples the 
% partition groups of $\A{A}$ need to be ordered consecutively, 
% it is easy to see that it is impossible for the projection of any 
% potential witness order over $\A{B}$ to not contain any cycles.
% Therefore, no witness total order $\Ord{T}[\T{r}][\prec]$ 
% exists and the candidate is invalid.
\end{proof}

In the remainder of this section, $n$, $m$, and $p$ denote
the number of tuples, the number of distinct values of the candidate attribute(s) with 
implicit order, and the number of partition groups of the context, respectively.

\begin{lemma}
The runtime of discovering $\EIOC$s with an empty context is ${\mathcal O}(n + m^2)$, given an initial sorting of the values in the first level of the lattice.
\end{lemma}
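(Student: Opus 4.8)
The plan is to separate the empty-context test of Section~\ref{sec:eioc-empty-context} into a \emph{linear} phase that reads the data and a \emph{value-space} phase that decides the weak-total-order question of Theorem~\ref{thm:emptyContEI2}, and to show these cost ${\mathcal O}(n)$ and ${\mathcal O}(m^2)$, respectively. Here $n=|\T{r}|$ and $m=|\A{B}|$ is the number of distinct values of the implicitly ordered attribute. The clause ``given an initial sorting'' is exactly what lets me drop the ${\mathcal O}(n\ln n)$ cost of ordering the tuples by $\A{A}$: that sort (equivalently, the sort of $\A{A}$'s distinct values) is performed once while single attributes are processed at the first lattice level, and is reused by every empty-context candidate sitting above it, just as in the preceding ${\mathcal O}(m\ln m+n)$ lemma.

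First I would bound the linear phase by ${\mathcal O}(n)$. Given the precomputed order of $\A{A}$'s values, a single pass over the tuples produces the sorted partition $\tau_{\A{A}}$ and, via a hash keyed on $\A{B}$, for each value its first and last occurrence group index together with the set $B_i$ of distinct $\A{B}$-values appearing in each group $\PG_i$. In the same pass I would test two \emph{necessary} conditions for validity that also tame the next phase: (i) each value of $\A{B}$ occupies a \emph{contiguous} run of $\A{A}$-groups --- a gap forces $b\prec c\prec b$ for some intervening value $c$, hence a swap; and (ii) any two consecutive groups share at most one $\A{B}$-value --- two shared values $b,c$ yield both $b\prec c$ and $c\prec b$. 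If either fails I return \emph{invalid} in ${\mathcal O}(n)$; these are exactly the obstructions isolated in the proof of Theorem~\ref{thm:emptyContEI2}.

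Then I would build the inferred relation $\A{B}[][{\tau_{\A{A}}}]$ and verify it is a weak total order, both within ${\mathcal O}(m^2)$. The relation lives on the $m$ nodes given by $\A{B}$'s values, so it has at most $m(m-1)$ distinct ordered pairs; I would store it in an $m\times m$ matrix initialised in ${\mathcal O}(m^2)$ and add, at each group boundary, the pairs of $B_i\times B_{i+1}$. The key accounting step --- and the main obstacle --- is that a naive cross product can cost $\sum_i |B_i|\,|B_{i+1}|={\mathcal O}(mn)$, exceeding the target. I would resolve this using the two conditions just enforced: writing $\nu_i$ for the number of values \emph{new} to group $i$, the overlap bound gives $|B_i|\le \nu_i+1$, while contiguity gives $\sum_i \nu_i\le m$ (and $\sum_i |B_i|\le n$); hence $\sum_i |B_i|\,|B_{i+1}|\le \sum_i(\nu_i+1)(\nu_{i+1}+1)\le (\max_i\nu_i)(\sum_i\nu_{i+1})+2\sum_i\nu_i+n={\mathcal O}(m^2+n)$. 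Thus edge generation is ${\mathcal O}(n+m^2)$, after which a single traversal of the ${\mathcal O}(m^2)$-edge graph checks acyclicity and that the incomparability relation partitions the values into linearly ordered level sets, i.e.\ that $\A{B}[][{\tau_{\A{A}}}]$ is a weak total order; extracting $\A{B}^{*}$ is then immediate.

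Summing the phases gives ${\mathcal O}(n)+{\mathcal O}(n+m^2)+{\mathcal O}(m^2)={\mathcal O}(n+m^2)$, and correctness of the decision is inherited from Theorem~\ref{thm:emptyContEI2}. I expect the delicate part to be precisely the cross-product accounting above: one must argue that the two cheaply checkable necessary conditions not only preserve correctness (they only ever reject genuinely invalid candidates) but also force $\sum_i |B_i|\,|B_{i+1}|$ down to ${\mathcal O}(n+m^2)$, so that materialising the relation never dominates. Everything else --- the single data pass and the fixed-size value-space graph work --- is routine.
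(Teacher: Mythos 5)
Your proof is correct and follows the same two-phase decomposition as the paper's: an ${\mathcal O}(n)$ pass that exploits the first-level pre-sorting to materialize $\tau_{\A{A}}$, followed by ${\mathcal O}(m^2)$ work on the value-space graph over $\A{B}$. Where you diverge is on the one step that actually needs an argument --- why enumerating the inferred pairs does not cost $\sum_i |B_i|\,|B_{i+1}| = {\mathcal O}(mn)$. The paper disposes of this by importing the \emph{3-fan-out} restriction from the \IIOC\ analysis (``\EIOC s are more restrictive than \IIOC s''), which caps each partition group at two non-singleton $\A{B}$-values plus singletons that each charge to only one group, and then asserts that the number of inferred relations is bounded by $m^2$; it never spells out the summation that converts this degree bound into a time bound. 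You instead isolate two bespoke, ${\mathcal O}(n)$-checkable necessary conditions --- contiguity of each $\A{B}$-value's run of $\A{A}$-groups, and at most one shared value between consecutive groups --- and carry out the $\sum_i(\nu_i+1)(\nu_{i+1}+1)$ accounting explicitly. Your route is more self-contained (it does not lean on Theorem~\ref{theo:bipValidity}) and genuinely closes the gap between ``at most $m^2$ distinct pairs exist'' and ``generating them costs ${\mathcal O}(n+m^2)$,'' which is the part the paper leaves implicit; the paper's route is shorter and reuses machinery already established for the harder \IIOC\ case. One small caution: your final check that $\A{B}[][{\tau_{\A{A}}}]$ is a weak total order should be read as a check on its transitive closure (as in the proof of Theorem~\ref{thm:emptyContEI}), and in fact your two conditions are already sufficient as well as necessary, so that last traversal is redundant rather than load-bearing.
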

% \begin{lemma}
% \blue{The runtime of discovering $\EIOC$s with empty context is ${\mathcal O}(n)$, given an initial sorting of the values in the first level of the lattice. The space complexity for storing the final order is ${\mathcal O}(m)$.}
% \end{lemma}

\begin{proof}
Consider the $\EIOC$ candidate $\simular{\A{A}}{\A{B}^*}$.
Since the attributes
in the \OD-discovery algorithm 
in Section~\ref{sec:prelims} 
have been sorted in advance
for the first level of the lattice,
the sorted partition groups over the values of $\A{A}$
can be created in ${\mathcal O}(n)$ time. 
A similar argument about the \emph{3-fan-out} rule 
for $\IIOC$s 
in Section~\ref{sec:iioc-empty-context} also applies to $\EIOC$s with an 
empty context as $\EIOC$s are more restrictive than $\IIOC$s. 
This ensures that when traversing consecutive 
partition groups of $\A{A}$ and inferring relations of the form
$b_i\prec b_j$, the number of these relations will be bounded by $m^2$.
Since the graph storing the partial order over the values of $\A{B}$
will at most have size ${\mathcal O}(m^2)$ as well, the total runtime for 
an $\EIOC$ candidate with an empty context is ${\mathcal O}(n+m^2)$.
\end{proof}

% \begin{proof}
% Since the attributes
% in the \OD-discovery algorithm 
% in Section~\ref{sec:prelims} 
% have been sorted in advance
% for the first level of the lattice,
% checking for swaps that would violate the co-ordering
% requires a single scan over the sorted \T{r}.
% Letting \(n = |\T{r}|\),
% the runtime is \({\mathcal O}(n)\).
% This time,
% too,
% the order found is a weak or strong \emph{total} order, which can be stored using ${\mathcal O}(m)$ space.
% \end{proof}

\begin{theorem}
There exists an implicit domain order
\(\Ord{P}[{\A{B}[][*]}][\prec]\)
such that the \EIOC\
\(\simularCtxSet{\set{X}}{\A{A}}{\A{B}[][*]}\) holds
\emph{iff}
the union graph is cycle free.
\end{theorem}

\begin{proof}
Let $G_i$ denote the set of strongest derivable orders from within each partition group, 
and let $G$ be the \emph{union graph} generated from these graphs using
the procedure described in Section~\ref{section:nonemptyEI}.
% Algorithm~\ref{pc:merge-eioc-parital-orders}.
Each $G_i$ corresponds to a valid partial order over the values of the attribute with the implicit order.
First, assume $G$ corresponds to a valid partial order; i.e., it is acyclic. 
The \emph{witness} order over the tuples within each partition group represents a valid witness order over the entire dataset, since the union of all the derived relations over the values corresponds to a valid partial order. 

Assume that $G$ is cyclic and, without loss of generality, let $\{(b_1, b_2)$, $(b_2, b_3)$, $\dots$, $(b_{p-1},b_p)$, $(b_p,b_1)\}$ denote the set of edges involved in a cycle in this graph.
Clearly, each edge $(b_{j}, b_{k})$ in this cycle must exist within some graph $G_i$. 
Since the order graphs $G_i$ correspond to the \emph{strongest} derivable order over each partition group, an edge $(b_j, b_k)$ $\in$ $G_i$ must be present in \emph{all} witness total orders of the corresponding partition group.
Therefore, for any witness order over the partition groups, and consequently over the entire dataset, $b_j \prec b_k$ for any $b_j$ and $b_k$ will be derived, meaning that the partial order derived over the entire dataset cannot be valid.
Thus, the unconditional $\EIOC$ candidate with a non-empty context is invalid.
\end{proof}

\begin{lemma}
The time complexity of discovering $\EIOC$s with a non-empty context is ${\mathcal O}(n\ln n+pm^2)$.
\end{lemma}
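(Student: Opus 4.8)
The plan is to charge the running time to the three successive phases of the non-empty-context $\EIOC$ algorithm described in Section~\ref{section:nonemptyEI}: (i) partitioning the tuples by the context $\set{X}$ and sorting within each group; (ii) running the empty-context procedure of Section~\ref{sec:eioc-empty-context} independently inside each of the $p$ partition groups; and (iii) forming the union graph and testing it for acyclicity by a depth-first search. I would bound each phase separately and then sum.

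For phase (i), I would partition $\T{r}$ by $\set{X}$ in ${\mathcal O}(n)$ time using a hash, and then sort the tuples of each group by the explicit attribute $\A{A}$. If group $\PG_i$ has $n_i$ tuples, its sort costs ${\mathcal O}(n_i \ln n_i) \le {\mathcal O}(n_i \ln n)$, and since $\sum_i n_i = n$ these costs sum to ${\mathcal O}(n \ln n)$; this is the origin of the $n\ln n$ term. For phase (ii), by the already-established empty-context lemma each group is handled in ${\mathcal O}(n_i + m_i^2)$ time once sorted, where $m_i$ is the number of distinct values occurring in $\PG_i$. The key point, inherited from the argument bounding the number of inferred relations in the empty-context analysis, is that the order derived inside a group contains only ${\mathcal O}(m_i^2)$ relations, and $m_i \le m$; hence $\sum_i (n_i + m_i^2) \le n + p\,m^2$.

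For phase (iii), each group contributes a weak total order with at most ${\mathcal O}(m^2)$ edges, so merging $p$ such graphs to build the union graph costs ${\mathcal O}(p\,m^2)$, and a single DFS over the union graph (at most $m$ nodes and ${\mathcal O}(m^2)$ edges) costs ${\mathcal O}(m^2)$. Summing the three phases yields ${\mathcal O}(n \ln n + n + p\,m^2 + m^2) = {\mathcal O}(n \ln n + p\,m^2)$, as claimed, since the $n\ln n$ sorting dominates the linear partitioning and the $m^2$ DFS term is absorbed by $p\,m^2$.

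The main obstacle I anticipate is justifying that the per-group quadratic terms aggregate to exactly $p\,m^2$ rather than a smaller quantity, and in particular arguing cleanly that the number of inferred order relations per group is ${\mathcal O}(m_i^2)$ with $m_i \le m$; this relies on re-invoking the bound already used in the empty-context proof and on the observation that each group has at most $m$ distinct values. The remaining steps—bounding the union-graph construction and the DFS, and checking that the ${\mathcal O}(n)$ partitioning is absorbed by the ${\mathcal O}(n\ln n)$ sorting—are routine.
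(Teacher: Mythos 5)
Your proposal is correct and follows essentially the same route as the paper: run the (sorted) per-group procedure in ${\mathcal O}(n_i\ln n_i+m^2)$ time for each of the $p$ groups, then build the union graph of the per-group order graphs (each of size ${\mathcal O}(m^2)$) and check acyclicity by DFS in time linear in the graphs' sizes, giving ${\mathcal O}(n\ln n+pm^2)$ overall. Your write-up is in fact slightly more careful than the paper's, which loosely writes the per-group sum as ${\mathcal O}(pk\ln k+pm^2)$ before bounding it by ${\mathcal O}(n\ln n+pm^2)$, whereas you sum $\sum_i n_i\ln n_i\le n\ln n$ explicitly.
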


\begin{proof}
Given an $\EIOC$ candidate with a non-empty context $\set{X}$, the algorithms 
in Sections~\ref{sec:eioc-with-fd} or \ref{sec:eioc-empty-context} run 
for each partition group over $\set{X}$.
The runtime of discovering an implicit order over each partition group 
is ${\mathcal O}(k\ln k+m^2)$, where $k$ denotes the number of tuples in the partition group.
Furthermore, the size of the order graph constructed from each partition group 
is ${\mathcal O}(m^2)$. Since creating the union graph using the individual orders and 
checking for cycles can 
be done in linear time in size of the graphs, the total runtime is ${\mathcal O}(pk\ln k + pm^2)$ which is bounded by ${\mathcal O}(n\ln n+pm^2)$.
\end{proof}

\begin{theorem} \label{theo:bipValidity2}
\oc{\A{A}[][*]}{\A{B}[][*]}\
is \emph{valid}
over \T{r}\
\emph{iff}
the following two conditions are \emph{true}
for \BGp[\A{A}, \A{B}]\ over \T{r}:

\begin{enumerate}[nolistsep,leftmargin=*]
\item it contains no 3-fan-out;
	and
\item it is acyclic.
\end{enumerate}
\end{theorem}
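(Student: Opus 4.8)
The plan is to reduce the claim to a purely combinatorial statement about \emph{non-crossing} orderings of the bipartite graph, proving each direction by its contrapositive where convenient. First I would restate validity in a no-swap form: by the witness characterization of Section~\ref{sec:pairs of implicit domains}, $\oc{\A{A}[][*]}{\A{B}[][*]}$ holds over $\T{r}$ \emph{iff} there exist strong total orders $\prec_{A}$ over the values of $\A{A}$ and $\prec_{B}$ over the values of $\A{B}$ that admit \emph{no swap}, i.e., no two tuples $\tup{s}, \tup{t}$ with $\tup{s}[\A{A}] \prec_{A} \tup{t}[\A{A}]$ but $\tup{t}[\A{B}] \prec_{B} \tup{s}[\A{B}]$. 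Given such $\prec_{A}, \prec_{B}$, a witness tuple order is obtained by sorting $\T{r}$ lexicographically by $(\prec_{A}, \prec_{B})$; conversely, the projections of any witness induce exactly such orders. Geometrically, placing the values of $\A{A}$ on one line in $\prec_{A}$-order and the values of $\A{B}$ on a parallel line in $\prec_{B}$-order, the no-swap condition is precisely the statement that the edges of $\BG[\A{A}, \A{B}]$ can be drawn without crossings. This equivalence is what makes the graph test sound, and it is why I can reason entirely in terms of $\BGp[\A{A}, \A{B}]$.

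Next I would prove the forward direction by contrapositive: a $3$-fan-out or a cycle in $\BGp[\A{A}, \A{B}]$ forces a swap. For a $3$-fan-out, suppose a value $a$ is joined to three values $b_1, b_2, b_3$ in $\BGp[\A{A}, \A{B}]$; for any $\prec_{B}$ one of them, say $b_2$, lies strictly between the other two, and since every node of $\BGp[\A{A}, \A{B}]$ has degree at least two (the singletons were removed), $b_2$ also co-occurs with some $a' \ne a$. Whichever side of $a$ the value $a'$ falls on in $\prec_{A}$, the edge from $a'$ to $b_2$ must cross the edge from $a$ to $b_3$ or to $b_1$, yielding a swap. For a cycle, I would use that $\BG[\A{A}, \A{B}]$ is bipartite so every cycle is even, and show that the edges incident to the $\prec_{A}$-extreme and $\prec_{B}$-extreme vertices of the cycle cannot all avoid crossing; equivalently, a graph containing a cycle is not a caterpillar forest and hence not two-layer planar. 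In either case no valid $(\prec_{A}, \prec_{B})$ exists, so $\oc{\A{A}[][*]}{\A{B}[][*]}$ is invalid.

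For the converse, assume $\BGp[\A{A}, \A{B}]$ has no $3$-fan-out (so maximum degree at most two) and is acyclic; then it is a disjoint union of simple paths. Each path alternates between $\A{A}$- and $\A{B}$-nodes, so reading it end-to-end produces a chain (a total order) on its $\A{A}$-values and a chain on its $\A{B}$-values that are mutually non-crossing. I would concatenate the per-component chains in an arbitrary fixed order, using that distinct components share no edge and therefore impose no cross-component constraint, and finally reinsert the removed singletons: a singleton carries a single incident edge and so can always be placed immediately adjacent to the position of its unique neighbour without creating a crossing. This assembles global orders $\prec_{A}, \prec_{B}$ with no swap, hence a witness, hence validity. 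The walk of $\BGp[\A{A}, \A{B}]$ described in Section~\ref{sec:iioc-empty-context} is exactly this construction and simultaneously delivers the strongest derivable chains.

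The main obstacle I expect is the bookkeeping in this converse step, namely arguing rigorously that gluing the independent per-path chains and then reinserting singletons never introduces a swap, i.e., that local non-crossing along each path together with component-disjointness genuinely composes into one global no-swap pair of total orders. A related point I must settle is the legitimacy of working with the \emph{single-pass} $\BGp[\A{A}, \A{B}]$: I should verify that removing all degree-one nodes once already exposes any obstruction (a surviving degree-three node, giving a $3$-fan-out, or a surviving cycle). This is the fact that a bipartite graph is drawable on two parallel lines without crossings exactly when it is a forest of caterpillars, whose path-like spines are revealed precisely by peeling the leaves.
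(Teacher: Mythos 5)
Your proposal is correct and follows essentially the same route as the paper's proof: the forward direction exhibits a forced swap from a 3-fan-out (using a second co-occurring partner of the middle $\A{B}$-value) and a contradiction from a cycle, and the converse builds the witness from the zigzag path decomposition of \BGp\ (no 3-fan-out plus acyclicity gives a disjoint union of paths), concatenates components, and reinserts singletons --- exactly the paper's Algorithm-based construction. One small imprecision to fix: after the single-pass removal, nodes of \BGp\ need \emph{not} all have degree at least two (deleting leaves can create new leaves); what your 3-fan-out argument actually requires, and what is true, is that $b_2$'s survival into \BGp\ guarantees it had degree at least two in the original \BG, so a tuple $(a', b_2)$ with $a' \ne a$ exists in \T{r}.
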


\begin{proof}
Assume that $\BGp[\A{A}, \A{B}]$ contains a 3-fan-out. Without loss of generality, 
let the tuples participating in this 3-fan-out be as follows 
(the minimal required tuples for a node $a_4$ containing the 3-fan-out): 
$\tup{t}_1=(a_1, b_1)$, $\tup{t}_2=(a_2, b_2)$, $\tup{t}_3=(a_3, b_3)$, 
$\tup{t}_4=(a_4, b_1)$, $\tup{t}_5=(a_4, b_2)$, and $\tup{t}_6=(a_4, b_3)$. 
Assume that a \emph{witness} total order
$\Ord{T}[\T{r}][\prec]$ exists over the tuples, and, without loss of generality, 
$\tup{t}_4$ $\prec$ $\tup{t}_5$ $\prec$ $\tup{t}_6$. Since the projections of
$\Ord{T}[\T{r}][\prec]$ over both $\A{A}$ and $\A{B}$ 
need to result in a valid total order, it must be that $\tup{t}_1$ $\prec$ $\tup{t}_4$ 
and $\tup{t}_6$ $\prec$ $\tup{t}_3$. However, it is  
impossible to find a valid placement for $\tup{t}_2$, as any placement 
results in a cycle in the projected order over $\A{A}$ or $\A{B}$. 
Therefore, by contradiction, the candidate is invalid. 

Assume $\BGp[\A{A}, \A{B}]$ contains a cycle. 
Let a cycle with length $k$ be over tuples $\tup{t}_1=(a_1, b_1)$, 
$\tup{t}_2=(a_2, b_1)$, $\dots$, 
$\tup{t}_{k-1}=(a_{k/2},b_{k/2})$, and $\tup{t}_k=(a_1, b_{k/2})$, 
without loss of generality. 
Assume that a \emph{witness} total order $\Ord{T}[\T{r}][\prec]$ 
exists over the tuples, and without loss of generality, 
$\tup{t}_1$ $\prec$ $\tup{t}_2$. 
Then $\tup{t}_2$ $\prec$ $\tup{t}_3$, and, inductively, 
$\tup{t}_{i}$ $\prec$ $\tup{t}_{i+1}$ must hold. 
Thus, $\tup{t}_{1}$ $\prec$ $\tup{t}_2$ $\prec$ $\dots$ $\prec$ $\tup{t}_k$. 
However, this results in $a_1$ $\prec$ $a_{k/2}$ and $a_{k/2}$ $\prec$ $a_1$, 
which makes a valid order over $\A{A}$ impossible. 
Therefore, by contradiction, the candidate is invalid.

The other direction follows the
correctness argument of Algorithm~\ref{pc:deriveChains}. 
Let $\Ord{T}[\A{A}'^*][\prec]$ and $\Ord{T}[\A{B}'^*][\prec]$ 
denote the orders derived using Algorithm~\ref{pc:deriveChains} 
(which are unique modulo polarity). 
Without loss of generality, let $a_1\prec a_2\prec\dots\prec a_k$ and 
$b_1\prec b_2\prec\dots\prec b_k$ denote these total orders 
(the case where the length of one of the chains is longer by \emph{one}
value is resolved similarly).
Without loss of generality, let $a_1$ denote the first node in 
$\BGp[\A{A}, \A{B}]$ with degree 1. 
This implies that the tuples need to be of either of two patterns:
1) $(a_1, b_1)$; or 2) $(a_i, b_{i-1})$ and $(a_i, b_i)$ for $i>1$.
The total order $\Ord{T}[\T{r}'][\prec]$ 
derived by sorting the tuples by
$\Ord{T}[\A{A}'^*][\prec]$ and breaking ties by 
$\Ord{T}[\A{B}'^*][\prec]$ results in a valid \emph{witness} order.
This is because any cycles in the derived orders over $\A{A}'$ or $\A{B}'$
would indicate the existence of a cycle 
(e.g., some $a_i$ occurring with $b_{i-2}$)
or a 3-fan-out 
(e.g., some $a_i$ occurring with three values 
$b_{i-1}$, $b_i$, and $b_{i+1}$).
%which both invalidates the existing conditions. 
For the \emph{singleton} values removed in $\A{A}'$ and $\A{B}'$, 
they can easily be added to $\Ord{T}[\T{r}'][\prec]$, 
where a tuple $(a_i, b_k)$ can be added between the tuples
$(a_i, b_{i-1})$ and $(a_i, b_i)$, resulting in the 
valid final witness order $\Ord{T}[\T{r}][\prec]$, 
implying the validity of this $\IIOC$ candidate.
\end{proof}

\begin{lemma}
The runtime of validating a conditional $\IIOC$ with an empty or a non-empty context is ${\mathcal O}(n)$.
\end{lemma}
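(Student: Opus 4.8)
The plan is to bound the total work of the conditional validation procedure of Section~\ref{sec:iioc-empty-context}, which, for each partition group of $\pi_{\set{X}}$, builds the bipartite graph $\BG[\A{A}, \A{B}]$, removes singletons (Definition~\ref{DEF/Singleton}) to obtain $\BGp[\A{A}, \A{B}]$, and then checks the two conditions of Theorem~\ref{theo:bipValidity}: absence of a 3-fan-out (Definition~\ref{DEF/3-fan-out}) and acyclicity. The empty-context case is simply the degenerate instance $p = 1$, so a single argument covers both. A first thing to note is that, unlike the explicit-to-implicit case, no sorting is required here, which is exactly why this bound avoids the $n \ln n$ term: there is no explicit order to sort the tuples by, and the bipartite-graph test replaces the sorted-partition construction.

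First I would argue that constructing all the graphs costs ${\mathcal O}(n)$. A single scan of $\T{r}$ suffices: each tuple $\tup{t}$ is routed to its partition group by hashing on $\tup{t}[\set{X}]$ and contributes a single edge between the nodes $\tup{t}[\A{A}]$ and $\tup{t}[\A{B}]$ of that group's graph, with node lookup/creation and duplicate-edge suppression done via hashing in ${\mathcal O}(1)$ amortized time. Hence the graphs collectively hold at most $n$ edges and the construction is linear. The crux is to keep the verification linear, and here the 3-fan-out test does double duty: while adding edges I would maintain a degree counter at each node and flag a violation the instant some node reaches degree three. This is ${\mathcal O}(1)$ per edge and lets the algorithm abort early on an invalid candidate, so at most ${\mathcal O}(n)$ edges are ever processed.

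Consequently, whenever we reach the acyclicity test, every $\BGp[\A{A}, \A{B}]$ has all node degrees bounded by two, so (as observed in the footnote of Section~\ref{SEC:SAT-Solver}) each graph has edge count at most its node count, and a depth-first search decides acyclicity in time linear in that graph's size. The remaining step, and the one I expect to be the main obstacle, is summing these costs over the (up to $p$) partition groups \emph{without} reintroducing a dependence on $p$ or $m$: a naive reading would charge ${\mathcal O}(p\,m)$ or even ${\mathcal O}(p\,m^2)$. The escape is a global accounting argument. The total number of nodes across all the graphs is ${\mathcal O}(n)$, since each distinct (partition-group, value) pair is witnessed by at least one tuple; likewise the total number of edges is at most $n$. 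Therefore $\sum_i (|V_i| + |E_i|) = {\mathcal O}(n)$, so the singleton removals and all the depth-first searches together cost ${\mathcal O}(n)$. Adding the ${\mathcal O}(n)$ construction cost yields the claimed ${\mathcal O}(n)$ runtime. It is precisely the interaction of the per-group 3-fan-out bound (forcing each $\BGp$ to be linear-size rather than quadratic) with this global bound (that node and edge totals are dominated by $n$) that collapses the sum, and getting that combination right is where the care is needed.
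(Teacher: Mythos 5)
Your proposal is correct and follows essentially the same approach as the paper's proof: build the bipartite graph(s) in one pass over the tuples, check 3-fan-out and acyclicity via DFS, and observe that the per-group costs sum to ${\mathcal O}(n)$ because the total size of all the graphs is bounded by the number of tuples. Your version simply spells out the global accounting across partition groups more explicitly than the paper does.
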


% \begin{lemma}
% \blue{The runtime and space complexity of validating a conditional $\IIOC$ with empty or non-empty context is ${\mathcal O}(n)$.}
% \end{lemma}

\begin{proof}
To validate an \IIOC\ candidate with an empty context
involves generating the \BG,
iterating over the tuples once, and
then using DFS traversal to check
for cycles and 3-fan-outs.  This
can be done in linear time in the number of tuples.
% as described above.
% Thus,
% the runtime of this is linear
% in the number of tuples, \(n\).
Note that if the 3-fan-out condition holds,
the size of $\BG$ is linear in the number of distinct values ($m$). 
Thus, deriving an order can be 
done in ${\mathcal O}(m)$ time, as it only requires traversing the bipartite graph
a constant number of times,
as shown in Algorithm~\ref{pc:deriveChains}.
For non-empty contexts,
validation of \IIOC s requires
the above steps for each partition group.
Thus,
the overall runtime remains
\({\mathcal O}(n)\).
\end{proof}

% \blue{Let $n$, $m$, and $p$ denote the number of tuples, the number of unique values of an attribute, and the number of partition groups, respectively.}

% \begin{lemma}
% the cost of the reduction to the SAT problem is ${\mathcal O}(n+pm^2+m^3)$.
% \end{lemma}

\begin{lemma}\label{lemma:NPcomplete2}
The Chain Polarization Problem is NP-Complete.
\end{lemma}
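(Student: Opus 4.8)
The plan is to prove both that CPP lies in NP and that it is NP-hard, the latter by a polynomial reduction from NAE-3SAT. The first thing I would establish is a reformulation: the union of the polarized total orders is a \emph{strong partial order} exactly when the directed graph $G$ whose edges are the adjacent ordered pairs implied by the polarized lists is \emph{acyclic} (the transitive closure of an acyclic relation is irreflexive and transitive, hence a strong partial order, whereas a cycle yields a reflexive pair). So the decision question becomes: is there a choice of orientation — the list or its reverse — for each list making $G$ acyclic?

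For \textbf{membership in NP}, a polarization serves as a certificate of size bounded by the total input length $n$. Given it, I would collect the $\mathcal{O}(n)$ implied adjacent pairs, compute the transitive closure in polynomial time, and accept iff no reflexive pair $x \prec x$ appears; this verifies acyclicity in polynomial time.

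For \textbf{NP-hardness}, I would reduce from NAE-3SAT on propositions $p_1,\dots,p_m$. For each clause $i$ with literals $(L_{i,1},L_{i,2},L_{i,3})$ I introduce three fresh \emph{confounder} elements $a_i,b_i,c_i$ and the three lists $[X_{i,1},a_i,b_i,Y_{i,1}]$, $[X_{i,2},b_i,c_i,Y_{i,2}]$, $[X_{i,3},c_i,a_i,Y_{i,3}]$, where if $L_{i,j}=p_k$ then $(X_{i,j},Y_{i,j})=(t_k,f_k)$, and if $L_{i,j}=\neg p_k$ then $(X_{i,j},Y_{i,j})=(f_k,t_k)$, using \emph{propositional} elements $t_k,f_k$ shared across clauses. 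This is clearly polynomial. Two gadget facts drive the reduction. First, restricted to $\{a_i,b_i,c_i\}$ the three lists contribute the triangle $a_i\!\to\! b_i$, $b_i\!\to\! c_i$, $c_i\!\to\! a_i$ in forward orientation, which is cyclic \emph{iff} all three lists take the same orientation; avoiding this cycle forces at least one list forward and at least one reversed, exactly the not-all-equal condition. Second, orienting a list forward (so $X_{i,j}\prec Y_{i,j}$) corresponds to its literal being \emph{true}: for $L=p_k$ this reads $t_k\prec f_k$, and for $L=\neg p_k$ it reads $f_k\prec t_k$, i.e. $p_k$ false. I read $t_k\prec f_k$ as ``$p_k$ true'' and $f_k\prec t_k$ as ``$p_k$ false''.

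Correctness splits into two directions. If $G$ is acyclic: for each $p_k$ the union cannot contain both $t_k\prec f_k$ and $f_k\prec t_k$ (else a $2$-cycle), so the orientations on $p_k$ are consistent and define a truth assignment, and the confounder gadget forbids all-equal orientations per clause, giving at least one true and one false literal, so NAE is satisfied. For the converse — the step I expect to be the \textbf{main obstacle} — I must show that a satisfying NAE assignment yields an \emph{acyclic} $G$, ruling out not only the within-clause confounder cycles but also ``mixed'' cycles threading propositional and confounder nodes across several clauses. I would dispatch this with a potential function $\phi$: assign $\phi=0$ to the element of each pair that the assignment places first, $\phi=1$ to every confounder, and $\phi=2$ to the second element of each pair. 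Orienting each list per the assignment makes every edge non-decreasing in $\phi$ and strictly increasing except on confounder-to-confounder edges; hence any cycle uses only confounder-to-confounder edges, which live inside a single clause, reducing to the triangle case already excluded by NAE. This makes the two decision questions synonymous and completes the reduction, establishing that CPP is NP-complete.
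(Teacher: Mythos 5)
Your proof is correct and follows the same overall architecture as the paper's: the identical NP-membership argument via transitive closure of the $\mathcal{O}(n)$ adjacent pairs, and the identical NAE-3SAT reduction with the same confounder triangle $[X_{i,1},a_i,b_i,Y_{i,1}]$, $[X_{i,2},b_i,c_i,Y_{i,2}]$, $[X_{i,3},c_i,a_i,Y_{i,3}]$ and the same $t_k/f_k$ encoding of literals. The one place you genuinely diverge is the converse direction (satisfying NAE assignment $\Rightarrow$ acyclic polarization), which you correctly flag as the main obstacle: the paper merely restates this implication as a contrapositive and offers no argument that ``mixed'' cycles threading propositional and confounder elements across several clauses cannot arise. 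Your potential function $\phi$ (with $\phi=0$ on the first element of each propositional pair, $\phi=1$ on confounders, $\phi=2$ on the second element) closes exactly that gap: every edge is non-decreasing and is strict unless it joins two confounders, so any cycle collapses into a single clause's triangle, which the not-all-equal condition already excludes. Your version is therefore strictly more complete than the paper's on this direction, at the modest cost of introducing and checking the potential function; the paper's version is shorter but leaves the hardest implication unproved.
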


\begin{proof}
The input size of a CPP instance may be measured as the sum of the lengths of its lists; let this be $n$. Consider a pair explicitly implied by the list collection to be in the binary ordering relation if the pair of elements appears immediately adjacent in one of the lists. Thus, the number of explicitly implied pairs is bounded by $n$.

\textbf{CPP is in the class NP.}
An answer of yes to the corresponding decision question means there exists a polarization of the CPP instance that admits a strong partial order. Given such a polarization witness, its validity can be checked in polynomial time. The size of the polarization is at most $n$. The set of explicitly implied ordered-relation pairs is at most $n$. Computing the transitive closure over this set of pairs is then polynomial in $n$. If no reflexive pair (e.g., a $\prec$ a) is discovered, then there are no cycles in the transitive closure, and thus this represents a strong partial order. Otherwise, not.

\textbf{CPP is NP-complete.}
The known NP-complete problem NAE-3SAT (Not-All-Equal 3SAT) can be reduced to CPP.

The structure of a NAE-3SAT instance is a collection of clauses. Each clause consists of three literals. A literal is a propositional variable or the negation thereof. A clause is interpreted as the disjunction of its literals, and the overall instance is interpreted as the logical formula which is the conjunction of its clauses. Since each clause is of fixed size, the size of the NAE-3SAT instance may be measured by its number of clauses; call this $n$.

The decision question for NAE-3SAT is whether there exists a truth assignment to the propositional variables (propositions) that satisfies the instance formula such that, for each clause, at least one of its literals is false in the truth assignment and at least one is true. 
%(as necessary, of course, for the instance formula to be satisfied). (This extra condition on each of the clauses is the “not-all-equal” restriction.)

We can establish a mapping from NAE-3SAT instances to CPP instances which is polynomial time to compute and for which the decision questions are synonymous.
Let $p_{1}, .., p_{m}$ be the propositions of the NAE-3SAT instance. For clause \emph{i}, let ($L_{i,1}, L_{i,2}, L_{i,3}$) represent it, where each $L_{i,j}$ is a placeholder representing the corresponding proposition or negated proposition as according to the clause.
We build a corresponding CPP instance as follows. For each clause, we add three lists. For clause \emph{i}, add
[$X_{i,1}, a_i, b_i, Y_{i,1}$],
[$X_{i,2}, b_i, c_i, Y_{i,2}$], and
[$X_{i,3}, c_i, a_i, Y_{i,3}$].
Each $X_{i,j}$ and $Y_{i,j}$ are placeholders above, and correspond to the $L_{i,j}$’s in the clauses.

We replace them in the lists as follows. There are two cases for each: $L_{i,j}$ corresponds to a proposition or the negation thereof. Without loss of generality, let $L_{i,j}$ correspond to $p_k$ or to $\neg p_k$. If $L_{i,j}$ corresponds to $p_k$, we replace $X_{i,j}$ with element $t_k$ and $Y_{i,j}$ with element $f_k$. Else ($L_{i,j}$ corresponds to $\neg p_k$), we replace $X_{i,j}$ with $f_k$ and $Y_{i,j}$ with $t_k$.
Call the $t_i$ and $f_i$ elements propositional elements, and call the $a_i$, $b_i$, and $c_i$ elements confounder elements.

There exists a polarization of the CPP instance that admits a strong partial order iff the corresponding NAE-3SAT instance is satisfiable such that, for each clause, at least one of its literals has been assigned false.
For the propositional elements, let us interpret $t_i \prec f_i$ in the partial order as assigning proposition $p_i$ as true; and $f_i \prec t_i$ as assigning it false.

For any polarization of the CPP instance that admits a strong partial order, for each clause, $i$, one or two, but not all three, of the corresponding lists must have been reversed. Otherwise, there will be a cycle in the ordering relation over the confounder elements, $a_i$, $b_i$, and $c_i$.
Since not all three lists for clause $i$ can be reversed, $X_{i,1} \prec Y_{i,1}, X_{i,2} \prec Y_{i,2}$, or $X_{i,3} \prec Y_{i,3}$. And thus, at least one of the clause’s literals is marked as true, and so the clause is true. Since at least one of the three lists for clause i must be reversed, we know that $Y_{i,1} \prec X_{i,1}, Y_{i,2} \prec X_{i,2}$, or $Y_{i,3} \prec X_{i,3}$. And thus, not all of the clause’s literals are marked as true.
If two lists (coming from different clauses) “contradict” — that is, one implies, say, $t_i \prec f_i$ and the other that $f_i \prec t_i$ — then there would be a cycle in the transitive closure of the ordering relation. This would contradict that our polarization admits a strong partial order. Thus, for each $p_i$, the partial order either has $t_i \prec f_i$ or $f_i \prec t_i$. This corresponds to a truth assignment that satisfies the NAE-3SAT instance.

In the other direction, if no polarization of the CPP instance admits a strong partial order, then there is no truth assignment that satisfies the NAE-3SAT instance such that, for each clause, at least one of its literals has been assigned false.
\end{proof}

\stepcounter{theorem}

\begin{theorem}
If \fd{\set{X}\A{A}}{\A{B}},
% If \fd{\set{Y}\A{A}}{\A{B}}\
% and \(\set{Y} \subseteq \set{X}\),
% $\ordersCtxSet{\set{Y}\A{A}}{\emptyLst{}}{\A{B}}$,
then the conditional \IIOD\ candidate
\oc[\set{X}]{\A{A}[][*]}{\A{B}[][*]}\ must be valid.
Furthermore,
there is a unique partial order
that can be derived
for \A{A}[][*]\ and for \A{B}[][*]:
the empty order.
\end{theorem}

\begin{proof}
Since the $\IIOD$ candidate is considered conditionally, 
the validity of the candidates and the \emph{strongest derivable} 
orders over $\A{A}^*$ and $\A{B}^*$ are derived independently from 
within each partition group. 
Consider an arbitrary partition group and, since $\fd{\set{X}\A{A}}{\A{B}}$, 
without loss of generality, let the value of tuples $\tup{t}_1$ to 
$\tup{t}_{p_q}$ in the columns $\A{A}$ and $\A{B}$ be the following: 
$(a_1, b_1)$, $(a_2, b_1)$, $\dots$, $(a_{p_1}, b_1)$, $(a_{p_1+1}, b_2)$, 
$\dots$, $(a_{p_2}, b_2)$, $\dots$, $(a_{p_{q-1}+1}, b_q)$, $\dots$, 
$(a_{p_q}, b_q)$ (tuples with duplicate values are represented as 
one tuple for simplicity). 
The total order $\Ord{T}[\T{r}][\prec]$ $\tup{t}_1$ $\prec$ $\dots$ 
$\prec$ $\tup{t}_{p_q}$ over the tuples is a valid \emph{witness} and 
its projection would result in total orders $a_1$ $\prec$ $\dots$ 
$\prec$ $a_{p_1}$ and $b_1$ $\prec$ $\dots$ $\prec$ $b_q$ over 
the values of $\A{A}$ and $\A{B}$, respectively. The same argument 
can be applied to the rest of the partition groups, meaning that a 
\emph{conditional} $\EIOD$ candidate is always valid.

As for the \emph{strongest derivable} orders over $\A{A}^*$ and $\A{B}^*$, 
without loss of generality, let the initial witness order be 
$\Ord{T}[\T{r}][\prec]$ and consider \emph{transposition}s of 
either of these types applied on it: 1) relocating the tuple 
$\tup{t}_i$, $p_{j-1}+1$ $\le$ $i$ $\le$ $p_{j}$, within the range 
$[p_{j-1}+1$, $p_{j}]$; 2) relocating the consecutive set of tuples 
$\tup{t}_{p_{j-1}+1}$, $\dots$, $\tup{t}_{p_j}$ to any location either 
between $\tup{t}_{p_k}$ and $\tup{t}_{p_k+1}$, or the beginning or end of
$\Ord{T}[\T{r}][\prec]$. Both transpositions are \emph{valid} since the resulting projected orders over 
the values of $\A{A}$ and $\A{B}$ are valid. The intersection 
of all such transposed witness orders for each side is the empty partial order. 
Therefore, the strongest derivable orders over $\A{A}$ and $\A{B}$ 
for this partition group (and, subsequently, for the other partition groups) 
are empty partial orders.
\end{proof}

\begin{theorem}
The unconditional $\IIOC$ candidate is valid \emph{iff} the corresponding SAT instance is satisfiable.
% The generated 3-SAT instance is satisfiable iff the initial CPP instance has a polarization of its chains that is a valid partial order.
\end{theorem}

\begin{proof}
Assume the SAT instance is satisfiable.
A truth assignment to the SAT instance variables that satisfies 
all the clauses can be translated to two valid 
implicit orders for 
the $\IIOC$ instance, by considering the variables for each pair of values, 
where a \emph{true} assignment to $l_{u,v}$ or $l_{v,u}$ 
(or $r_{u, v}$ and $r_{v,u}$, accordingly) 
would result in the implied order $l_u\prec l_v$ or $l_v\prec l_u$, respectively.
Exactly one of these two variables is set to \emph{true}, given 
the initial clauses generated for each pair of variables.
To derive a witness order $\Ord{T}[\T{r}][\prec]$, it is enough to sort
the tuples in each partition group by the order derived 
over the values of LHS, and break ties by the order
derived over the RHS (or vice versa).
The projection of $\Ord{T}[\T{r}][\prec]$ 
over the LHS and RHS attributes will result in a valid total order,
as all pairs of tuples within each partition group were
considered when generating the \emph{no swap} clauses, meaning that 
the projected order over the attributes is perfectly captured by these clauses.
Furthermore, \emph{transitivity} ensures that the projected order is acyclic, and as a result, is a valid order.

In the other direction, assume that the $\IIOC$ candidate is valid and
let $\Ord{T}[\T{r}][\prec]$ denote a valid witness
order over the tuples. To derive a solution to the SAT instance, 
the reverse of the previous algorithm can be performed; i.e.,
the truth assignment for variables $l_{u,v}$ and $l_{v,u}$ 
(or $r_{u, v}$ and $r_{v,u}$) 
can be set based on the order between values in the projected 
order  
(if no order between the values exists, both variables are set to \emph{false}). 
Since $\Ord{T}[\T{r}][\prec]$ is a valid witness, the projected order over the 
attributes will be a valid order, meaning that the truth assignments
for the SAT variables can be uniquely determined, and would satisfy the
initial conditions (since for each two values $l_u$ and $l_v$, at most 
one of $l_u\prec l_v$ or $l_v\prec l_u$ holds in the projected orders).
Furthermore, this variable assignment would also satisfy the \emph{no swaps}
and \emph{transitivity} clauses, as the projected order over the attributes
is directly derived from \emph{pair}s of tuples and is \emph{acyclic}, respectively.
Therefore, the SAT instance is satisfiable. 
\end{proof}

\begin{lemma}
The cost of the SAT reduction is 
% The reduction to SAT is
${\mathcal O}(n+pm^2+m^3)$.
\end{lemma}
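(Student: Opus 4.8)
The plan is to account separately for the cost of each phase of the reduction and then sum them, using the no-3-fan-out property (Definition~\ref{DEF/3-fan-out}) to keep the per-group work quadratic rather than quartic. Recall that $n$ is the number of tuples, $m$ bounds the number of distinct values on either side of the candidate, and $p$ is the number of context partition groups. The reduction proceeds in three stages: (i) constructing the bipartite graphs $\mathsf{BG}_i$ and running the validity pre-check of Theorem~\ref{theo:bipValidity}; (ii) introducing the propositional variables together with the initial ``at most one direction'' clauses and the \emph{no swaps} clauses; and (iii) introducing the \emph{transitivity} clauses.

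For stage (i), I would first observe that a single pass over the data distributes the tuples into partition groups and records, for each group, the edges of $\mathsf{BG}_i$; this is $\mathcal{O}(n)$ using a hash table on the context values. The key structural fact, which I would establish up front and reuse throughout, is that once a graph $\mathsf{BG}_i$ has passed the 3-fan-out test, every node has degree at most two, so $\mathsf{BG}_i$ is a disjoint union of simple paths (and, after the acyclicity test, contains no cycle). Hence the number of edges of each $\mathsf{BG}_i$ is linear in the number of distinct values it touches, i.e. $\mathcal{O}(m)$. Running DFS on each graph to check acyclicity and the 3-fan-out condition then costs time linear in its size, and summed over all groups this is $\mathcal{O}(pm)$, which is dominated by later terms.

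For stages (ii) and (iii) I would simply count variables and clauses, since each clause is produced in constant time. There are $\mathcal{O}(m^2)$ variables $l_{i,j}$ and $\mathcal{O}(m^2)$ variables $r_{i,j}$, and the same count bounds the initial clauses $(\neg l_{i,j}\lor\neg l_{j,i})$. The \emph{no swaps} clauses are generated for pairs of distinct edges within a single $\mathsf{BG}_i$; because each such graph has only $\mathcal{O}(m)$ edges, there are $\mathcal{O}(m^2)$ such pairs per group, for a total of $\mathcal{O}(pm^2)$. The \emph{transitivity} clauses range over distinct triples $u,v,w$ on each side, giving $\mathcal{O}(m^3)$ clauses. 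Summing the three stages yields $\mathcal{O}(n)+\mathcal{O}(pm^2)+\mathcal{O}(m^3)=\mathcal{O}(n+pm^2+m^3)$, as claimed.

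The main obstacle --- and the only place where the bound is not immediate --- is the edge-count argument for the \emph{no swaps} clauses: a naive count would charge $\mathcal{O}(m^2)$ edges per group and hence $\mathcal{O}(pm^4)$ clauses. The crux is therefore to invoke the 3-fan-out pre-check to guarantee that each $\mathsf{BG}_i$ has only $\mathcal{O}(m)$ edges before any clauses are emitted, which is exactly the point flagged in the footnote to Theorem~\ref{theo:bipValidity}. The optimization that partitions values into disjoint co-occurrence sets only further reduces these counts and does not affect the worst-case bound, so it can be omitted from the proof.
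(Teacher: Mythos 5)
Your proposal is correct and follows essentially the same route as the paper's own proof: count $\mathcal{O}(m^2)$ variables and initial clauses, use the no-3-fan-out (and acyclicity) pre-check to bound each $\mathsf{BG}_i$ at $\mathcal{O}(m)$ edges so that the no-swaps clauses total $\mathcal{O}(pm^2)$, and charge $\mathcal{O}(m^3)$ for transitivity, plus $\mathcal{O}(n)$ for the initial pass. Your write-up is in fact slightly more explicit than the paper's about where the $n$ term and the graph-construction costs come from, but the decomposition and the key edge-count argument are identical.
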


\begin{proof}
The SAT representation has ${\mathcal O}(m^2)$ propositional variables. We initially 
also generate ${\mathcal O}(m^2)$ clauses, one for each pair of variables, $v_{i,j}$ and $v_{j,i}$. 
Generating the no-swap clauses for each $\sf{BG}$ takes ${\mathcal O}(m^2)$ time,  
as the number of edges in the bipartite graph derived from 
each partition group is ${\mathcal O}(m)$, since the initial $\sf{BG}$ is acyclic and 
does not contain any \emph{3-fan-out}s. This makes the runtime of 
this step (and the number of generated clauses) ${\mathcal O}(pm^2)$, where $p$ denotes 
the number of partition groups. Adding the transitivity 
clauses takes ${\mathcal O}(m^3)$ time. This makes the total cost of the reduction 
to the SAT problem ${\mathcal O}(n+pm^2+m^3)$. 
% % \blue{
% The co-occurrence of value pairs in each $\BG$ can also be stored at the same time, which can be used to derive the final order from the potential SAT answer without changing the runtime.
% % }
\end{proof}

% \input{10-response}
% \input{11-appendix}

%APPENDIX is optional.
% ****************** APPENDIX **************************************
% Example of an appendix; typically would start on a new page
%pagebreak

\end{document}